\newcommand*{\Scale}[2][4]{\scalebox{#1}{$#2$}}%
\definecolor{dkgreen}{rgb}{0,0.6,0}
\newcommand{\sss}[1]{\scriptscriptstyle {#1}}
\newcommand{\s}[1] {{\scriptscriptstyle #1}}
\newcommand{\tp}{^{\sss{T}}}
\newcommand{\innerproduct}[2]{\left\langle#1,#2\right\rangle}
\newcommand{\reals}[0]{\mathbb{R}}
\newcommand{\Rn}[1][n]{\mathbb{R}^{\sss{#1}}}
\newcommand{\Rpositive}{\mathbb{R}_{\sss{\ge 0}}}
\newcommand{\sk}[1]{\mathcal{S}\left({#1}\right)}
\newcommand{\invsk}[1]{\mathcal{S}^{\sss{-1}}\left({#1}\right)}
\newcommand{\OP}[1]{\Pi\left({#1}\right)}
\newtheorem{thm}{Theorem}
\newtheorem{prop}[thm]{Proposition}
\newtheorem{assum}[thm]{Assumption}
\newtheorem{prob}{Problem}
\newtheorem{rem}[thm]{Remark}
\newtheorem{defn}{Definition}
\newcommand{\amb}{\mathbf{a}}
\newcommand{\bmb}{\mathbf{b}}
\newcommand{\emb}{\mathbf{e}}
\newcommand{\fmb}{\mathbf{f}}
\newcommand{\hmb}{\mathbf{h}}
\newcommand{\mmb}{\mathbf{m}}
\newcommand{\nmb}{\mathbf{n}}
\newcommand{\qmb}{\mathbf{q}}
\newcommand{\umb}{\mathbf{u}}
\newcommand{\vmb}{\mathbf{v}}
\newcommand{\xmb}{\mathbf{x}}
\newcommand{\ymb}{\mathbf{y}}
\newcommand{\zmb}{\mathbf{z}}
\newcommand{\Hmb}{\mathbf{H}}
\newcommand{\Idmat}{\mathbf{I}} 							
\newcommand{\Rmat}{\ensuremath{\mathcal{R}}} 	
\newcommand{\zvec}{\mathbf{0}} 				
\newcommand{\SO}[1][2]{\mathbb{SO}(#1)}
\newcommand{\embi}[1]{\emb_{\sss{#1}}}
\newcommand{\xmbi}[1]{\xmb_{\sss{#1}}}
\newcommand{\xmbiDot}[1]{\dot{\xmb}_{\sss{#1}}}
\newcommand{\qmbi}[1]{\qmb_{\sss{#1}}}
\newcommand{\nmbi}[1]{\nmb_{\sss{#1}}}
\newcommand{\numb}{\bm{\nu}}
\newcommand{\numbi}[1]{\bm{\nu}_{\sss{#1}}}
\newcommand{\zmbi}[1]{\zmb_{\sss{#1}}}
\newcommand{\Rmati}[1]{\Rmat_{\sss{#1}}}
\newcommand{\Omi}[1]{\bm{\omega}_{\sss{#1}}}
\newcommand{\Nset}{\mathcal{N}}
\newcommand{\ConeOpen}{\mathcal{C}}
\newcommand{\ConeClosed}{\bar{\mathcal{C}}}
\newcommand{\Sn}[1]{\mathbb{S}^{\sss{#1}}}
\def\paperextended{0}
\newcommand{\mrtoq}{T_{\sss{r}}^{\sss{q}}}
\newcommand{\mrtoqi}{T_{\sss{r}}^{\sss{qi}}}
\newcommand{\mqtor}{T_{\sss{q}}^{\sss{r}}}
\begin{document}

	\title{A Common Framework for Attitude Synchronization of Unit Vectors in Networks with Switching Topology}
	\date{\today}
	\author{
		Pedro~O.~Pereira, Dimitris~Boskos and Dimos~V.~Dimarogonas
		\thanks{ %
			The authors are with the School of Electrical Engineering, KTH Royal Institute of Technology, SE-100 44, Stockholm, Sweden. %
			Email addresses: \texttt{\{ppereira, boskos, dimos\}@kth.se}. 
			This work was supported by the EU H2020 Research and Innovation Programme under GA No.644128 (AEROWORKS), the Swedish Research Council (VR), the Swedish Foundation for Strategic Research (SSF) and the KAW Foundation.
		}
	}
	
	\maketitle	

	\begin{abstract}
		In this paper, we study attitude synchronization for elements in the unit sphere of $\Rn[\sss{3}]$ and for elements in the $3D$ rotation group, for a network with switching topology. 
The agents' angular velocities are assumed to be the control inputs, and a switching control law for each agent is devised that guarantees synchronization, provided that all elements are initially contained in a given region, unknown to the network.
%
%
The control law is decentralized and it does not require a common orientation frame among all agents.
We refer to synchronization of unit vectors in $\Rn[3]$ as incomplete synchronization, and of $3D$ rotation matrices as complete synchronization.
Our main contribution lies in showing that these two problems can be analyzed under a common framework, where all elements' dynamics are transformed into unit vectors dynamics on a sphere of appropriate dimension.

	\end{abstract}
	
	\section{Introduction}
	Decentralized control in a multi-agent environment has been a topic of active research, with applications in large scale robotic systems. 
Attitude synchronization in satellite formations is one of the relevant applications~\cite{lawton2002synchronized,abdessameud2009attitude}, where the control goal is to guarantee that a network of fully actuated rigid bodies acquires a common attitude. 
Coordination of underwater vehicles in ocean exploration missions~\cite{leonard2007collective}, and of unmanned aerial vehicles in aerial exploration missions, may also be casted as attitude synchronization problems.

In the literature, attitude synchronization strategies for elements in the special orthogonal group are found in~\cite{hatanaka2012passivity,1656474, sarlette2009autonomous, dimarogonas2009leader, bai2008rigid, tron2012intrinsic, chung2013phase, bondhus2005leader, krogstad2006coordinated}, which focus on \emph{complete} attitude synchronization; 
and in~\cite{oh2014formation,olfati2006swarms, moshtagh2007distributed, paley2009stabilization, li2014unified, sarlette2008global, sarlette2009synchronization, dorfler2014synchronization, moreau2004stability, moreau2005stability,zhang2011general}, which focus on \emph{incomplete} attitude synchronization.
In this paper, we focus on both \emph{complete} and \emph{incomplete} attitude synchronization. 
We refer to \emph{incomplete} attitude synchronization when the agents are unit vectors in $\Rn[3]$, a space also called $\Sn{2}$; and we refer to \emph{complete} attitude synchronization, when the agents are $3D$ rotation matrices, a space also called $\SO[3]$.
Incomplete synchronization represents a relevant practical problem, when the goal among multiple agents is to share a common direction. 
In flocking, for example, moving along a common direction is a requirement.
Also, in a network of satellites, whose antennas are to point in a common direction, incomplete synchronization may be more important than complete.

In~\cite{1656474,sarlette2009autonomous,dimarogonas2009leader,bai2008rigid,chung2013phase,bondhus2005leader,krogstad2006coordinated}, state dependent control laws for torques are presented which guarantee synchronization for elements in $\SO[3]$, while in~\cite{pereira2015Synchronization,paley2009stabilization,li2014unified} state dependent control laws for torques are presented which guarantee synchronization for elements in $\Sn{2}$. 
In these works, all agents are dynamic and their angular velocities are part of the state of the system, rather than control inputs.
In this paper, however, we consider kinematic agents and we design control laws, for the agents' angular velocities, which are not exclusively state dependent, but are also time dependent, with the time dependency encapsulating the case of a switching network topology.

We note that, regarding synchronization in $\SO[3]$, relevant results are found in~\cite{thunberg2014distributed,igarashi2009passivity,dorfler2014synchronization,sepulchre2011consensus,sarlette2009consensus}.
%
%
%
In~\cite{sepulchre2011consensus,dorfler2014synchronization,sarlette2009consensus,lin2007state}, consensus on non-linear spaces is analyzed with the help of a common weak non-smooth Lyapunov function, i.e., a  Lyapunov function which is non-increasing along solutions.
Also, in~\cite{thunberg2014distributed}, control laws which guarantee synchronization under a switching topology are presented, under the hypothesis of a dwell time between consecutive switches.

In our proposed framework, we relax the assumption of a dwell time by providing conditions for synchronization under average dwell time.
Our approach is based on the construction of a common weak non-smooth Lyapunov function for analyzing synchronization in $\Sn{n}$.
In order to handle the non-smoothness of the proposed Lyapunov function, we present an invariance-like result (see also~\cite{hespanha2004uniform, bacciotti2005invariance, mancilla2006extension, fischer2013lasalle} for invariance like theorems for switched systems).
We propose control laws for angular velocities of unit vectors in $\Rn[3]$ and $3D$ rotation matrices that guarantee synchronization for a network of agents with switching topology.
The control laws devised for unit vectors and rotation matrices achieve different goals, and differ in two aspects worth emphasizing. First, controlling rotation matrices requires more measurements when compared with controlling unit vectors; secondly, while controlling rotation matrices requires full actuation, i.e., all body components of the angular velocity need to be controllable, controlling unit vectors does not.
Our main contribution compared to the aforementioned literature lies in analyzing both problems under a common framework, in order to allow for a unified stability analysis using the same common weak Lyapunov function. 
Particularly both problems are transformed into synchronization problems in $\Sn{m}$ for an appropriate $m \in \mathbb{N}$.
Since rotation matrices can be parametrized by unit quaternions~\cite{field1984spacecraft}, which are unit vectors in $\Rn[\sss{4}]$, these are chosen for the analysis of the proposed control law.
We also note that consensus in $\Rn[\sss{n}]$ can be casted as a synchronization problem in $\Sn{n}$.
%
%
We note that under our framework, we do not require a dwell time between consecutive switches. 
A preliminary version of this work was presented at the 2016 IEEE Conference on Decision and Control~\cite{pereira2016CDCSynchronization}. 
With respect to this preliminary version, this paper provides significant additional results. 
In particular,  more details on the derivation of the main theorems and propositions are presented; additional figures illustrating concepts and results are provided; and supplementary simulations are presented which further illustrate the theoretical results.

The remainder of this paper is structured as follows. 
In Sections~\ref{sec:Notation} and~\ref{sec:Preliminaries}, we present the notation used in this manuscript and conditions on vector fields that guarantee convergence to the consensus set, respectively. 
In Section~\ref{sec:Synchronization}, we describe the common framework for analysis of both synchronization in $\Sn{2}$ and $\SO[3]$.
In Sections~\ref{sec:SynchronizationInS2} and \ref{sec:SynchronizationInSO3}, the control laws for synchronization in $\SO[3]$ and $\Sn{2}$ are presented, respectively, and the agents dynamics are transformed into the common framework.
In Section~\ref{sec:Analysis}, asymptotic synchronization is established for the  common framework vector field.
In Section~\ref{sec:Simulations}, illustrative simulations are presented.
	
	\section{Notation}
	\label{sec:Notation}
%
In what follows, let $n$ and $m$ be two integers.
$\innerproduct{\cdot}{\cdot}: \Rn[n] \times \Rn[n] \mapsto \Rn[]$ denotes the inner product in $\Rn[n]$.
$\Rn[\sss{m \times n}]$ denotes the set of linear mappings from $\Rn[n]$ to $\Rn[m]$, and with $\Idmat_{\sss{n}} \in \Rn[\sss{n \times n}]$ as the identity matrix and $\zvec_{\sss{m \times n}} \in \Rn[\sss{m \times n}]$ as the zero matrix.
%
%
%
We denote by $M_{\sss{3,3}}$ and $\bar{M}_{\sss{3,3}}$ the space of symmetric matrices and antisymmetric matrices in $\Rn[\sss{3 \times 3}]$, respectively. 
Given $\amb,\bmb \in \Rn[3]$, the matrix $\sk{\amb} \in \bar{M}_{\sss{3,3}}$ is the skew-symmetric matrix that satisfies $\sk{\amb} \, \bmb = \amb \times \bmb$;
moreover, given any $A \in \bar{M}_{\sss{3,3}}$, $\mathcal{S}^{\sss{-1}}(A) \in \Rn[3]$ denotes the vector for which $\sk{\invsk{A}} = A$.
We denote by $\Sn{n} = \{\xmb \in \Rn[\sss{n+1}]: \innerproduct{\xmb}{\xmb} =1 \}$ the set of unit vectors in $\Rn[\sss{n+1}]$.
The map $\Pi: \Sn{n} \ni \xmb \mapsto \OP{\xmb} \in \reals^{\sss{(n+1)\times (n+1)}}$, yields a matrix that satisfies $\OP{\xmb} \ymb = \ymb - \innerproduct{\ymb}{\xmb} \xmb$ for any $\ymb \in \Rn[3]$, and it represents the orthogonal projection operator onto the subspace orthogonal to $\xmb \in \Sn{n}$.	
$\embi{1}, \cdots, \embi{n} \in \Sn{n-1} \subset \Rn[n]$ denote the canonical basis vectors in $\Rn[n]$. 
%
%
Given $r \ge 0$, we denote $\mathcal{B}(r) = \{ \xmb \in \Rn[n] : \|\xmb\| < r \}$ and $\bar{\mathcal{B}}(r) = \{ \xmb \in \Rn[n] : \|\xmb\| \le r \}$ as the open and closed balls of radius $r$ and centered around $\zvec$, respectively.
%
%
Given a set $\mathcal{H}$, we use the notation $|\mathcal{H}|$ for the cardinality of $\mathcal{H}$.
Given a function $f: A \ni a \mapsto f(a) \in B$, for some arbitrary domain $A$ and codomain $B$: 
if $f$ is differentiable $df$ denotes its derivative; 
and given any $N \in \mathbb{N}$, we denote
$
	f^{\sss{N}}: 
	A^{\sss{N}} \ni a := (a_{\sss{1}},\cdots,a_{\sss{N}}) 
	\mapsto 
	f^{\sss{N}}(a) := (f(a_{\sss{1}}),\cdots,f(a_{\sss{N}}))\in B^{\sss{N}}
$.
Finally, given a manifold $M$, $T_{\sss{\mmb}} M$ denotes the tangent space to $M$ at $\mmb \in M$.

	\section{Preliminaries}
	\label{sec:Preliminaries}
	We consider a network of $N \in \mathbb{N}$ agents, with $n \in \mathbb{N}$ the dimension of the space which the agents belong to.
We associate bijectively the (finite) set of network graphs to the set $\{1,2,\cdots\} =: \mathcal{P} \subset \mathbb{N}$, and consider an average dwell time switching signal $\sigma: \Rn[{}]_{\sss{\ge 0}} \mapsto \mathcal{P} $~\cite{mancilla2006extension}.
We then consider a trajectory
\begin{align}
	 \Rpositive \ni t \mapsto \xmb(t) := (\xmbi{1}(t), \cdots , \xmbi{N}(t)) \in (\Rn[n])^{\sss{N}}
	\label{eq:Solution}
\end{align}
which satisfies 
\begin{align}
	\dot{\xmb}(t)
	=
	\fmb_{\sss{\sigma(t)}}(\xmb(t)) ,
	\forall t \in \Rpositive,
	\label{eq:DifferentiaEquation}
\end{align}
and where, for any $p \in \mathcal{P}$,
\begin{align}
	\fmb_{\sss{p}} 
	:=
	(
		\fmb_{\sss{1,p}},
		\cdots,
		\fmb_{\sss{N,p}}
	)
	:
	(\Rn[n])^{\sss{N}}
	\mapsto
	(\Rn[n])^{\sss{N}}.
	\label{eq:VectorField}
\end{align}
We now present conditions on the vector fields $\{\fmb_{\sss{p}}\}_{\sss{p \in \mathcal{P}}}$ in~\eqref{eq:VectorField}  which guarantee that $\xmb$ in~\eqref{eq:Solution} converges to the consensus set $\mathcal{C} = \{(\xmbi{1},\cdots,\xmbi{N}) \in (\Rn[n])^{\sss{N}}: \xmbi{1} = \cdots = \xmbi{N} \}$.
In later sections, when studying synchronization in $\Sn{2}$ and $\SO[3]$, given the proposed control laws, we verify that the dynamics of all agents satisfy these conditions, allowing us to refer to the results in this section. 
%

Given $\xmb = (\xmbi{1},\cdots,\xmbi{N}) \in (\Rn[n])^{\sss{N}}$, denote 
\begin{align}
	\mathcal{H}(\xmb) := \arg \max_{j \in \Nset}( \|\xmb_{\sss{j}}\|) \subseteq \Nset.
	\label{eq:SetH}
\end{align}
Then, given $\xmb = (\xmbi{1},\cdots,\xmbi{N}) \in (\Rn[n])^{\sss{N}}$ and $i \in \mathcal{H}(\xmb)$, $\|\xmbi{i}\|$ is larger than or equal to $\|\xmbi{j}\|$ for all $j \in \Nset$.
Given $\xmb = (\xmbi{1},\cdots,\xmbi{N}) \in (\Rn[n])^{\sss{N}}$ assume that
\begin{align}
			\underset{p \in \mathcal{P} }{\max} 
			\,
			\innerproduct{\xmbi{i}}{\fmb_{\sss{i,p}}(\xmb)}
			\le 0,
			\text{ for all }
			i \in \mathcal{H}(\xmb),
			\label{eq:Deltaxmax}
\end{align}
which quantifies an upper bound on the time derivative of $\frac{1}{2}\|\xmbi{i}\|^2$ along a solution that satisfies~\eqref{eq:DifferentiaEquation}, since $\frac{d}{dt} \frac{1}{2}\|\xmbi{i}(t)\|^2 = \innerproduct{\xmbi{i}(t)} {\fmb_{\sss{i,\sigma(t)}}(\xmb(t))}$, for all time instants $t$ where the derivative is well defined.
On the other hand, given $\xmb \not\in \mathcal{C}$, assume that~\eqref{eq:Deltaxmax} holds and that
\begin{align}
	\forall p \in \mathcal{P} \exists i \in \mathcal{H}(\xmb) : 
	\innerproduct{\xmbi{i}}{\fmb_{\sss{i,p}}(\xmb)}  < 0.
	\label{eq:ConditionInvariance}
\end{align}
Condition~\eqref{eq:ConditionInvariance} implies that, for every switching signal $p \in \mathcal{P}$, and within the set $\{\frac{1}{2}\|\xmbi{i}\|^2\}_{\sss{i \in \mathcal{H}(\xmb)}}$, one can always find an element whose time derivative along a solution of~\eqref{eq:DifferentiaEquation}, if well defined, is negative.
Conditions~\eqref{eq:Deltaxmax} and~\eqref{eq:ConditionInvariance} are central for the proof of the following theorem, which establishes asymptotic convergence of~\eqref{eq:Solution} to $\mathcal{C}$ and constitutes the main result of this section. 
\begin{thm}
	\label{thm:MainTheorem}
	Consider~\eqref{eq:Solution} and assume that for certain $r>0$ and all $\xmb  = (\xmbi{1},\cdots,\xmbi{N})\in \mathcal{B}(r)^{\sss{N}}$ the following hold:
	\begin{enumerate}
		\item when $\xmb \not\in \mathcal{C}$,
		\begin{enumerate}
			\item %
			$\underset{p \in \mathcal{P} }{\max}  
			\innerproduct{\xmbi{i}}{\fmb_{\sss{i,p}}(\xmb)} \le 0$, for all $i \in \mathcal{H}(\xmb)$ ,
			\item %
			$\forall p \in \mathcal{P} 
			\exists i \in \mathcal{H}(\xmb) : 
			\innerproduct{\xmbi{i}}{\fmb_{\sss{i,p}}(\xmb)} < 0$,
		\end{enumerate}	
		\item when $\xmb \in \mathcal{C} $, %
		$\innerproduct{\xmbi{i}}{\fmb_{\sss{i,p}}(\xmb)} = \zvec$ for all $i\in \Nset $ and $p \in \mathcal{P} $.
	\end{enumerate}
	Then, the set  $\bar{\mathcal{B}}(r_{\sss{0}})^{\sss{N}}$, with  $r_{\sss{0}} := \max_{\sss{i \in \Nset}} \| \xmbi{i}(0)\|<r$, is positively invariant.
	Moreover, for each initial condition $\xmb(0)\in \mathcal{B}(r)^{\sss{N}}$, and given $V : (\Rn[n])^{\sss{N}} \ni \xmb = (\xmbi{1},\cdots,\xmbi{N}) \mapsto V(\xmb) := \max_{\sss{i \in \Nset}} \frac{1}{2} \|\xmbi{i}\|^{\sss{2}} \in \Rn[]_{\sss{\ge 0}}$, there exists a constant $V^{\sss\infty} \in [0 , V(\xmb(0))]$ such that $\lim_{\sss{t \rightarrow \infty}} V(\xmb(t)) = V^{\sss{\infty}}$.
	Finally,~\eqref{eq:Solution} converges asymptotically to $V^{\sss{-1}}(V^{\sss{\infty}}) \cap \mathcal{C} \subset  \bar{\mathcal{B}}(r_{\sss{0}})^{\sss{N}} \cap \mathcal{C}$.
\end{thm}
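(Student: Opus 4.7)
The plan is to establish the theorem in two stages: (I) monotonicity of $V$ along the trajectory, which yields positive invariance of $\bar{\mathcal{B}}(r_{\sss{0}})^{\sss{N}}$ and the existence of the limit $V^{\sss{\infty}}\in[0,V(\xmb(0))]$ by monotone convergence; (II) identification of the $\omega$-limit set as a subset of $V^{\sss{-1}}(V^{\sss{\infty}})\cap\mathcal{C}$ via a LaSalle-like invariance argument tailored to the switched non-smooth Lyapunov function $V$.

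For stage (I), the key observation is that $V$ is the pointwise maximum of the smooth maps $V_{\sss{i}}(\xmb)=\frac{1}{2}\|\xmbi{i}\|^{\sss{2}}$, so by the standard envelope formula for the upper-right Dini derivative,
\begin{align}
D^+V(\xmb(t))=\max_{i\in\mathcal{H}(\xmb(t))}\innerproduct{\xmbi{i}(t)}{\fmb_{\sss{i,\sigma(t)}}(\xmb(t))}.
\end{align}
Hypothesis~(1a) (when $\xmb(t)\notin\mathcal{C}$) and hypothesis~(2) (when $\xmb(t)\in\mathcal{C}$) together make this quantity nonpositive at every $t$, so $V(\xmb(\cdot))$ is nonincreasing. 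Since $\bar{\mathcal{B}}(r_{\sss{0}})^{\sss{N}}=\{\xmb\in(\Rn[n])^{\sss{N}}:V(\xmb)\le\frac{1}{2}r_{\sss{0}}^{\sss{2}}\}$ and $V(\xmb(0))=\frac{1}{2}r_{\sss{0}}^{\sss{2}}$, positive invariance follows, and the lower bound $V\ge 0$ combined with monotonicity yields $V(\xmb(t))\to V^{\sss{\infty}}\in[0,V(\xmb(0))]$.

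For stage (II), I would argue by contradiction, invoking the invariance-like theorem for switched systems with average dwell time and non-smooth common Lyapunov function (developed in the preliminary material, in the spirit of~\cite{mancilla2006extension,hespanha2004uniform,bacciotti2005invariance,fischer2013lasalle}). Positive invariance confines the trajectory to the compact set $\bar{\mathcal{B}}(r_{\sss{0}})^{\sss{N}}$, so its $\omega$-limit set is non-empty. Pick an $\omega$-limit point $\bar\xmb$; continuity of $V$ gives $V(\bar\xmb)=V^{\sss{\infty}}$. Assuming $\bar\xmb\notin\mathcal{C}$ for contradiction, hypothesis~(1b) at $\bar\xmb$ together with finiteness of $\mathcal{P}$ produces, for every $p\in\mathcal{P}$, an index $i_{\sss{p}}\in\mathcal{H}(\bar\xmb)$ with $-\innerproduct{\bar{\xmb}_{\sss{i_{\sss{p}}}}}{\fmb_{\sss{i_{\sss{p}},p}}(\bar\xmb)}\ge\alpha$ for a uniform $\alpha>0$. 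Continuity of the vector fields and of the norm-ordering then furnish a neighborhood $U$ of $\bar\xmb$ on which (a) the same $i_{\sss{p}}$ still satisfies $-\innerproduct{\xmbi{i_{\sss{p}}}}{\fmb_{\sss{i_{\sss{p}},p}}(\xmb)}\ge\alpha/2$ for all $\xmb\in U$, and (b) $\mathcal{H}(\xmb)\subseteq\mathcal{H}(\bar\xmb)$. The invariance-like theorem then turns this uniform ``some-component strict decrease'' property, together with the average-dwell-time bound on $\sigma$, into a uniform positive drop of $V$ on each sufficiently long passage of the trajectory through $U$. Since $\bar\xmb$ being an $\omega$-limit point entails infinitely many such passages, $V(\xmb(t))$ would be driven strictly below $V^{\sss{\infty}}$, contradicting $V(\xmb(t))\to V^{\sss{\infty}}$. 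Hence $\bar\xmb\in\mathcal{C}$, the $\omega$-limit set lies in $V^{\sss{-1}}(V^{\sss{\infty}})\cap\mathcal{C}\subseteq\bar{\mathcal{B}}(r_{\sss{0}})^{\sss{N}}\cap\mathcal{C}$, and the usual convergence of a bounded trajectory to its $\omega$-limit set closes the argument.

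The hard part is the mismatch between $\mathcal{H}(\bar\xmb)$, where~(1b) guarantees strict decrease, and $\mathcal{H}(\xmb(t))$, which actually governs $D^+V(\xmb(t))$: the index $i_{\sss{p}}$ singled out by~(1b) at $\bar\xmb$ need not lie in $\mathcal{H}(\xmb(t))$ for $\xmb(t)$ merely close to $\bar\xmb$, so $D^+V$ need not be pointwise strictly negative even though some individual $V_{\sss{i_{\sss{p}}}}$ is decreasing at rate at least $\alpha/2$. This is precisely the obstruction that rules out a textbook LaSalle argument here, and is why the non-smooth, switched invariance-like result from the preliminaries---together with its handling of mode switches that are arbitrarily close in time under only average dwell time---is the indispensable technical device.
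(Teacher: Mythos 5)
Your stage (I) is correct and coincides with the paper's: the Dini/Clarke derivative of the max is $\max_{i\in\mathcal{H}(\xmb(t))}\innerproduct{\xmbi{i}(t)}{\fmb_{\sss{i,\sigma(t)}}(\xmb(t))}\le 0$ under hypotheses (1a) and (2), giving monotonicity of $V$ along the trajectory, positive invariance of the sublevel set $\bar{\mathcal{B}}(r_{\sss{0}})^{\sss{N}}$, and existence of $V^{\sss{\infty}}$. The problem is in stage (II), at exactly the point you yourself single out as ``the hard part.'' You reduce the whole argument to the claim that an ``invariance-like theorem \ldots from the preliminaries'' converts the property ``for every $p$ there is $i_{\sss{p}}\in\mathcal{H}(\bar\xmb)$ with $\innerproduct{\xmbi{i_{\sss{p}}}}{\fmb_{\sss{i_{\sss{p}},p}}(\xmb)}\le-\alpha/2$ on $U$'' into a uniform positive drop of $V$ per passage through $U$. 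No such result exists in the paper's preliminaries (the only theorem stated there is the one you are proving), and the cited switched-system invariance principles do not have conclusions of this form: they assert convergence to a largest weakly invariant subset of a zero-derivative set, and leave to the user precisely the problem-specific step you are delegating to them. As written, the indispensable step of your proof is outsourced to a nonexistent black box; that is a genuine gap, not a citation.

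The gap is fillable, but the filling is the missing idea. Writing $V_{\sss{i}}(\xmb)=\frac{1}{2}\|\xmbi{i}\|^{\sss{2}}$, the key point is that $i_{\sss{p}}\in\mathcal{H}(\bar\xmb)$ forces $V_{\sss{i_{\sss{p}}}}(\bar\xmb)=V(\bar\xmb)$, hence on a small enough $U$ one has $V(\xmb)-\epsilon(U)\le V_{\sss{i_{\sss{p}}}}(\xmb)\le V(\xmb)$ with $\epsilon(U)\to 0$ as $U$ shrinks; the average dwell time extracts from any passage of duration $\ge\delta$ a constant-mode subinterval of length $\ell(\delta)>0$, on which $V_{\sss{i_{\sss{p}}}}$ falls by at least $\alpha\ell/2$, whence $V(\text{end})\le V_{\sss{i_{\sss{p}}}}(\text{end})+\epsilon(U)\le V(\text{start})-\alpha\ell/2+\epsilon(U)$, a definite drop once $\epsilon(U)<\alpha\ell/4$; and ``infinitely many passages of duration $\ge\delta$'' does not follow from $\bar\xmb$ being an $\omega$-limit point alone---it requires the transit-time lower bound coming from boundedness of the continuous vector fields on the compact invariant set, applied to a smaller neighborhood compactly contained in $U$. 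None of these three steps appears in your proposal, and the second one is precisely the resolution of the $\mathcal{H}(\bar\xmb)$-versus-$\mathcal{H}(\xmb(t))$ mismatch you flag. For comparison, the paper takes a different and more economical route: it computes the Clarke generalized derivative $V_{\sss{p}}^{\sss{\circ}}$, shows that the closure of its zero set is contained in $\mathcal{C}\cup B_{\sss{p}}$, where $B_{\sss{p}}$ is the set on which the active inner products $\innerproduct{\xmbi{i}}{\fmb_{\sss{i,p}}(\xmb)}$, $i\in\mathcal{H}(\xmb)$, are not all equal, observes that along any solution keeping $V$ constant these inner products all vanish (so the solution avoids $B_{\sss{p}}$ and, by condition 1b), must lie in $\mathcal{C}$), and then invokes a hybrid-systems invariance principle for the resulting mode-independent largest invariant set.
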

The proof is found in Appendix~\ref{app:AppendixProofConvergenceToC}.
	
	\section{Synchronization}
	\label{sec:Synchronization}
	In the next subsections, we study synchronization of agents in $\Sn{2}$ and $\SO[3]$.
More specifically, we first present feedback control laws for the angular velocities of the agents, with which we determine the closed loop dynamics.
Afterwards, by means of appropriate transformations, those dynamics are rewritten in a common form that allows us to study synchronization in $\Sn{2}$ and $\SO[3]$ under a common framework.
Additionally, in Appendix~\ref{app:ConsensusRn}, we also show that consensus in $\Rn[n]$ can be casted as a synchronization problem in $\Sn{n}$, for any $n \in \mathbb{N}$.
\begin{defn}
	\label{def:HyperConeUnitVector2}
	Given $n \in \mathbb{N}$, $\alpha \in [0,\pi]$ and $\bar{\bm{\nu}} \in \Sn{n}$, the open $\alpha$-cone $\ConeOpen(\alpha,\bar{\bm{\nu}})$ is defined as $\ConeOpen(\alpha,\bar{\bm{\nu}}) := \{\bm{\nu} \in \Sn{n}: \innerproduct{\bar{\bm{\nu}}} {\bm{\nu}} > \cos(\alpha) \}$, representing the set of unit vectors that are $\alpha$ close to the unit vector $\bar{\bm{\nu}}$.
	Similarly, we define the closed $\alpha$-cone $\ConeClosed(\alpha,\bar{\bm{\nu}}) := \{\bm{\nu} \in \Sn{n}: \innerproduct{\bar{\bm{\nu}}} {\bm{\nu}} \ge \cos(\alpha) \}$.
\end{defn}
In Fig.~\ref{fig:Cone30}, we illustrate a closed cone, for $n = 3$, where three unit vectors $\numbi{1}$, $\numbi{2}$ and $\numbi{3}$ are contained in the closed $30^{\circ}$-cone formed by the unit vector $\bar{\numb}$.
\begin{figure}
	\centering
	\includegraphics[clip=true,trim=0cm 0cm 0cm 0cm,width=0.45\textwidth]
	{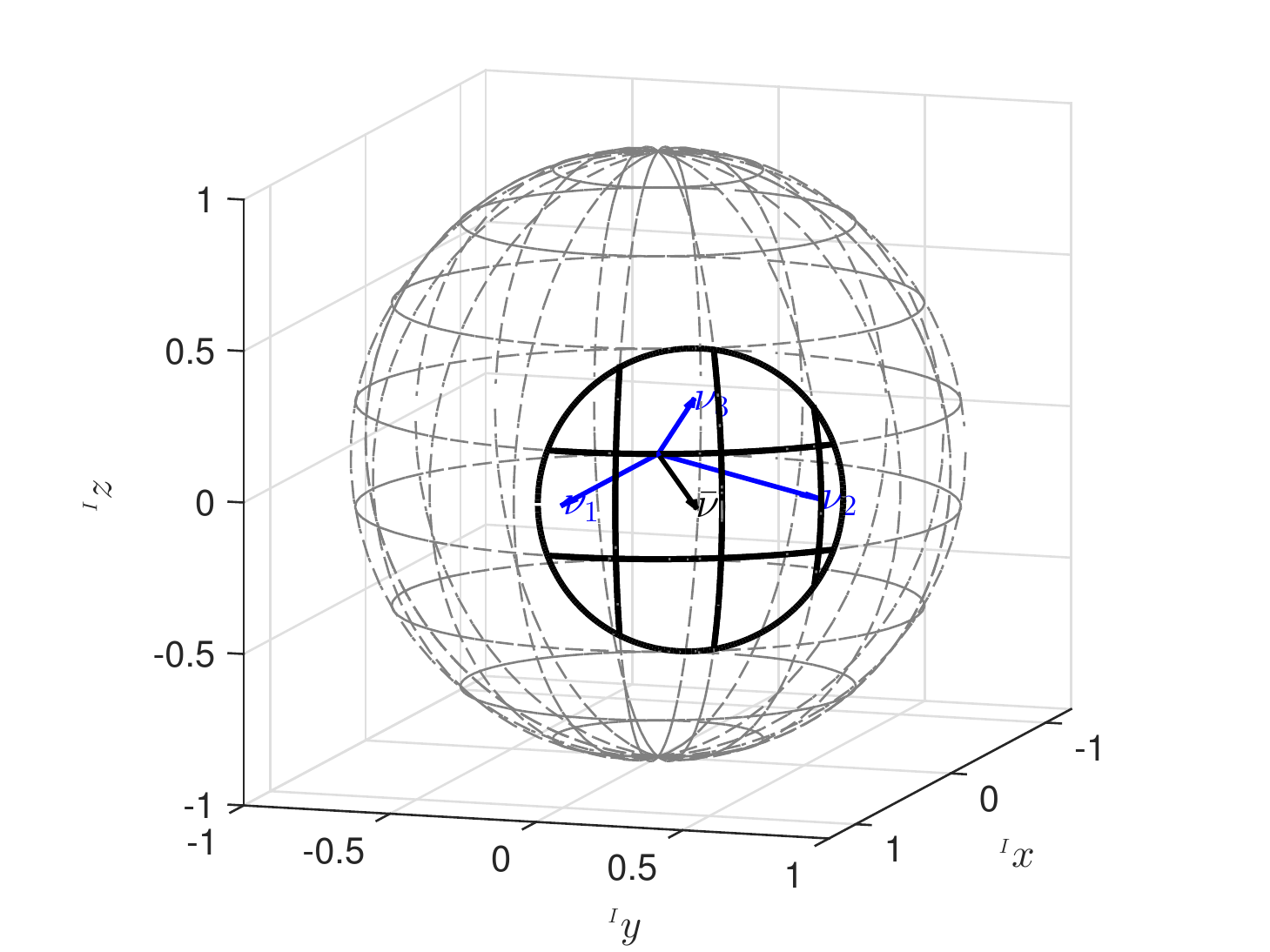}
	\caption{Three unit vectors, $\numbi{1}$, $\numbi{2}$ and $\numbi{3}$, in $\Rn[3]$ contained in closed $30^{\circ}$-cone formed by the unit vector $\bar{\numb}$.}
	\label{fig:Cone30}
\end{figure}

%
Consider a group of unit vectors $\numb = (\bm{\nu}_{\sss{1}} , \cdots , \bm{\nu}_{\sss{N}}) \in (\Sn{n})^{\sss{N}}$, for some $N,n \in \mathbb{N}$. 
We say that $\numb$ belongs to an open (closed) $\alpha$-cone, for some $\alpha \in [0,\pi]$, if $\exists \bar{\bm{\nu}} \in \Sn{n} : \bm{\nu} \in \ConeOpen(\alpha,\bar{\bm{\nu}})^{\sss{N}} (\ConeClosed(\alpha,\bar{\bm{\nu}})^{\sss{N}})$.
We say that $\numb$ is synchronized if $\bm{\nu}_{\sss{1}} = \cdots = \bm{\nu}_{\sss{N}}$. 
%
We show later that, given the proposed control laws, synchronization of a group of unit vectors takes place asymptotically if the group of unit vectors is initially contained in an open $\alpha^{\sss{\star}}$-cone, where $\alpha^{\sss{\star}} = \frac{\pi}{2}$ for synchronization in $\Sn{2}$ and consensus in $\Rn[n]$; and where $\alpha^{\sss{\star}} = \frac{\pi}{4}$ for synchronization in $\SO[3]$.

In the next subsections, we always consider a group of $N$ agents, indexed by the set $\Nset := \{ 1, \cdots, N\}$, operating in either $\Sn{2}$, $\SO[3]$ or $\Rn[n]$.
The agents' network graph is modeled as a time varying digraph, $\Rn[]_{\sss{+}} \ni t \mapsto \mathcal{G}(\sigma(t)) = \{\Nset, \mathcal{E}(\sigma(t))\}$, with $\sigma: \Rn[{}]_{\sss{\ge 0}} \mapsto \mathcal{P} $ as the switching signal, and with $\mathcal{G}(p)$ and $\mathcal{E}(p)$ as the graph and edges' set corresponding to the switching signal $p \in \mathcal{P}$ (where $|\mathcal{P}| \le 2^{\sss{N(N-1)}}$). 
We also denote $\Nset_{\sss{i}}(p) \subset \Nset$ as the neighbor set of agent $i \in \Nset$ corresponding to the switching signal $p \in \mathcal{P}$; 
and, for convenience, we also denote $\{ j_{\sss{1}}, \cdots, j_{\sss{| \Nset_{\sss{i}}(q)|}} \} \equiv \Nset_{\sss{i}}(q)$.

In order to perform analysis under a common framework, we transform all problems' dynamics into a standard form, which we describe next.
Given $n \in \mathbb{N}$, we denote $\bm{\nu} = (\bm{\nu}_{\sss{1}} , \cdots , \bm{\nu}_{\sss{N}}) : \Rn[]_{\sss{\ge 0}} \mapsto (\Sn{n})^{\sss{N}}$ as the state of a group of unit vectors in $\Sn{n}$, which evolves according to the dynamics
\begin{align}
	\dot{\bm{\nu}}(t)
	=
	\tilde{\fmb}_{\sss{\sigma(t)}}(\bm{\nu}(t))
	=
	\begin{bmatrix}
		\tilde{\fmb}_{\sss{1,\sigma(t)}}(\bm{\nu}(t))
		\\
		\vdots
		\\
		\tilde{\fmb}_{\sss{N,\sigma(t)}}(\bm{\nu}(t))
	\end{bmatrix},
	\numb(0) \in (\Sn{n})^{\sss{N}}
	\label{eq:GeneralForm}
\end{align}
where $\tilde{\fmb}_{\sss{i,\sigma(t)}} : (\Sn{n})^{\sss{N}} \ni \bm{\nu}:=(\bm{\nu}_{\sss{1}},\cdots,\bm{\nu}_{\sss{N}}) \mapsto \tilde{\fmb}_{\sss{i,\sigma(t)}}(\bm{\nu}) \in T_{\sss{\bm{\nu}_{\sss{i}}}} \Sn{n} \subset \Rn[n+1]$ is defined as
\begin{align}
	\tilde{\fmb}_{\sss{i,\sigma(t)}}(\bm{\nu})
	:=
	\sum_{j \in \Nset_{\sss{i}}(\sigma(t))} 
	\tilde{w}_{\sss{ij}}(\bm{\nu}_{\sss{i}},\bm{\nu}_{\sss{j}})
	\OP{\numbi{i}}\numbi{j},
	\label{eq:GeneralForm2}
\end{align}
for all $i\in\Nset$; i.e., $\dot{\bm{\nu}}_{\sss{i}}(t) = \tilde{\fmb}_{\sss{i,\sigma(t)}}(\bm{\nu}(t))$.
Notice that indeed $\langle \numbi{i}, \tilde{\fmb}_{\sss{i,p}}(\bm{\nu}) \rangle = 0$ for all $ i \in \Nset$, $p \in \mathcal{P}$ and $\bm{\nu} = (\bm{\nu}_{\sss{1}},\cdots,\bm{\nu}_{\sss{N}}) \in (\Sn{n})^{\sss{N}}$, which implies that the set $(\Sn{n})^{\sss{N}}$ is positively invariant with respect to~\eqref{eq:GeneralForm}.

The system~\eqref{eq:GeneralForm}-\eqref{eq:GeneralForm2} is the standard form all problems are transformed into:
for synchronization in $\Sn{2}$, $\bm{\nu} :\Rn[]_{\sss{\ge 0}} \mapsto  (\Sn{2})^{\sss{N}}$; 
for synchronization in $\SO[3]$, $\bm{\nu} :\Rn[]_{\sss{\ge 0}} \mapsto (\mathcal{S}^{\sss{3}})^{\sss{N}}$; 
and for consensus in $\Rn[n]$, $\bm{\nu} :\Rn[]_{\sss{\ge 0}} \mapsto (\Sn{n})^{\sss{N}}$. 

The functions $\tilde{w}_{\sss{ij}}: \ConeOpen(\alpha,\bar{\bm{\nu}})^{\sss{2}}  \mapsto \Rn[{}]_{\sss{\ge 0}}$ in~\eqref{eq:GeneralForm2} are continuous weight functions, for some $\alpha \in [\frac{\pi}{2},\pi]$ and $\bar{\bm{\nu}} \in \Sn{n}$.
Thus, given $(\bm{\nu}_{\sss{i}},\bm{\nu}_{\sss{j}}) \in \ConeOpen(\alpha,\bar{\bm{\nu}})^{\sss{2}}$, $\tilde{w}_{\sss{ij}}(\bm{\nu}_{\sss{i}},\bm{\nu}_{\sss{j}})$ is the weight agent $i$ assigns to the deviation between itself and its neighbor $j$, for all $i,j \in \Nset$ (and where we emphasize that the agents are within the same cone). 
All functions $\tilde{w}_{\sss{ij}}$ are assumed to satisfy the following condition,
\begin{align}
	\hspace{-0.3cm}
	\tilde{w}_{\sss{ij}}(\bm{\nu}_{\sss{i}},\bm{\nu}_{\sss{j}}) > 0 
	\forall 
	(\bm{\nu}_{\sss{i}},\bm{\nu}_{\sss{j}}) \in \ConeOpen(\alpha,\bar{\bm{\nu}})^{\sss{2}} 
	\text{ with }
	\innerproduct{\bm{\nu}_{\sss{i}}}{\bm{\nu}_{\sss{j}}} \ne 1
	.
	\label{eq:gTildeCase}
\end{align}
Thus, from continuity, it follows that the weight between two neighbors is zero if and only if they are synchronized, though the weight may be arbitrarily small when the neighbors are arbitrarily close to each other or when the neighbors are close the boundaries of the domain of the weight functions.

The dependency of the dynamics~\eqref{eq:GeneralForm}-\eqref{eq:GeneralForm2} on time comes from the time varying network graph, and more specifically, the time varying neighbor set of each agent, as specified in~\eqref{eq:GeneralForm2}.

Although the results of the paper remain valid under other types of switching (see proof of Theorem~\ref{thm:MainTheorem}), in practical cases the switching instants of each agent's control law cannot accumulate to a certain time value.
In order to formulate this observation as an assumption, we adopt Definition 2.3. in~\cite{mancilla2006extension}, and say that the switching signal $\sigma$ has an average dwell-time $\tau_{\sss{D}} > 0$ and a chatter bound $N_{\sss{0}} \in \mathbb{N}$ if the number of switching times of $\sigma$ in any open finite interval $(t_{\sss{1}},t_{\sss{2}}) \subset \Rpositive$ is upper bounded by $N_{\sss{0}} + \frac{t_{\sss{2}} - t_{\sss{1}}}{\tau_{\sss{D}}}$.
\begin{assum}
	\label{assump:SwitchesAgents}
	For each agent $i \in \Nset$, the time-varying neighbor set $\Rn[]_{\sss{+}} \ni t \mapsto \Nset_{\sss{i}}(\sigma(t)) \subset \Nset$ switches with an average dwell-time $\tau_{\sss{D}}^{\sss{i}} >0 $ and a chatter bound $N_{\sss{0}}^{\sss{i}} \in \mathbb{N} $.
\end{assum}
As explained in more detail in the next sections, each agent~$i \in \Nset$ is in charge of providing the input that follows from composing the proposed output feedback control laws with the measurements made at each time instant.
Thus, requiring an agent's neighbor set to switch with an average dwell time guarantees that the agent's input does not experience infinite many discontinuities in any time interval of finite length. 
In fact, if Assumption~\ref{assump:SwitchesAgents} is satisfied, then $\sigma$ has an average dwell time, which allows us to invoke the results from Section~\ref{sec:Preliminaries}.
\begin{prop}
	If each agent's $i\in \Nset$ neighbor set  $\Rn[]_{\sss{+}} \ni t \mapsto \Nset_{\sss{i}}(\sigma(t))$ switches with an average dwell-time $\tau_{\sss{D}}^{\sss{i}}$ and chatter bound $N_{\sss{0}}^{\sss{i}}$, then the network dynamics~\eqref{eq:GeneralForm} has a switching signal with average dwell time $\tau_{\sss{D}} = \frac{1}{N} \min_{\sss{i \in \Nset}} \tau_{\sss{D}}^{\sss{i}}$ and chatter bound $N_{\sss{0}} = N \max_{\sss{i \in \Nset}}N_{\sss{0}}^{\sss{i}}$.
\end{prop}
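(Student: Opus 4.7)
The plan is to count switches of the network-level signal $\sigma$ by bounding them in terms of the switches of the individual neighbor-set signals $t\mapsto \Nset_{\sss{i}}(\sigma(t))$. First I would observe that the network graph $\mathcal{G}(\sigma(t)) = \{\Nset,\mathcal{E}(\sigma(t))\}$ is completely determined by the collection of the $N$ neighbor sets $\{\Nset_{\sss{i}}(\sigma(t))\}_{\sss{i\in\Nset}}$. Consequently, a switching time of $\sigma$ can occur only when at least one of these neighbor sets changes; equivalently, if none of the individual neighbor sets changes at time $t$, then $\sigma$ does not switch at time $t$ either.

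From this observation, given any open finite interval $(t_{\sss{1}},t_{\sss{2}}) \subset \Rpositive$, the number of switches of $\sigma$ on $(t_{\sss{1}},t_{\sss{2}})$ is upper bounded by the \emph{sum} over $i\in\Nset$ of the number of switches of $t\mapsto \Nset_{\sss{i}}(\sigma(t))$ on the same interval (in case several agents switch simultaneously, we use an inequality rather than an equality). Applying the average dwell-time property of each individual agent's neighbor set, as stated in Assumption~\ref{assump:SwitchesAgents}, this sum is bounded by
\begin{align}
\sum_{i\in\Nset} \left( N_{\sss{0}}^{\sss{i}} + \frac{t_{\sss{2}} - t_{\sss{1}}}{\tau_{\sss{D}}^{\sss{i}}} \right).
\end{align}

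Finally, I would bound the two terms separately using elementary inequalities: $\sum_{i\in\Nset} N_{\sss{0}}^{\sss{i}} \le N \max_{\sss{i\in\Nset}} N_{\sss{0}}^{\sss{i}} = N_{\sss{0}}$, and $\sum_{i\in\Nset} \frac{1}{\tau_{\sss{D}}^{\sss{i}}} \le \frac{N}{\min_{\sss{i\in\Nset}}\tau_{\sss{D}}^{\sss{i}}} = \frac{1}{\tau_{\sss{D}}}$. Combining these yields the bound $N_{\sss{0}} + (t_{\sss{2}}-t_{\sss{1}})/\tau_{\sss{D}}$ on the number of switches of $\sigma$ in $(t_{\sss{1}},t_{\sss{2}})$, which is exactly the average dwell-time condition for $\sigma$ with the claimed parameters.

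There is no real obstacle here; the argument is essentially bookkeeping. The only subtle point worth stating carefully is the first observation, namely that each switching instant of $\sigma$ must coincide with a switching instant of at least one individual neighbor set, so that summing the local bounds does indeed upper bound the global count.
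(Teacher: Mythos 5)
Your proof is correct, and it takes a genuinely different---and in fact more direct---route than the paper's. The paper fixes the candidate bound $\#_{\sss{s}}(t_{\sss{1}},t_{\sss{2}}) = N\bigl(N_{\sss{0}}^{\sss{\max}} + \frac{t_{\sss{2}}-t_{\sss{1}}}{\tau_{\sss{D}}^{\sss{\min}}}\bigr)$ and verifies it by partitioning time: it first treats open intervals of length at most $\frac{1}{N}\tau_{\sss{D}}^{\sss{\min}}$, arguing that each agent $j$ can contribute at most $N_{\sss{0}}^{\sss{j}}$ switches there (an integrality consequence of the individual average dwell-time inequality), and then chops a general interval into $k$ such pieces to obtain the bound $kNN_{\sss{0}}^{\sss{\max}}$. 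You instead sum the agents' average dwell-time bounds directly, relying only on the observation that every switching instant of $\sigma$ is witnessed by a switch of at least one neighbor set---which holds because $p \mapsto \mathcal{G}(p)$ is a bijection, so the graph, and hence the tuple of neighbor sets, changes whenever $\sigma$ does. This avoids the interval-partitioning bookkeeping entirely and in fact yields the sharper constants $N_{\sss{0}} = \sum_{\sss{i \in \Nset}} N_{\sss{0}}^{\sss{i}}$ and $\frac{1}{\tau_{\sss{D}}} = \sum_{\sss{i \in \Nset}} \frac{1}{\tau_{\sss{D}}^{\sss{i}}}$, from which the stated (weaker) constants follow by the elementary bounds you give. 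The one subtlety you flag---that simultaneous switches of several agents are counted once for $\sigma$ but possibly several times in the sum, so only an inequality is claimed---is exactly the right point to be careful about, and you handle it correctly.
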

A proof is found in Appendix~\ref{app:AppendixAverageDwellTime}.

	\subsection{Complete synchronization in $\SO[3]$ casted as synchronization in $\Sn{3}$}
	\label{sec:SynchronizationInSO3}
Consider a group of $N$ agents 
$
	\Rmati{1}, \ldots, \Rmati{N} \in \SO[3] 
	:=  
	\{
		\Rmat \in \Rn[3 \times 3]: 
		\Rmat\tp \Rmat = \Rmat \Rmat\tp =\Idmat, 
		\det(\Rmat) =1 
	\}
$, 
where, for every  $i \in N$, $\Rmati{i}$ represents the orientation frame, w.r.t. an unknown inertial orientation frame, of agent $i$. 
We say that the agents are synchronized if they all share the same \emph{complete} orientation, i.e., if $\Rmati{1} = \cdots = \Rmati{N}$, as illustrated in Fig.~\ref{fig:2RigidBodies} for $N =2 $.
\begin{figure}
	\centering
    \begin{subfigure}[b]{\linewidth}
    	\centering
        \includegraphics[clip=true,trim=2cm 3.5cm 0.5cm 3.2cm, width=0.75\linewidth]{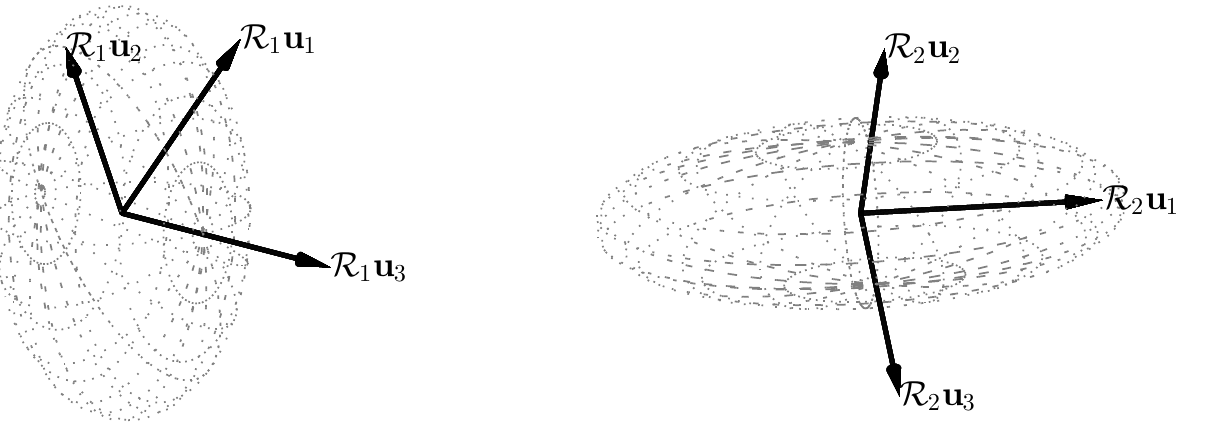}
        \caption{Two agents not synchronized.}
        \label{subfig:Ellipsoids_Synchronized}
    \end{subfigure}	   
    \begin{subfigure}[b]{\linewidth}
    	\centering
        \includegraphics[clip=true,trim=2cm 3.5cm 0.5cm 3.2cm, width=0.75\linewidth]{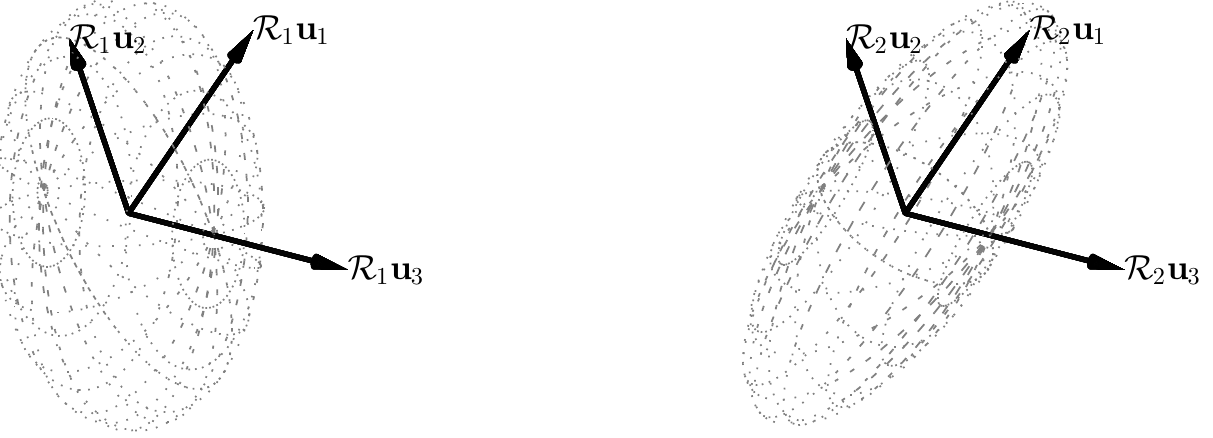}
        \caption{Two agents synchronized.}
        \label{subfig:Ellipsoids_Synchronized_Not}
    \end{subfigure}	
	\caption{In complete synchronization, $N$ agents, indexed by $i = \{1,\cdots,N\}$, synchronize their rotation matrices ($\umb_{\sss{1}}$,$\umb_{\sss{2}}$ and $\umb_{\sss{3}}$ stand for the canonical basis vectors in $\Rn[3]$).}
	\label{fig:2RigidBodies}
\end{figure}
The term \emph{complete} synchronization is used in juxtaposition with \emph{incomplete} synchronization as described in the next subsection.
In incomplete synchronization, rather than synchronizing all three bodies axes, the agents synchronize only one body direction, and, as explained in the next section, complete synchronization does not guarantee incomplete synchronization (and vice-versa).

For each $i \in \Nset$, $\bm{\omega}_{\sss{i}}:  \Rn[]_{\sss{\ge 0}} \mapsto \Rn[3]$ denotes the body-framed angular velocity of agent $i$, which can be actuated.
Each rotation matrix $\Rmat_{\sss{i}}: \Rn[]_{\sss{\ge 0}} \ni t \mapsto \Rmat_{\sss{i}}(t) \in \SO[3]$ evolves according to $\dot{\Rmat}_{\sss{i}}(t) = \fmb_{\sss{\Rmat}}(\Rmat_{\sss{i}}(t),\bm{\omega}_{\sss{i}}(t))$ where
\begin{align}
	&
	\fmb_{\sss{\Rmat}}:
	\SO[3] \times \Rn[3] 
	\ni
	(\Rmat,\bm{\omega})
	\mapsto
	\fmb_{\sss{\Rmat}}(\Rmat,\bm{\omega})
	\in T_{\Rmat} \SO[3] 
	\\
	&
	\fmb_{\sss{\Rmat}}(\Rmat,\bm{\omega})
	:=
	\Rmat \sk{\bm{\omega}}.
	\label{eq:RotationMatrixKinematics}
\end{align}
If, at a time instant $t \in \Rn[]_{\sss{\ge 0}}$, agent $i \in \Nset$ is aware of and can measure the relative attitude between itself and another agent $j$, then $j \in \Nset_{\sss{i}}(\sigma(t))$.
This motivates the definition of the measurement function $\hmb_{\sss{i}}(t,\cdot)$ for each time instant $t \in \Rpositive$, namely 
\begin{align}
	&
	\hmb_{\sss{i}}(t,\cdot): \SO[3]^{\sss{N}} \ni \Rmat \mapsto \hmb_{\sss{i}}(t,\Rmat) \in \SO[3]^{\sss{|\Nset_{\sss{i}}(\sigma(t))|}}
	\\
	&
	\hmb_{\sss{i}}(t,\Rmat)
	:= 
	(\hmb_{\sss{ij_{\sss{1}}}}(\Rmat), \cdots, \hmb_{\sss{ij_{\sss{| \Nset_{\sss{i}}(\sigma(t))|}}}}(\Rmat)),
	\label{eq:RotationMatrixAgentMeasurement}
\end{align}
where $\{ j_{\sss{1}}, \cdots, j_{\sss{| \Nset_{\sss{i}}(\sigma(t))|}} \} \equiv \Nset_{\sss{i}}(\sigma(t))$ and where
\begin{align}
	&\hmb_{\sss{ij}}: 
	\SO[3]^{\sss{N}} \ni \Rmat:=(\Rmati{1},\cdots,\Rmati{N}) \mapsto \hmb_{\sss{ij}}(\Rmat) \in \SO[3]
	\\
	&
	\hmb_{\sss{ij}}(\Rmat) 
	:= 
	\Rmati{i}\tp\Rmati{j},
	\label{eq:RotationMatrixAgentMeasurement2}
\end{align}
for each $j \in \Nset_{\sss{i}}(\sigma(t))$.
Thus, at each time instant $t \in \Rpositive$, agent $i \in \Nset$ measures the $|\Nset_{\sss{i}}(\sigma(t))|$ rotation matrices corresponding to its neighbors orientation with respect to its own orientation.
We emphasize that the measurement in~\eqref{eq:RotationMatrixAgentMeasurement2} does not require an agent to be aware of its own rotation matrix or its neighbors rotation matrices (recall that these are specified in an unknown inertial orientation frame);
rather it requires an agent to measure the projection of each of its neighbors three axes onto its own three axes.

\begin{prob}
	\label{prob:ProblemRotMatrixDynamics}
	For each agent $i \in \Nset$ and time instant $t \ge 0$, given the measurement function~\eqref{eq:RotationMatrixAgentMeasurement}, design time-varying decentralized feedback laws $\Omi{i}^{\sss{h}}(t,\cdot): \SO[3]^{\sss{|\Nset_{\sss{i}}(\sigma(t))|}} \mapsto \Rn[3] $, such that asymptotic synchronization of $\Rmat := (\Rmati{1},\cdots,\Rmati{N}) : \Rn[]_{\sss{\ge 0}} \mapsto \SO[3]^{\sss{N}}$ is attained, where $\dot{\Rmat}_{\sss{i}}(t) = \fmb_{\sss{\Rmat}}(\Rmat_{\sss{i}}(t), \Omi{i}^{\sss{h}}(t,\hmb_{\sss{i}}(t,\Rmat(t))))$ for every $i \in \Nset$.
\end{prob}
Problem~\ref{prob:ProblemRotMatrixDynamics} may be restated as finding a control law for each agent that depends exclusively on the measurement function as defined in~\eqref{eq:RotationMatrixAgentMeasurement}, and which encodes the partial state information available to each agent at a given time instant.
\begin{defn}
	\label{defn:ThetaRotMatrix}
	The angular displacement between two rotation matrices is given by $\theta: \SO[3] \times \SO[3] \ni (\Rmati{1},\Rmati{2}) \mapsto \theta(\Rmati{1},\Rmati{2}) := \arccos\left(\frac{\text{tr}(\Rmati{1}\tp\Rmati{2}) -1}{2} \right) \in  [0, \pi]$. 
\end{defn}
For each agent $i \in \Nset$ and each time instant $t \in \Rpositive$, consider the control law
$
	\Omi{i}^{\sss{h}}(t,\cdot): 
	\SO[3]^{\sss{| \Nset_{\sss{i}}(\sigma(t))|}} 
	\ni \hmb_{\sss{i}} := (\hmb_{\sss{ij_{\sss{1}}}}, \cdots, \hmb_{\sss{ij_{\sss{| \Nset_{\sss{i}}(\sigma(t))|}}}})
	\mapsto \Omi{i}^{\sss{h}}(t,\hmb_{\sss{i}}) \in \Rn[3]
$ 
defined as
\begin{align}
	\hspace{-0.4cm}
	\Omi{i}^{\sss{h}}(t,\hmb_{\sss{i}})
	:=
	\sum\limits_{j \in \Nset_{\sss{i}}(\sigma(t))} 
	w_{\sss{ij}}(\theta(\Idmat,\hmb_{\sss{ij}}))
	\invsk{
		\frac{
			\hmb_{\sss{ij}}
			-
			\hmb_{\sss{ij}}\tp
		}{2}
	},
	\label{eq:ControlLawSO3}
\end{align}
where $w_{\sss{ij}}: [0, \pi] \mapsto \Rn[{}]_{\sss{\ge 0}} $ is continuous and satisfies
\begin{align}
	w_{\sss{ij}}(\theta) > 0  \,\, \forall \theta \in (0, \pi].
	\label{eq:RotationMatrixGFunction}
\end{align}
Notice that $w_{\sss{ij}}$ corresponds to a weight on the feedback law~\eqref{eq:ControlLawSO3} that agent $i$ assigns to the displacement between itself and its neighbor $j$.
Denote $\Omi{i}^{\sss{cl}}$ as the composition of the output feedback control law~\eqref{eq:ControlLawSO3}  with the output function~\eqref{eq:RotationMatrixAgentMeasurement}, i.e., $\Omi{i}^{\sss{cl}}: \Rn[]_{\sss{+}} \times \SO[3]^{\sss{N}} \ni (t, \Rmat) \mapsto \Omi{i}^{\sss{cl}}(t, \Rmat) := \Omi{i}^{\sss{h}}(t, \hmb_{\sss{i}}(t,\Rmat)) \in \Rn[3]$.
It follows that
\begin{align}
	\hspace{-0.2cm}
	\Scale[0.95]{
		\Omi{i}^{\sss{cl}}(t, \Rmat)
		=
		\sum\limits_{j \in \Nset_{\sss{i}}(\sigma(t))} 
		w_{\sss{ij}}(\theta(\Idmat,\Rmati{i}\tp\Rmati{j}))
		\invsk{
			\frac{
				\Rmati{i}\tp \Rmati{j}
				-
				\Rmati{j}\tp \Rmati{i}
			}{2}
		},	
	}
	\label{eq:ControlLawSO3StateFeedback}
\end{align}
where we have made us of the notation $\Rmat = (\Rmat_{\sss{1}},\cdots,\Rmat_{\sss{N}})$. 
We emphasize that if Assumption~\ref{assump:SwitchesAgents} is satisfied, then~\eqref{eq:ControlLawSO3} (and~\eqref{eq:ControlLawSO3StateFeedback}) does not have infinite many discontinuities in any time interval of finite length, which in turn implies that it can be implemented in a practical scenario.  

Recall that we wish to analyze different problems under a common framework where agents are unit vectors. 
Thus, in order to cast complete synchronization in $\SO[3]$ in the form~\eqref{eq:GeneralForm}, we perform a change of coordinates based on unit quaternions.
This change of variables serves only the purpose of analysis, while the implemented control law is still that in~\eqref{eq:ControlLawSO3}.

For that purpose, and for convenience, denote
\begin{align}
	\tilde{\mathbb{SO}}(3)
	:=
	\left\{
		\Rmat \in \SO[3]:
		\theta(\Idmat,\Rmat) < \pi
	\right\}
\end{align}
which is an open subset of $\SO[3]$, and where $\theta(\Idmat,\Rmat) < \pi \Leftrightarrow 1 + \text{tr}(\Rmat) > 0$ (see Definition~\ref{defn:ThetaRotMatrix}). 
Consider then the map 
$
	\mrtoq:
	\tilde{\mathbb{SO}}(3) \ni \Rmat
	\mapsto
	\mrtoq(\Rmat)
	\in\Sn{3}
$
defined as
\begin{align}
	\mrtoq(\Rmat):=
	\frac{1}{2}
	\left(
		\sqrt{1+\text{tr}(\Rmat)},
		\frac{\invsk{\Rmat - \Rmat\tp}}{\sqrt{1+\text{tr}(\Rmat)}}
	\right),
	\label{eq:map rotation to quaternion}
\end{align}
with $\mrtoq$ smooth on $\tilde{\mathbb{SO}}(3)$. 
Intuitively, $\mrtoq$ is a map that transforms a rotation matrix into a unit vector in $\Rn[4]$, also named unit quaternion, whose first component is positive. 
The idea followed later is: \emph{(i)} given a rotation matrix $\Rmat \in \tilde{\mathbb{SO}}(3)$, to consider the quaternion $\qmb = \mrtoq(\Rmat)$; and, \emph{(ii)} given the closed loop dynamics of the rotation matrix, to compute the closed loop dynamics of the quaternion, which are in the common form~\eqref{eq:GeneralForm2}. 
For that purpose, consider also the map 
$
	\mqtor:
	\Sn{3} \ni \qmb := (\bar{q},\hat{\qmb})
	\mapsto
	\mqtor(\qmb)
	\in
	\SO[3]
$
defined as
\begin{align}
	\mqtor(\qmb):=
	\Idmat
	+
	2 \bar{q} \sk{\hat{\qmb}}
	+
	2 \sk{\hat{\qmb}} \sk{\hat{\qmb}},
\end{align}
where $\mqtor$ restricted to the image of $\mrtoq$ yields the inverse map of $\mrtoq$, i.e., $\mqtor \circ \mrtoq = \text{id}_{\sss{\tilde{\mathbb{SO}}(3)}}$.

Let us now compute the kinematics of a quaternion computed from the mapping~\eqref{eq:map rotation to quaternion}. 
Given $\Rmat \in \tilde{\mathbb{SO}}(3)$, consider $\qmb = \mrtoq(\Rmat)$.
Given the kinematics~\eqref{eq:RotationMatrixKinematics}, one can compute the kinematics of the unit quaternion, as done in Appendix~\ref{app:QuaternionsAppendix}. 
It follows that
\begin{align}
	\dot{\qmb}
	\equiv
	\fmb_{\sss{q}}:
	\tilde{\mathbb{S}}^{\sss{3}} \times \Rn[3]
	\ni
	(\qmb,\bm{\omega})
	\mapsto
	\fmb_{\sss{q}}(\qmb,\bm{\omega})
	\in 
	T_{\sss{\qmb}} \tilde{\mathbb{S}}^{\sss{3}} 
\end{align}
is given by
\begin{align}
	\dot{\qmb}(\qmb,\bm{\omega})
	\equiv
	&
	\fmb_{\sss{q}}(\qmb,\bm{\omega})
	=
	\frac{1}{2} Q(\qmb) 
	[\zvec_{\sss{3\times 1}} \, \Idmat_{\sss{3}}]\tp \bm{\omega},
	\label{eq:quaternion kinematics}
\end{align}
where, given any $\qmb \in \Sn{3}$, $Q(\qmb) \in T_{\qmb} \Sn{3} \subset \Rn[4\times4]$ is the linear map that satisfies
\begin{align}
	\hspace{-0.5cm}
	Q(\qmb) \vmb  
	:= 
	&
	\innerproduct{\qmb}{\emb_{\sss{1}}} \vmb - \innerproduct{\qmb}{\vmb} \emb_{\sss{1}}  + \innerproduct{\emb_{\sss{1}}}{\vmb} \qmb +
	\\
	&
 	 + \text{diag}(0,\sk{\hat{\qmb}}) \vmb,
	\label{eq:q matrix quaternion}
\end{align}
for any $\vmb \in \Rn[4]$ (it is easy to verify that $\innerproduct{\qmb}{Q(\qmb) \vmb} = 0$).

The control law~\eqref{eq:ControlLawSO3StateFeedback} may also be rewritten in terms of unit quaternions, which motivates the definition of
$
	 \Omi{i}^{\sss{q}}(t,\cdot): 
	 (\tilde{\mathbb{S}}^{\sss{3}})^{\sss{N}}  
	 \ni 
	 \qmb := (\qmb_{\sss{1}},\cdots,\qmb_{\sss{1}})
	 \mapsto 
	 \Omi{i}^{\sss{q}}(t,\qmb) 
	 \in 
	 \Rn[3]
$
as
\begin{align}
	\hspace{-0.4cm}
	& \Omi{i}^{\sss{q}}(t,\qmb) 
	:=
	\Omi{i}^{\sss{cl}}(t,\Rmat)|_{\Rmat = \mqtor{}^{\sss{N}}(\qmb)}
	\label{eq:ControlLawSO3v2}
	\\
	\hspace{-0.4cm}
	& =
	\sum_{j \in \Nset_{\sss{i}}(\sigma(t))} 
	2 
	\innerproduct{\qmbi{i}}{\qmbi{j}}
	w_{\sss{ij}}(\theta(\qmbi{i},\qmbi{j}))
	[
		\zvec_{\sss{3 \times 1}} \,
		\Idmat
	]
	Q(\qmbi{i})\tp	
	\qmbi{j},
	\label{eq:control law so3 quaternion}
\end{align}
where we have made used of~\eqref{eq:theta quaternion product} and~\eqref{eq:invsk quaternion product} in Appendix~\ref{app:QuaternionsAppendix}, and where, for brevity, we denoted $\theta(\qmbi{i},\qmbi{j}) := \arccos(2 \innerproduct{\qmbi{i}}{\qmbi{j}}^{\sss{2}}  - 1)$.
Denote $\tilde{w}_{\sss{ij}}(\qmbi{i},\qmbi{j}) := \innerproduct{\qmbi{i}}{\qmbi{j}} w_{\sss{ij}}(\arccos(2 \innerproduct{\qmbi{i}}{\qmbi{j}}^{\sss{2}}  - 1))$, and notice that this satisfies~\eqref{eq:gTildeCase} for $\alpha = \frac{\pi}{4}$ and for any $\bar{\numb} \in \mathcal{S}^{\sss{3}}$ (see also~\eqref{eq:RotationMatrixGFunction}).

Then, the unit quaternion kinematics~\eqref{eq:quaternion kinematics}, when composed with the control law~\eqref{eq:control law so3 quaternion}, is in the same form as~\eqref{eq:GeneralForm2}, i.e.
\begin{align}
	&
	\fmb_{\sss{q}}(\qmb_{\sss{i}},\Omi{i}^{\sss{q}}(t,\qmb))|_{\qmb=(\qmb_{\sss{1}},\cdots,\qmb_{\sss{N}})}
	=
	\\
	{\sss{\eqref{eq:quaternion kinematics}}}
	&
	=
	\frac{1}{2} Q(\qmb_{\sss{i}}) 
	[\zvec_{\sss{3\times 1}} \, \Idmat_{\sss{3}}]\tp 
	\Omi{i}^{\sss{q}}(t,\qmb)|_{\qmb=(\qmb_{\sss{1}},\cdots,\qmb_{\sss{N}})}
	\\
	{\sss{\eqref{eq:control law so3 quaternion}}}
	&
	=
	\sum\nolimits_{\sss{j \in \Nset_{\sss{i}}(\sigma(t))}}
	\tilde{w}_{\sss{ij}}(\qmbi{i},\qmbi{j})
	Q(\qmbi{i})
	\text{diag}(0,\Idmat_{\sss{3}})
	Q(\qmbi{i})\tp
	\qmbi{j}
	\\ 
	{\sss{\eqref{eq:q matrix quaternion}}}
	&
	=
	\sum\nolimits_{\sss{j \in \Nset_{\sss{i}}(\sigma(t))}}
	\tilde{w}_{\sss{ij}}(\qmbi{i},\qmbi{j})
	\OP{\qmbi{i}} \qmbi{j}
	\\
	{\sss{\eqref{eq:GeneralForm2}}}
	&
	=:
	\tilde{\fmb}_{\sss{i,\sigma(t)}}(\qmb).
\end{align}
We have then casted this problem in the form~\eqref{eq:GeneralForm} with $\bm{\nu} \equiv \qmb \in (\Sn{3})^{\sss{N}}$.

\begin{rem}
	Consider $ (\qmb_{\sss{1}},\cdots,\qmb_{\sss{N}}) \in \ConeOpen(\frac{\pi}{4},\bar{\qmb})^{\sss{N}}$ for some $\bar{\qmb} \in \Sn{3}$, as required by $\tilde{w}_{\sss{ij}}$ to be positive.
	Without loss of generality, one may assume that $\bar{\qmb} = \emb_{\sss{1}}$, where $\mqtor(\bar{\qmb}) = \Idmat$.
	It then follows that $\theta(\mqtor(\bar{\qmb}),\mqtor(\qmb_{\sss{i}})) = \theta(\Idmat,\mqtor(\qmb_{\sss{i}})) = \arccos(2 (\innerproduct{\bar{\qmb}}{\qmbi{i}})^2  - 1)  \le  \arccos\left(2 \cos^{\sss{2}}\left(\frac{\pi}{4}\right)  - 1\right) = \frac{\pi}{2}$ for all $i \in \Nset$.
	As such, i.e. all rotation matrices are $\frac{\pi}{2}$ \emph{close} to the identity matrix, and therefore if $ (\qmb_{\sss{1}},\cdots,\qmb_{\sss{N}}) \in \ConeOpen(\frac{\pi}{4},\bar{\qmb})^{\sss{N}}$ then $ (\Rmat_{\sss{1}},\cdots,\Rmat_{\sss{N}}) \in \tilde{\mathbb{SO}}(3)^{\sss{N}}$.
\end{rem}

	\subsection{Incomplete synchronization in $\SO[3]$ casted as synchronization in $\Sn{2}$}
	\label{sec:SynchronizationInS2}
	In this section, we consider again a group of $N$ agents operating in $\SO[3]$, but with a different synchronization objective.
%
As in Section~\ref{sec:SynchronizationInSO3}, for each $i \in N$, $\Rmati{i}$ represents the orientation frame of agent $i$. 
Additionally, for each agent $i$ there is a constant body direction $\bar{\nmb}_{\sss{i}} \in \Sn{2}$, known by the agent and its neighbors, which is required to synchronize with all the other agents' body directions.
The goal of incomplete attitude synchronization in $\SO[3]$ is that all agents share the same orientation along the chosen body directions;
i.e., given  $(\Rmati{1},\cdots,\Rmati{N}) \in \SO[3]^{\sss{N}}$ and $(\bar{\nmb}_{\sss{1}},\cdots,\bar{\nmb}_{\sss{N}}) \in (\mathcal{S}^{\sss{2}})^{\sss{N}}$, \emph{incomplete} synchronization exists when $\Rmati{1}\bar{\nmb}_{\sss{1}}  = \cdots = \Rmati{N} \bar{\nmb}_{\sss{N}}$, as illustrated in Fig.~\ref{fig:2RigidBodiesUnitVectors} for $N =2 $.
We note that the requirement for incomplete synchronization is independent from that of complete synchronization: i.e., complete synchronization does not imply incomplete synchronization (consider the case where $\bar{\nmb}_{\sss{1}}\ne\bar{\nmb}_{\sss{2}}$), and vice-versa.
\begin{figure}
	\centering
	\begin{subfigure}[b]{\linewidth}
		\centering
		\includegraphics[clip=true,trim=1cm 1.5cm 0.1cm 1.5cm,width=0.7\textwidth]
		{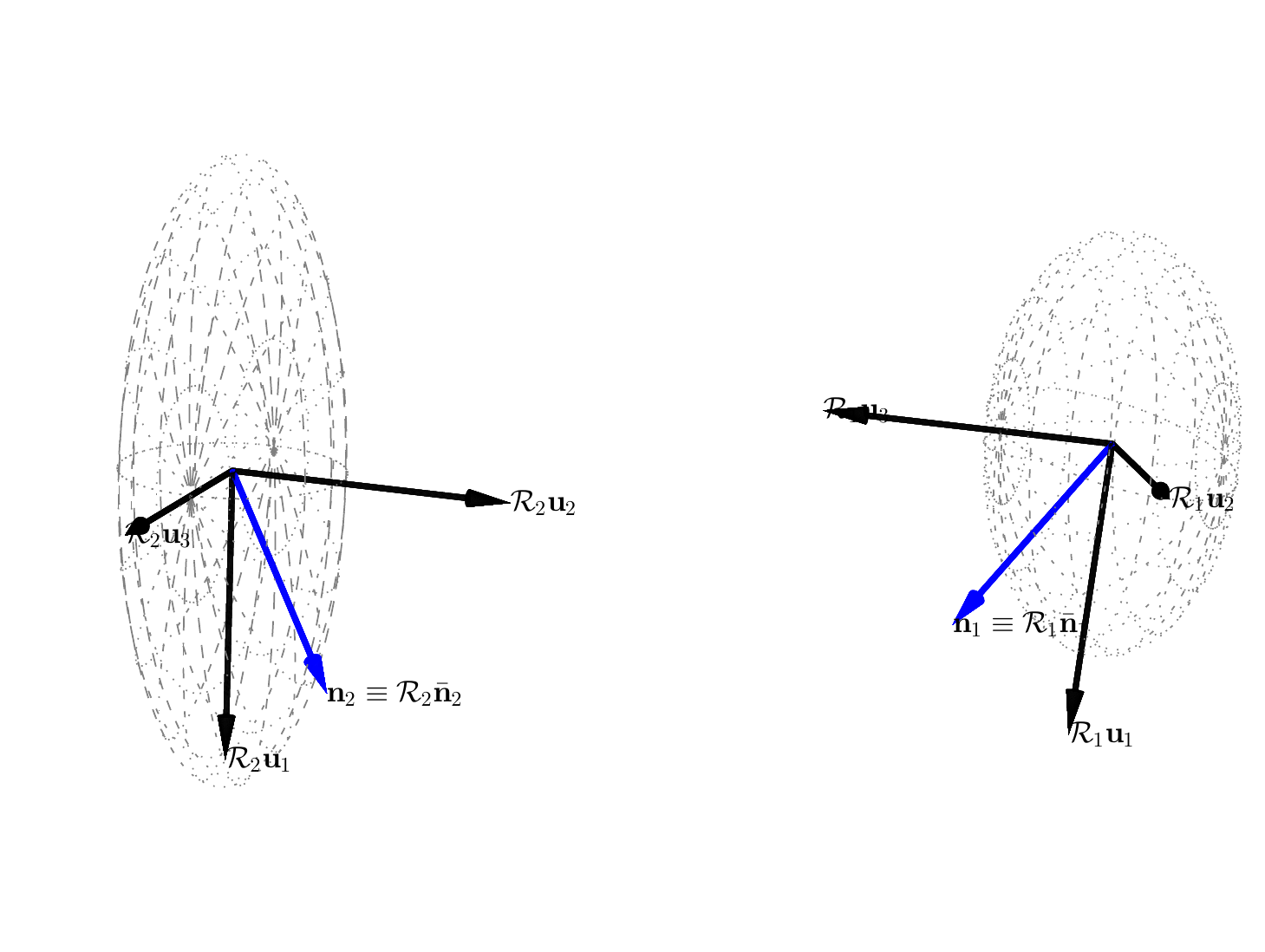}
		\caption{Two agents not synchronized, i.e., $\nmb_{\sss{1}} \ne \nmb_{\sss{2}}$ with $\bar{\nmb}_{\sss{1}} = \bar{\nmb}_{\sss{2}} = 3^{\sss{-\frac{1}{2}}} {[1\, 1\, 1]}\tp$.}
		\label{subfig:Ellipsoids_Not_Synchronized_UnitVectors}
	\end{subfigure}
	\begin{subfigure}[b]{\linewidth}
		\centering
		\includegraphics[clip=true,trim=1cm 1.5cm 0.1cm 1.5cm,width=0.7\textwidth]
		{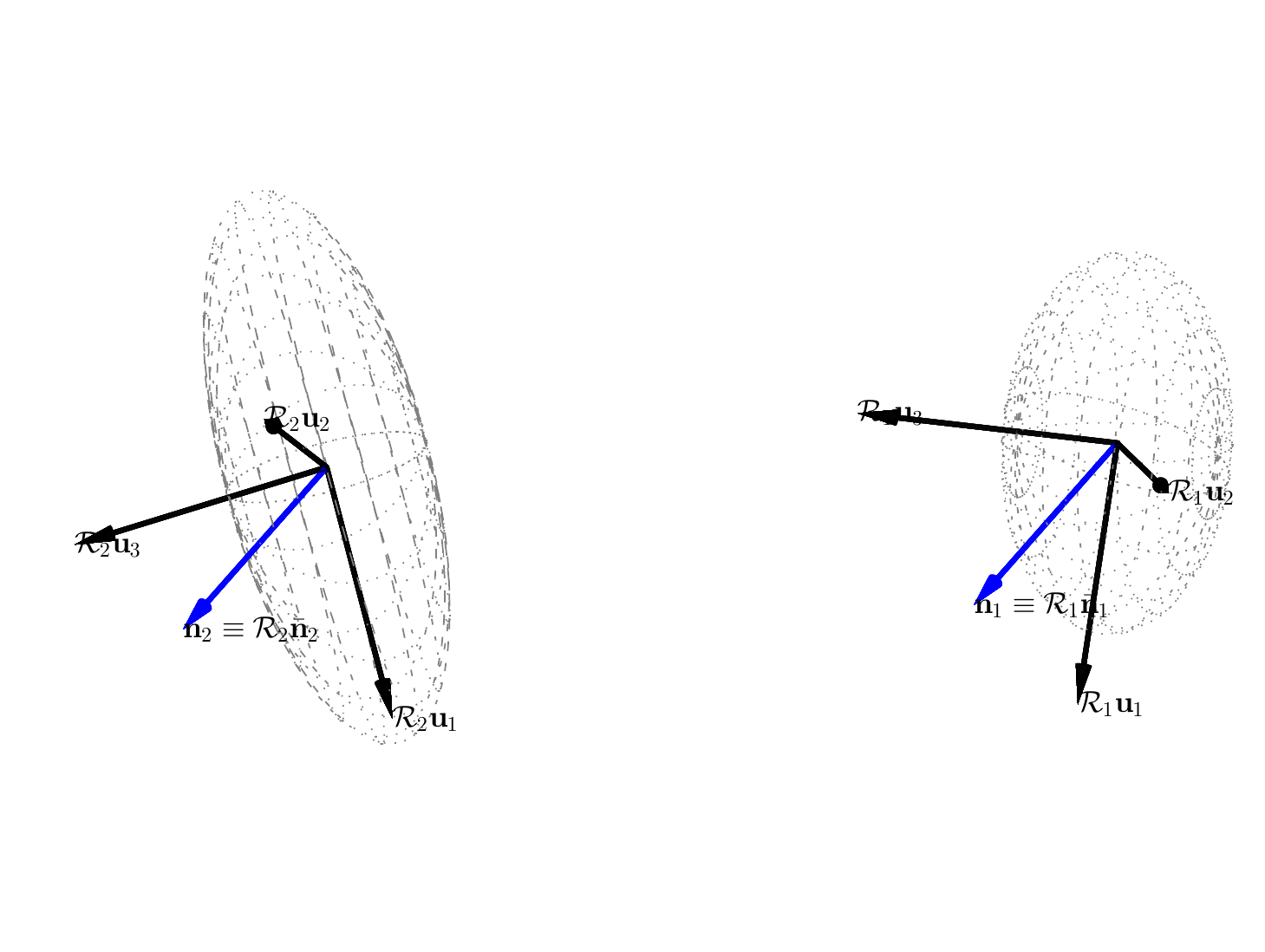}
		\caption{Two agents  synchronized, i.e, $\nmb_{\sss{1}} = \nmb_{\sss{2}}$ with $\bar{\nmb}_{\sss{1}} = \bar{\nmb}_{\sss{2}} = 3^{\sss{-\frac{1}{2}}} {[1\, 1\, 1]}\tp$.}
		\label{subfig:Ellipsoids_Synchronized_UnitVectors}
	\end{subfigure}
	\caption{In incomplete synchronization, all agents $i = \{1,\cdots,N\}$, align the unit vectors $\nmbi{i} \triangleq \Rmati{i}\bar{\nmb}_{\sss{i}}$, where $\bar{\nmb}_{\sss{i}}$ is fixed in rigid body $i$ ($\umb_{\sss{1}}$,$\umb_{\sss{2}}$ and $\umb_{\sss{3}}$ stand for the canonical basis vectors in $\Rn[3]$).}
	\label{fig:2RigidBodiesUnitVectors}
\end{figure}
%
%

Similarly to Section~\ref{sec:SynchronizationInSO3}, for each $i \in \Nset$, $\bm{\omega}_{\sss{i}}:  \Rn[]_{\sss{\ge 0}} \mapsto \Rn[3]$ denotes the body-framed angular velocity of agent $i$, which can be actuated.
Again, each rotation matrix $\Rmat_{\sss{i}}: \Rn[]_{\sss{\ge 0}} \ni t  \mapsto \Rmat_{\sss{i}}(t) \in \SO[3]$ evolves according to $\dot{\Rmat}_{\sss{i}}(t) = \fmb_{\sss{\Rmat}}(\Rmat_{\sss{i}}(t),\bm{\omega}_{\sss{i}}(t))$ with $\fmb_{\sss{\Rmat}}$ as defined in~\eqref{eq:RotationMatrixKinematics}.
If, additionally, we consider some constant $\bar{\nmb}_{\sss{i}} \in \Sn{2}$, then $\nmb_{\sss{i}} := \Rmat_{\sss{i}} \bar{\nmb}_{\sss{i}}: \Rn[]_{\sss{\ge 0}} \ni t  \mapsto \nmb_{\sss{i}}(t) \in \Sn{2}$, evolves according to $\dot{\nmb}_{\sss{i}}(t) = \fmb_{\sss{n}}(\Rmat_{\sss{i}}(t),\bm{\omega}_{\sss{i}}(t),\bar{\nmb}_{\sss{i}})$ where
\begin{align}
	&
	\Scale[0.95]{
		\fmb_{\sss{n}}: \SO[3] \times \Rn[3] \times \Sn{2} \ni (\Rmat,\bm{\omega},\bar{\nmb}) \mapsto \fmb_{\sss{n}}(\Rmat,\bm{\omega},\bar{\nmb}) \in 
		T_{\sss{\Rmat \bar{\nmb}}} \Sn{2} 
	}
	\\
	&
	\Scale[0.95]{
		\fmb_{\sss{n}}(\Rmat,\bm{\omega},\bar{\nmb})
		:=
		\fmb_{\sss{R}}(\Rmat,\bm{\omega}) \bar{\nmb}
		=
		\Rmat \sk{\bm{\omega}} \bar{\nmb}
		=
		-\sk{\Rmat \bar{\nmb}} \Rmat \bm{\omega} .
	}
	\label{eq:UnitVectorKinematics}
\end{align}

In what follows, we make use of the notation below, 
\begin{align}
	&
	\Rmat = (\Rmat_{\sss{1}},\cdots,\Rmat_{\sss{N}}),
	\label{eq:R complete}
	\\
	&
	(\nmbi{1},\cdots,\nmbi{N}) = (\Rmati{1} \bar{\nmb}_{\sss{1}},\cdots,\Rmati{N} \bar{\nmb}_{\sss{N}}).
	\label{eq:n complete}
\end{align}

If, at a time instant $t \in \Rpositive$, agent $i \in \Nset$ is aware of the relative attitude of agent's $j$ unit vector to be synchronized, then $j \in \Nset_{\sss{i}}(\sigma(t))$.
This motivates the definition of the measurement function $\hmb_{\sss{i}}(t,\cdot)$ for each time instant $t \in \Rpositive$, namely 
\begin{align}
	&
	\hmb_{\sss{i}}(t,\cdot): \SO[3]^{\sss{N}} \ni \Rmat \mapsto \hmb_{\sss{i}}(t,\Rmat) \in (\Sn{2})^{\sss{|\Nset_{\sss{i}}(\sigma(t))|}}
	\\
	&
	\hmb_{\sss{i}}(t,\Rmat) 
	:= 
	(
		\hmb_{\sss{ij_{\sss{1}}}}(\Rmat), 
		\cdots, 
		\hmb_{\sss{ij_{\sss{| \Nset_{\sss{i}}(\sigma(t))|}}}}(\Rmat)
	),
	\label{eq:UnitVectorAgentMeasurement}
\end{align}
where $\{ j_{\sss{1}}, \cdots, j_{\sss{| \Nset_{\sss{i}}(\sigma(t))|}} \} \equiv \Nset_{\sss{i}}(\sigma(t))$ and where
\begin{align}
	&\hmb_{\sss{ij}}: 
	\SO[3]^{\sss{N}} \ni \Rmat \mapsto \hmb_{\sss{ij}}(\Rmat) \in \Sn{2}
	\\
	&
	\hmb_{\sss{ij}}(\Rmat) 
	:= 
	\Rmati{i}\tp \Rmati{j} \bar{\nmb}_{\sss{j}}
	=
	\Rmati{i}\tp\nmbi{j},
	\label{eq:UnitVectorAgentMeasurement2}
\end{align}
for each $j \in \Nset_{\sss{i}}(\sigma(t))$, and where we have used of the notation in~\eqref{eq:R complete} and~\eqref{eq:n complete}.
Thus, at each time instant $t \in \Rpositive$, agent $i \in \Nset$ measures the $| \Nset_{\sss{i}}(\sigma(t))|$ unit vectors corresponding to the projection of a neighbor's unit vector onto agent's $i$ orientation frame.
\begin{prob}
	\label{prob:ProblemUnitVectorDynamics}
	For each agent $i \in \Nset$ and time instant $t \ge 0$, given the measurement function~\eqref{eq:UnitVectorAgentMeasurement}, design time-varying decentralized feedback laws $\Omi{i}^{\sss{h}}(t,\cdot): (\Sn{2})^{\sss{|\Nset_{\sss{i}}(\sigma(t))|}} \mapsto \Rn[3] $, such that asymptotic synchronization of $(\Rmati{1} \bar{\nmb}_{\sss{1}},\cdots,\Rmati{N} \bar{\nmb}_{\sss{N}}) : \Rn[]_{\sss{\ge 0}} \mapsto (\Sn{2})^{\sss{N}}$ is accomplished, where $\dot{\Rmat}_{\sss{i}}(t) = \fmb_{\sss{\Rmat}}(\Rmat_{\sss{i}}(t), \Omi{i}^{\sss{h}}(t,\hmb_{\sss{i}}(t,\Rmat(t))))$ for every $i \in \Nset$.
\end{prob}
Problem~\ref{prob:ProblemUnitVectorDynamics} may be restated as finding a control law for each agent that depends exclusively on the measurement function as defined in~\eqref{eq:UnitVectorAgentMeasurement}, and which encodes the partial state information available to each agent at a given time instant.
\begin{defn}
	\label{defn:ThetaUnitVector}
	The angular displacement between two unit vectors is given by $\theta: \Sn{2} \times \Sn{2} \ni (\nmbi{1},\nmbi{2}) \mapsto \theta(\nmbi{1},\nmbi{2}) := \arccos(\innerproduct{\nmbi{2}}{\nmbi{2}}) \in  [0, \pi]$.
\end{defn}
For each $t \in \Rpositive$ and agent $i \in \Nset$, consider the control law $\Omi{i}^{\sss{h}}(t,\cdot):  (\Sn{2})^{\sss{| \Nset_{\sss{i}}(\sigma(t))|}} \ni \hmb_{\sss{i}} := (\hmb_{\sss{ij_{\sss{1}}}}, \cdots, \hmb_{\sss{ij_{\sss{| \Nset_{\sss{i}}(\sigma(t))|}}}}) \mapsto \Omi{i}^{\sss{h}}(t,\hmb) \in \Rn[3]$ defined as
\begin{align}
	\Omi{i}^{\sss{h}}(t, \hmb_{\sss{i}})
	:=
	\sum\limits_{j \in \Nset_{\sss{i}}(\sigma(t))} 
	w_{\sss{ij}}(\theta(\bar{\nmb}_{\sss{i}},\hmb_{\sss{ij}}))
	\sk{\bar{\nmb}_{\sss{i}}}
	\hmb_{\sss{ij}},
	\label{eq:DistributedControlLawUnitVector}
\end{align}
where $w_{\sss{ij}}: [0, \pi] \mapsto \Rn[{}]_{\sss{\ge 0}} $ is a continuous function satisfying
\begin{align}
	w_{\sss{ij}}(\theta) > 0, \forall \theta \in (0,\pi],
	\label{eq:UnitVectorPositivenessGFunction}
\end{align}
and corresponding to a weight on the feedback law~\eqref{eq:DistributedControlLawUnitVector} agent $i$ assigns to the angular displacement between itself and its neighbor $j$.
Denote $\Omi{i}^{\sss{cl}}$ as the composition of the output feedback control law~\eqref{eq:DistributedControlLawUnitVector} with the output function~\eqref{eq:UnitVectorAgentMeasurement}, i.e., $\Omi{i}^{\sss{cl}} : \Rn[]_{\sss{+}} \times \SO[3]^{\sss{N}} \ni (t, \Rmat) \mapsto \Omi{i}^{\sss{cl}}(t,\Rmat) := \Omi{i}^{\sss{h}}(t, \hmb_{\sss{i}}(t,\Rmat)) \in \Rn[3]$. 
It follows that
\begin{align}
	&
	\Omi{i}^{\sss{cl}}(t, \Rmat)
	:= \Omi{i}^{\sss{h}}(t, \hmb_{\sss{i}}(t,\Rmat))
	\\
	&
	=
	\sum\nolimits_{j \in \Nset_{\sss{i}}(\sigma(t))} 
	w_{\sss{ij}}(\theta(\bar{\nmb}_{\sss{i}},\Rmati{i}\tp\Rmati{j}\bar{\nmb}_{\sss{j}}))
	\sk{\bar{\nmb}_{\sss{i}}}\Rmati{i}\tp\Rmati{j}\bar{\nmb}_{\sss{j}}
	\\
	&
	=
	\Rmati{i}\tp
	\sum\nolimits_{j \in \Nset_{\sss{i}}(\sigma(t))} 
	w_{\sss{ij}}(\theta(\nmbi{i},\nmbi{j}))
	\sk{\nmbi{i}}\nmbi{j},	
	\label{eq:DistributedControlLawUnitVectorState}
\end{align}
where we have used of~\eqref{eq:R complete} and~\eqref{eq:n complete};
and of the fact that $\theta(\nmb_{\sss{i}},\nmbi{j}) = \theta(\Rmati{i}\tp \nmb_{\sss{i}},\Rmati{i}\tp\nmbi{j})$ for any $\Rmati{i} \in \SO[3]$ and any $\nmb_{\sss{i}},\nmbi{j} \in \Sn{2}$.
We emphasize that the control law~\eqref{eq:DistributedControlLawUnitVectorState} is based on the output feedback control law, and thus depends only on the relative orientation measurements (see~\eqref{eq:UnitVectorAgentMeasurement} and~\eqref{eq:UnitVectorAgentMeasurement2}); 
moreover,~\eqref{eq:DistributedControlLawUnitVectorState} is orthogonal to $\bar{\nmb}_{\sss{i}}$, which implies that full angular velocity control is not necessary, i.e. we only need to control the angular velocity along the two directions orthogonal to $\bar{\nmb}_{\sss{i}}$ ($\Omi{i}^{\sss{h}}: \Rn[]_{\sss{+}} \times (\Sn{2})^{\sss{| \Nset_{\sss{i}}(\sigma(t))|}} \mapsto T_{\sss{\bar{\nmb}_{\sss{i}}}} \Sn{2}  \subset \Rn[3]$).

Denote $\tilde{w}_{\sss{ij}}(\nmbi{i},\nmbi{j}) = w_{\sss{ij}}(\theta(\nmb_{\sss{i}},\nmbi{j}))$, which satisfies~\eqref{eq:gTildeCase} for any $\alpha \in [0,\pi]$ and $\bar{\numb} \in \Sn{2}$, due to~\eqref{eq:UnitVectorPositivenessGFunction} and Definition~\ref{defn:ThetaUnitVector}.
With the above in mind, the kinematics~\eqref{eq:UnitVectorKinematics}, when composed with the proposed law~\eqref{eq:DistributedControlLawUnitVectorState} yield~\eqref{eq:GeneralForm2}, i.e.,
\begin{align}
	&
	\fmb_{\sss{n}}(\Rmat_{\sss{i}},\bm{\omega}_{\sss{i}}^{\sss{cl}}(t,(\Rmat_{\sss{1}},\cdots,\Rmat_{\sss{N}})),\bar{\nmb}_{\sss{i}})
	|_{\Rmat_{\sss{i}} \bar{\nmb}_{\sss{i}} = \nmb_{\sss{i}} \forall i \in \Nset}
	\\
	{\sss{\eqref{eq:UnitVectorKinematics}}}
	&
	=
	-
	\sk{\nmbi{i}}
	\Rmati{i}
	\bm{\omega}_{\sss{i}}^{\sss{cl}}(t,(\Rmat_{\sss{1}},\cdots,\Rmat_{\sss{N}}))
	|_{\Rmat_{\sss{i}} \bar{\nmb}_{\sss{i}} = \nmb_{\sss{i}} \forall i \in \Nset}
	\\
	{\sss{\eqref{eq:DistributedControlLawUnitVectorState}}}
	&
	=
	-
	\sum\nolimits_{j \in \Nset_{\sss{i}}(\sigma(t))} 
	\tilde{w}_{\sss{ij}}(\nmbi{i},\nmbi{j})
	\sk{\nmbi{i}}\sk{\nmbi{i}}\nmbi{j}
	\\
	&
	=
	\sum\nolimits_{j \in \Nset_{\sss{i}}(\sigma(t))} 
	\tilde{w}_{\sss{ij}}(\nmbi{i},\nmbi{j})
	\OP{\nmbi{i}}\nmbi{j}
	\\
	{\sss{\eqref{eq:GeneralForm2}}}
	&
	=:
	\tilde{\fmb}_{\sss{i,\sigma(t)}}(\nmb)|_{\nmb = (\nmb_{\sss{1}},\cdots,\nmb_{\sss{N}}) \in (\Sn{2})^{\sss{N}}}
	.
\end{align}
We have thus casted this problem in the form~\eqref{eq:GeneralForm}-\eqref{eq:GeneralForm2} with $\bm{\nu} \equiv \nmb \in (\Sn{2})^{\sss{N}}$.

As stated in Section~\ref{sec:Synchronization}, consensus in $\Rn[n]$ can also be casted as a synchronization problem in $\mathcal{S}^{\sss{n}}$, for any $n \in \mathbb{N}$. This case is presented in Appendix~\ref{app:ConsensusRn}.			

	\section{Analysis}
	\label{sec:Analysis}
	In this section, we analyze the solutions of~\eqref{eq:GeneralForm}-\eqref{eq:GeneralForm2}, and show that given a wide set of initial conditions, asymptotic synchronization is guaranteed.
Specifically, asymptotic synchronization is guaranteed if all unit vectors are initially contained in an open $\alpha^{\sss{\star}}$-cone, i.e. if $\exists \bar{\bm{\nu}} \in \Sn{n} : \bm{\nu}(0) \in \ConeOpen(\alpha^{\sss{\star}},\bar{\bm{\nu}})^{\sss{N}}$, where $\alpha^{\sss{\star}} = \frac{\pi}{2}$ for synchronization in $\Sn{2}$ and consensus in $\Rn[n]$, and $\alpha^{\sss{\star}} = \frac{\pi}{4}$ for synchronization in $\SO[3]$.
%
%
%
\begin{rem}
	If $\bm{\nu}(0) \in \ConeOpen(\alpha,\bar{\bm{\nu}})^{\sss{N}}$ for some $\alpha \in [0,\alpha^{\sss{\star}})$ and some $\bar{\bm{\nu}} \in \Sn{n}$, then Proposition~\ref{prop:ConeGeneration} in Appendix~\ref{app:AuxiliaryResults} guarantees that there exist $n+1$ linearly independent unit vectors $\{\bar{\numb}_{\sss{k}} \in \Sn{n}\}_{\sss{k \in \{1,\cdots,n+1\}}}$ such $\bm{\nu}(0) \in \ConeOpen(\alpha^{\sss{\star}},\bar{\bm{\nu}}_{\sss{k}})^{\sss{N}} \forall k \in \{1,\cdots,n+1\}$. 
	Thus, if $\bm{\nu}(0)$ is contained in an open $\alpha$ cone, then there exist other bigger cones that contain $\bm{\nu}(0)$; as such, the choice of $\bar{\bm{\nu}}$ is not unique.
\end{rem}
Next, we introduce a coordinate transformation that we exploit in order to cast the dynamics~\eqref{eq:GeneralForm2} into a form that satisfies the conditions of Theorem~\ref{thm:MainTheorem}. 
In particular, given some $\bar{\bm{\nu}} \in \Sn{2}$ we consider the projection of the cone $\ConeOpen(\frac{\pi}{2},\bar{\bm{\nu}})$ to the plane in $\Rn[n+1]$ orthogonal to $\bar{\bm{\nu}}$ and containing zero, and then map this plane isometrically to $\Rn[n]$.
%
%
\begin{defn}
	\label{defn:NormalError}
	Let $\bar{\bm{\nu}} \in \Sn{n}$ and $Q_{\sss{\bar{\bm{\nu}}}} \in \Rn[(n+1)\times n]$, such that $\bar{\bm{\nu}}$ and the columns of $Q_{\sss{\bar{\bm{\nu}}}}$ form an orthonormal basis of $\Rn[n+1]$,
	%
	and consider the diffeomorphism $\hmb_{\sss{\bar{\bm{\nu}}}}: \ConeOpen(\frac{\pi}{2},\bar{\bm{\nu}}) \mapsto  \mathcal{B}(1)$ (see Notation for the definition of $\mathcal{B}$), defined as 
	\begin{align}
		\hmb_{\sss{\bar{\bm{\nu}}}}(\numbi{i}) = Q_{\sss{\bar{\bm{\nu}}}}\tp \numbi{i}.
		\label{eq:Diffeomorphism}
	\end{align}
	Its derivative is given by $d\hmb_{\sss{\bar{\bm{\nu}}}} = Q_{\sss{\bar{\bm{\nu}}}}\tp$, and its inverse $\hmb_{\sss{\bar{\bm{\nu}}}}^{\sss{-1}}: \mathcal{B}(1) \mapsto \ConeOpen(\frac{\pi}{2},\bar{\bm{\nu}})$ is given by $\hmb_{\sss{\bar{\bm{\nu}}}}^{\sss{-1}}(\xmbi{i}) = \sqrt{1 - \|\xmbi{\sss{i}}\|^{\sss{2}}}\bar{\numb} +  Q_{\sss{\bar{\bm{\nu}}}} \xmbi{i}$.
\end{defn}
Figure~\ref{fig:Cone30WithError} illustrates the mapping $\hmb_{\sss{\bar{\bm{\nu}}}}$ as introduced in Definition~\ref{defn:NormalError}, for $n = 2$ and $N = 3$. 
\begin{figure}
	\centering
	\includegraphics[clip=true,trim=2.4cm 0cm 0cm 0cm,width=0.4\textwidth]
	{./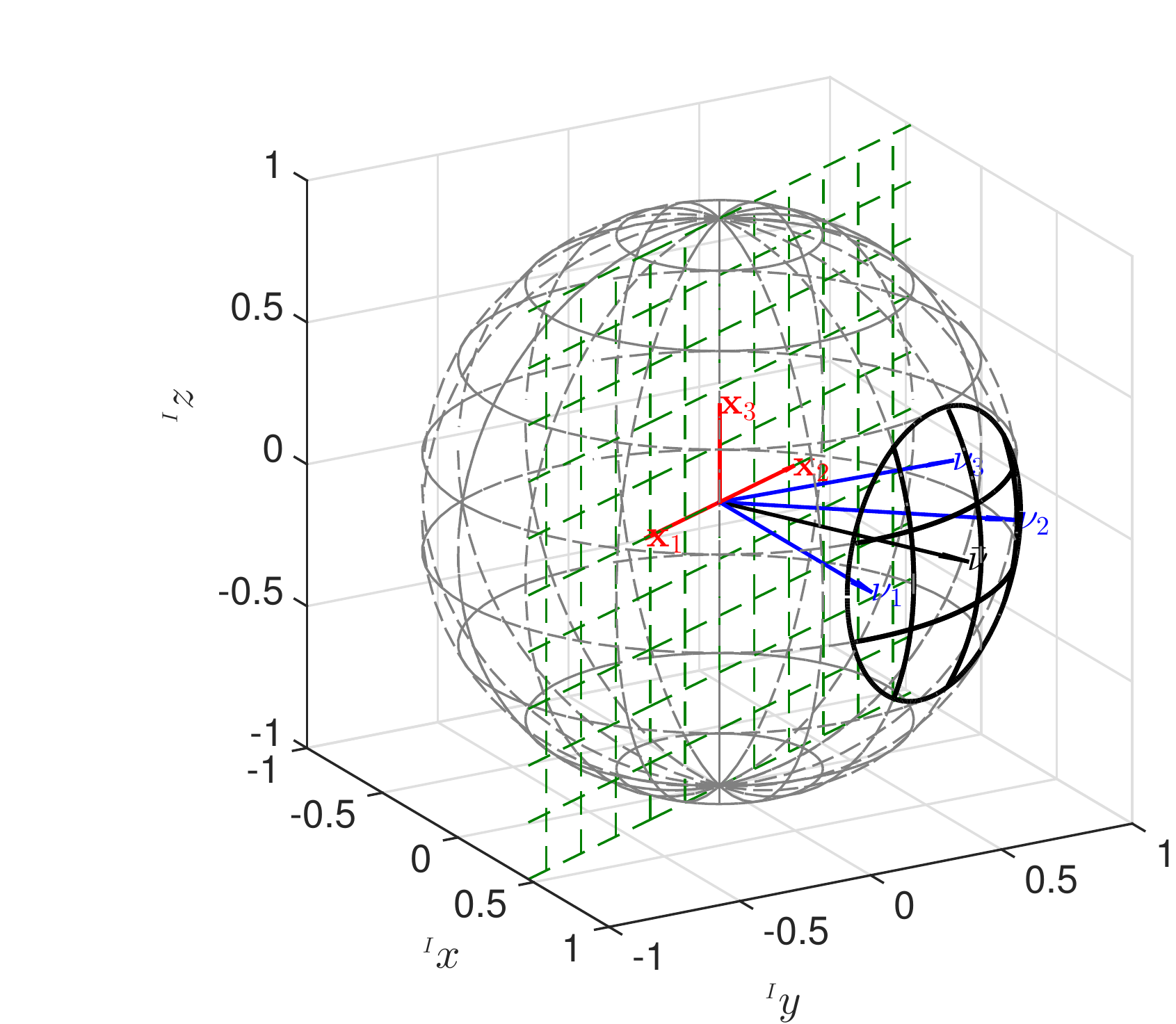}
	\caption{Illustration of~\eqref{eq:Diffeomorphism} as in Definition~\eqref{defn:NormalError}:  $\xmbi{i} = \hmb_{\sss{\bar{\bm{\nu}}}}(\numbi{i}) \overset{\sss{\eqref{eq:Diffeomorphism}}}{=} Q_{\sss{\bar{\bm{\nu}}}}\tp \numbi{i}$, and green lines are those spanned by the columns of $Q_{\sss{\bar{\bm{\nu}}}}$ (which form a plane).}
	\label{fig:Cone30WithError}
\end{figure}
\begin{prop}	
	\label{prop:ReconstructionNi}
	Consider $\bar{\bm{\nu}} \in \Sn{n}$ and $\numbi{i},\numbi{j} \in \ConeOpen(\frac{\pi}{2},\bar{\bm{\nu}})$.
	Then, $\innerproduct{\bar{\numb}}{\numb} = \sqrt{1 - \|\hmb_{\sss{\bar{\bm{\nu}}}}(\numb)\|^{\sss{2}}} > 0$ and the following implications hold: $\|\hmb_{\sss{\bar{\bm{\nu}}}}(\numbi{\sss{i}})\| > \|\hmb_{\sss{\bar{\bm{\nu}}}}(\numbi{\sss{j}})\|  \Leftrightarrow 0< \innerproduct{\bar{\numb}}{\numbi{i}} < \innerproduct{\bar{\numb}}{\numbi{j}}$, 
	and $\|\hmb_{\sss{\bar{\bm{\nu}}}}(\numbi{\sss{i}})\| \ge \|\hmb_{\sss{\bar{\bm{\nu}}}}(\numbi{\sss{j}})\| \Leftrightarrow 0< \innerproduct{\bar{\numb}}{\numbi{i}} \le \innerproduct{\bar{\numb}}{\numbi{j}}$.
\end{prop}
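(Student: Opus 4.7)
The plan is to exploit the fact that $[\bar{\numb}\,\, Q_{\sss{\bar{\numb}}}] \in \Rn[(n+1)\times(n+1)]$ is an orthogonal matrix, since by assumption $\bar{\numb}$ together with the columns of $Q_{\sss{\bar{\numb}}}$ form an orthonormal basis of $\Rn[n+1]$. This immediately yields the decomposition
\begin{align}
    \numbi{i} = \innerproduct{\bar{\numb}}{\numbi{i}}\,\bar{\numb} + Q_{\sss{\bar{\numb}}} Q_{\sss{\bar{\numb}}}\tp \numbi{i},
\end{align}
and, by the Pythagorean identity applied to a unit vector,
\begin{align}
    1 = \|\numbi{i}\|^{\sss{2}} = \innerproduct{\bar{\numb}}{\numbi{i}}^{\sss{2}} + \|Q_{\sss{\bar{\numb}}}\tp \numbi{i}\|^{\sss{2}} = \innerproduct{\bar{\numb}}{\numbi{i}}^{\sss{2}} + \|\hmb_{\sss{\bar{\numb}}}(\numbi{i})\|^{\sss{2}},
\end{align}
where the second equality uses Definition~\ref{defn:NormalError}.

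Next, I would use the hypothesis $\numbi{i} \in \ConeOpen(\frac{\pi}{2},\bar{\numb})$ to resolve the sign ambiguity: by definition of the open $\frac{\pi}{2}$-cone, $\innerproduct{\bar{\numb}}{\numbi{i}} > \cos(\frac{\pi}{2}) = 0$, so taking the positive square root gives $\innerproduct{\bar{\numb}}{\numbi{i}} = \sqrt{1 - \|\hmb_{\sss{\bar{\numb}}}(\numbi{i})\|^{\sss{2}}} > 0$, which is the first claimed identity. In particular, this forces $\|\hmb_{\sss{\bar{\numb}}}(\numbi{i})\| < 1$, matching the codomain $\mathcal{B}(1)$ of the diffeomorphism.

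Finally, the equivalences for $\numbi{i}$ and $\numbi{j}$ both in $\ConeOpen(\frac{\pi}{2},\bar{\numb})$ follow from the fact that the function $r \mapsto \sqrt{1 - r^{\sss{2}}}$ is strictly decreasing on $[0,1)$. Thus
\begin{align}
    \|\hmb_{\sss{\bar{\numb}}}(\numbi{i})\| > \|\hmb_{\sss{\bar{\numb}}}(\numbi{j})\| \iff \sqrt{1-\|\hmb_{\sss{\bar{\numb}}}(\numbi{i})\|^{\sss{2}}} < \sqrt{1-\|\hmb_{\sss{\bar{\numb}}}(\numbi{j})\|^{\sss{2}}},
\end{align}
which by the identity just derived is equivalent to $0 < \innerproduct{\bar{\numb}}{\numbi{i}} < \innerproduct{\bar{\numb}}{\numbi{j}}$; the non-strict version is analogous. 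There is no real obstacle here — the entire argument is a direct linear-algebraic computation — and I expect the write-up to be only a few lines.
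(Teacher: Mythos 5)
Your proposal is correct and follows essentially the same route as the paper: the paper likewise computes $\|\hmb_{\sss{\bar{\bm{\nu}}}}(\numb)\|^{\sss{2}} = \innerproduct{\numb}{\OP{\bar{\bm{\nu}}}\numb} = 1 - (\innerproduct{\bar{\numb}}{\numb})^{\sss{2}}$ via $Q_{\sss{\bar{\bm{\nu}}}} Q_{\sss{\bar{\bm{\nu}}}}\tp = \OP{\bar{\bm{\nu}}}$ (your Pythagorean decomposition in the orthonormal basis is the same identity), uses the cone condition to select the positive root, and concludes the equivalences from the monotonicity of $r \mapsto \sqrt{1-r^{\sss{2}}}$. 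No gaps.
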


\begin{proof}
	Since $\hmb_{\sss{\bar{\bm{\nu}}}}(\numb) = Q_{\sss{\bar{\bm{\nu}}}} \numb$, then $\|\hmb_{\sss{\bar{\bm{\nu}}}}(\numb)\|^{\sss{2}} = \innerproduct{Q_{\sss{\bar{\bm{\nu}}}}\tp\numb}
	{Q_{\sss{\bar{\bm{\nu}}}}\tp \numb} = \innerproduct{\numb} {\OP{\bar{\bm{\nu}}} \numb}  = 1 - (\innerproduct{\bar{\numb}}{\numb})^{\sss{2}}$ (notice that 	$Q_{\sss{\bar{\bm{\nu}}}} Q_{\sss{\bar{\bm{\nu}}}}\tp = \OP{\bar{\bm{\nu}}}$). 
	Since $\innerproduct{\bar{\numb}}{\numb}> 0$, for any $\numb \in \ConeOpen(\frac{\pi}{2},\bar{\bm{\nu}})$, it follows that $\innerproduct{\bar{\numb}} {\numb} = \sqrt{1 - \|\hmb_{\sss{\bar{\bm{\nu}}}}(\numb)\|^{\sss{2}}}$.
	The implications in the Proposition follow since $\innerproduct{\bar{\numb}}{\numb} = \sqrt{1 - \|\hmb_{\sss{\bar{\bm{\nu}}}}(\numb)\|^{\sss{2}}}$ is decreasing with $\|\hmb_{\sss{\bar{\bm{\nu}}}}(\numb)\|$, and since $\numbi{i},\numbi{j} \in \ConeOpen(\frac{\pi}{2},\bar{\bm{\nu}})$.
\end{proof}

Consider now the solution $\bm{\nu}: \Rn[]_{\sss{\ge 0}} \mapsto (\Sn{n})^{\sss{N}}$ of~\eqref{eq:GeneralForm}-\eqref{eq:GeneralForm2} with $\bm{\nu}(0) \in \ConeOpen(\frac{\pi}{2},\bar{\bm{\nu}})^{\sss{N}}$ for some $\bar{\bm{\nu}} \in \Sn{n}$, which, as will be shown in Theorem \ref{thm:Synchronization}, remains in $\ConeOpen(\frac{\pi}{2},\bar{\bm{\nu}})^{\sss{N}}$ for all $t \ge 0$;
and define $\xmb: \Rn[]_{\sss{\ge 0}} \mapsto (\Rn[n])^{\sss{N}}$ as $\xmb = \hmb_{\sss{\bar{\bm{\nu}}}}^{\sss{N}} \circ \bm{\nu}$. Then, based on the transformation introduced in Definition~\ref{defn:NormalError}, it follows that $\dot{\xmb}(t) = \fmb_{\sss{\sigma(t)}}(\xmb(t))$, where  
\begin{align}
	\fmb_{\sss{\sigma(t)}}(\xmb)
	=
	d\hmb_{\sss{\bar{\bm{\nu}}}}^{\sss{N}}(\bm{\nu}) 
	\tilde{\fmb}_{\sss{\sigma(t)}}(\bm{\nu})
	|_{
		\bm{\nu} 
		= 
		(\hmb_{\sss{\bar{\bm{\nu}}}}^{\sss{-1}})^{\sss{N}}(\xmb)
	},
	\label{eq:GeneralDynamicsFull}
\end{align}
and $\xmbiDot{i} = \hmb_{\sss{\bar{\bm{\nu}}}} \circ \bm{\nu}_{\sss{i}}$ evolves according to $\xmbiDot{i}(t) = \fmb_{\sss{i,\sigma(t)}}(\xmb(t))$, where
\begin{align}
	\fmb_{\sss{i,\sigma(t)}}(\xmb)
	& = 
	d\hmb_{\sss{\bar{\bm{\nu}}}}(\bm{\nu}_{\sss{i}}) \tilde{\fmb}_{\sss{\sigma(t)}}(\bm{\nu})
	|_{
		\bm{\nu} 
		= 
		(\hmb_{\sss{\bar{\bm{\nu}}}}^{\sss{-1}})^{\sss{N}}(\xmb)
	}
	\label{eq:GeneralDynamicsComponent}
	\\
	{\sss{\text{\eqref{eq:GeneralForm2} and Def~\ref{defn:NormalError}}}}
	& =
	Q_{\sss{\bar{\bm{\nu}}}}\tp 
	\sum_{j \in \Nset_{\sss{i}}(\sigma(t))} 
	\tilde{w}_{\sss{ij}}(\numbi{i},\numbi{j})
	\OP{\numbi{i}}
	\numbi{j}	
	|_{
		\bm{\nu} 
		= 
		(\hmb_{\sss{\bar{\bm{\nu}}}}^{\sss{-1}})^{\sss{N}}(\xmb)
	}
	.
\end{align}
It follows from~\eqref{eq:GeneralDynamicsComponent} that, for $\numb  \in \ConeOpen(\frac{\pi}{2},\bar{\bm{\nu}})^{\sss{N}}$ with $\bar{\bm{\nu}} \in \Sn{n}$, $\xmb = \hmb_{\sss{\bar{\bm{\nu}}}}^{\sss{N}}(\numb) \in \mathcal{B}(1)^{\sss{N}}$ and any $p \in \mathcal{P}$,
\begin{align}
	& 
	\innerproduct{\xmbi{i}} {\fmb_{\sss{i,p}}(\xmb)}
	\overset{\sss{\eqref{eq:GeneralDynamicsComponent}}}{=}
	\langle
		Q_{\sss{\bar{\bm{\nu}}}}\tp\numbi{i}
		,
		Q_{\sss{\bar{\bm{\nu}}}}\tp
		\sum_{j \in \Nset_{\sss{i}}(p)} 
		\tilde{w}_{\sss{ij}}(\numbi{i},\numbi{j})
		\OP{\numbi{i}}
		\numbi{j}
	\rangle
	\\
	& =
	\langle
		\numbi{i},
		\OP{\bar{\bm{\nu}}}
		\sum_{j \in \Nset_{\sss{i}}(p)} 
		\tilde{w}_{\sss{ij}}(\numbi{i},\numbi{j})
		\OP{\numbi{i}}
		\numbi{j}	
	\rangle
	\\
	& =
	\langle
		\numbi{i} - \innerproduct{\numbi{i}}{\bar{\numb}}\bar{\numb},
		\sum_{j \in \Nset_{\sss{i}}(p)} 
		\tilde{w}_{\sss{ij}}(\numbi{i},\numbi{j})
		\OP{\numbi{i}}
		\numbi{j}
	\rangle
	\\
	& =
	- \innerproduct{\bar{\numb}}{\numbi{i}}
	\sum_{j \in \Nset_{\sss{i}}(p)} 
	\tilde{w}_{\sss{ij}}(\numbi{i},\numbi{j})
	\langle
		\bar{\numb},
		\OP{\numbi{i}}\numbi{j}	
	\rangle
	\label{eq:NegativeSemiDefinite}	
\end{align}
%
%
%
The following result provides certain properties that are exploited in determining the sign of~\eqref{eq:NegativeSemiDefinite}.
\begin{prop}
	\label{prop:UnitVectorsInequality}
	Consider three unit vectors $\numbi{1}$, $\numbi{2},\bar{\numb} \in \Sn{n}$, satisfying $0 < \innerproduct{\bar{\numb}}{\numbi{1}} \le \innerproduct{\bar{\numb}}{\numbi{2}}$. 
	Then $\emph{(a)}$ $\innerproduct{\bar{\numb}}{\OP{\numbi{1}}\numbi{2}} = 0$ iff $\numbi{1} = \numbi{2} $, and $\emph{(b)}$ $ \innerproduct{\bar{\numb}}{\OP{\numbi{1}}\numbi{2}}> 0$ iff $\numbi{2} \ne \numbi{1}$.
\end{prop}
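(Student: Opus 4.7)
The plan is to expand the projection explicitly and reduce both claims to a single inequality between three inner products. Since $\OP{\numbi{1}}\numbi{2} = \numbi{2} - \innerproduct{\numbi{1}}{\numbi{2}}\numbi{1}$ by the definition of $\Pi$ given in the Notation section, I would first write
\begin{align}
\innerproduct{\bar{\numb}}{\OP{\numbi{1}}\numbi{2}}
= \innerproduct{\bar{\numb}}{\numbi{2}} - \innerproduct{\numbi{1}}{\numbi{2}}\innerproduct{\bar{\numb}}{\numbi{1}}.
\end{align}
Introducing the abbreviations $a := \innerproduct{\bar{\numb}}{\numbi{1}}$ and $b := \innerproduct{\bar{\numb}}{\numbi{2}}$, the hypothesis is $0 < a \le b$, and the quantity of interest becomes $b - a\,\innerproduct{\numbi{1}}{\numbi{2}}$.

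Next I would invoke Cauchy--Schwarz for unit vectors, which gives $\innerproduct{\numbi{1}}{\numbi{2}} \le 1$ with equality if and only if $\numbi{1} = \numbi{2}$. Multiplying by $a>0$ and using $a\le b$ yields the chain $a\,\innerproduct{\numbi{1}}{\numbi{2}} \le a \le b$, establishing the baseline inequality $\innerproduct{\bar{\numb}}{\OP{\numbi{1}}\numbi{2}} \ge 0$ for all admissible triples. This reduces both parts of the proposition to identifying when equality is achieved.

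For part (a), the forward implication is immediate: $\numbi{1} = \numbi{2}$ forces $\OP{\numbi{1}}\numbi{2} = \zvec$. For the converse, equality $b = a\,\innerproduct{\numbi{1}}{\numbi{2}}$ combined with $\innerproduct{\numbi{1}}{\numbi{2}}\le 1$ and $a\le b$ pinches the chain and forces both $\innerproduct{\numbi{1}}{\numbi{2}} = 1$ and $a = b$, whence $\numbi{1} = \numbi{2}$ by the equality case of Cauchy--Schwarz. Part (b) is then a direct corollary: since the inner product is always nonnegative, strict positivity is equivalent to non-vanishing, which by (a) is equivalent to $\numbi{1}\ne \numbi{2}$. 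I do not anticipate a real obstacle here; the only subtlety worth being explicit about is that the strict sign $a > 0$ is what allows the division/cancellation argument to pin down $\innerproduct{\numbi{1}}{\numbi{2}} = 1$ in the equality case.
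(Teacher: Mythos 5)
Your proof is correct and follows essentially the same route as the paper: both expand $\OP{\numbi{1}}\numbi{2} = \numbi{2} - \innerproduct{\numbi{1}}{\numbi{2}}\numbi{1}$ and reduce everything to the bound $\innerproduct{\numbi{1}}{\numbi{2}} \le 1$, with equality iff $\numbi{1}=\numbi{2}$. Your organization is slightly cleaner -- the single chain $a\,\innerproduct{\numbi{1}}{\numbi{2}} \le a \le b$ plus a pinching argument for the equality case replaces the paper's case split on whether $\innerproduct{\bar{\numb}}{\numbi{1}} = \innerproduct{\bar{\numb}}{\numbi{2}}$ and its separate contrapositive argument for necessity -- but the substance is identical.
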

The proof is found in Appendix~\ref{app:AuxiliaryResults}.
We show next, by combining Propositions~\ref{prop:ReconstructionNi} and~\ref{prop:UnitVectorsInequality} and exploiting~\eqref{eq:NegativeSemiDefinite}, that the conditions of Theorem~\ref{thm:MainTheorem} are satisfied for the dynamics~\eqref{eq:GeneralDynamicsFull}-\eqref{eq:GeneralDynamicsComponent}.
\begin{prop}
	\label{eq:ErrorDynamics}
	Consider the vector field as defined in~\eqref{eq:GeneralDynamicsFull}-\eqref{eq:GeneralDynamicsComponent} for a certain $\bar{\bm{\numb}} \in \Sn{n}$ and assume that the switching signal $\sigma: \Rpositive \mapsto \mathcal{P}$ encodes only connected network graphs.
	Then the vector field~\eqref{eq:GeneralDynamicsFull}-\eqref{eq:GeneralDynamicsComponent} satisfies the conditions of Theorem~\ref{thm:MainTheorem} for $r = 1$.
\end{prop}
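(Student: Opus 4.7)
The plan is to verify in turn each of the three conditions appearing in the statement of Theorem~\ref{thm:MainTheorem}, using the explicit formula~\eqref{eq:NegativeSemiDefinite} as the starting point. Since $\hmb_{\sss{\bar{\bm{\nu}}}}^{\sss{N}}$ is a diffeomorphism from $\ConeOpen(\frac{\pi}{2},\bar{\bm{\nu}})^{\sss{N}}$ onto $\mathcal{B}(1)^{\sss{N}}$, the domain assumption $\xmb \in \mathcal{B}(1)^{\sss{N}}$ of Theorem~\ref{thm:MainTheorem} is exactly equivalent to $\numb \in \ConeOpen(\frac{\pi}{2},\bar{\bm{\nu}})^{\sss{N}}$, which is the setting in which Propositions~\ref{prop:ReconstructionNi} and~\ref{prop:UnitVectorsInequality} apply; moreover, bijectivity of $\hmb_{\sss{\bar{\bm{\nu}}}}$ makes the consensus set in $\xmb$-coordinates correspond to the consensus set in $\numb$-coordinates.

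For condition~1(a), fix $p \in \mathcal{P}$ and $i \in \mathcal{H}(\xmb)$. The defining property $\|\xmbi{i}\| \ge \|\xmbi{j}\|$ for all $j \in \Nset$ translates, by Proposition~\ref{prop:ReconstructionNi}, into $0 < \innerproduct{\bar{\numb}}{\numbi{i}} \le \innerproduct{\bar{\numb}}{\numbi{j}}$ for all $j$. Proposition~\ref{prop:UnitVectorsInequality} then yields $\innerproduct{\bar{\numb}}{\OP{\numbi{i}}\numbi{j}} \ge 0$ for every neighbor $j \in \Nset_{\sss{i}}(p)$. Together with $\tilde{w}_{\sss{ij}} \ge 0$ and $\innerproduct{\bar{\numb}}{\numbi{i}} > 0$, formula~\eqref{eq:NegativeSemiDefinite} gives $\innerproduct{\xmbi{i}}{\fmb_{\sss{i,p}}(\xmb)} \le 0$. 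Condition~2 is immediate: if $\xmb \in \mathcal{C}$ then $\numbi{1} = \cdots = \numbi{N}$, so $\OP{\numbi{i}}\numbi{j} = \zvec$ for every pair, making every summand in~\eqref{eq:NegativeSemiDefinite} vanish.

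The nontrivial step is condition~1(b). Fix $p \in \mathcal{P}$ and $\xmb \notin \mathcal{C}$; it suffices to exhibit $i \in \mathcal{H}(\xmb)$ and a neighbor $j \in \Nset_{\sss{i}}(p)$ with $\numbi{i} \ne \numbi{j}$, because~\eqref{eq:gTildeCase} then yields $\tilde{w}_{\sss{ij}}(\numbi{i},\numbi{j}) > 0$ (as unit vectors satisfy $\innerproduct{\numbi{i}}{\numbi{j}} = 1$ iff $\numbi{i} = \numbi{j}$), and Proposition~\ref{prop:UnitVectorsInequality}(b) gives $\innerproduct{\bar{\numb}}{\OP{\numbi{i}}\numbi{j}} > 0$, so that the corresponding summand in~\eqref{eq:NegativeSemiDefinite} is strictly positive while all other summands remain non-negative by condition~1(a), forcing $\innerproduct{\xmbi{i}}{\fmb_{\sss{i,p}}(\xmb)} < 0$. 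I split into two cases. If $\mathcal{H}(\xmb) \subsetneq \Nset$, then by the assumed connectedness of the graph encoded by $p$ there exists an edge across the cut between $\mathcal{H}(\xmb)$ and its complement, providing $i \in \mathcal{H}(\xmb)$ with a neighbor $j$ satisfying $\|\xmbi{j}\| < \|\xmbi{i}\|$, which by Proposition~\ref{prop:ReconstructionNi} forces $\numbi{i} \ne \numbi{j}$. If instead $\mathcal{H}(\xmb) = \Nset$, all $\innerproduct{\bar{\numb}}{\numbi{k}}$ coincide but, since $\xmb \notin \mathcal{C}$, the partition of $\Nset$ according to the equivalence relation ``$\numbi{k}=\numbi{l}$'' has more than one class, and connectedness once more furnishes an edge joining two distinct classes.

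The principal obstacle is precisely the case $\mathcal{H}(\xmb) = \Nset$: no agent is strictly closer to $\bar{\numb}$ than $i$, so strict negativity cannot be read off from a norm comparison alone, and one must invoke connectedness of the graph to locate a disagreeing neighbor at the same norm level. Once this case is handled as above, all three hypotheses of Theorem~\ref{thm:MainTheorem} are verified on $\mathcal{B}(1)^{\sss{N}}$, which establishes the claim with $r = 1$.
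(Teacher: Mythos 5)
Your proof is correct, and for conditions 1(a) and 2 it coincides with the paper's argument: both rest on the identity~\eqref{eq:NegativeSemiDefinite} together with Propositions~\ref{prop:ReconstructionNi} and~\ref{prop:UnitVectorsInequality} and the sign condition~\eqref{eq:gTildeCase}. Where you genuinely diverge is in condition 1(b). The paper argues by contradiction: it assumes $\innerproduct{\xmbi{i}}{\fmb_{\sss{i,p}}(\xmb)} = 0$ for every $i \in \mathcal{H}(\xmb)$, deduces that each such agent's neighbors are synchronized with it and hence also lie in $\mathcal{H}(\xmb)$, and propagates this $N-1$ times along the connected graph until all agents are synchronized, contradicting $\xmb \notin \mathcal{C}$. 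You instead give a direct, constructive argument: you reduce 1(b) to exhibiting one maximal-norm agent with one disagreeing neighbor, and locate that edge by a cut argument, splitting on whether $\mathcal{H}(\xmb)$ is a proper subset of $\Nset$ or all of $\Nset$ (in the latter case cutting along the equivalence classes of ``$\numbi{k}=\numbi{l}$''). Your route is shorter and isolates exactly what connectedness is used for — the existence of a single edge leaving a proper subset — whereas the paper's propagation argument characterizes the full structure forced by the vanishing of all the derivatives, which is closer in spirit to an invariance-principle computation. One shared caveat: since the graph is a digraph, your ``edge across the cut'' must be oriented out of $\mathcal{H}(\xmb)$ (i.e., $j \in \Nset_{\sss{i}}(p)$ with $i \in \mathcal{H}(\xmb)$), which requires the same reachability/strong-connectivity reading of ``connected'' that the paper's own propagation step implicitly uses, so this is not a gap relative to the paper. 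You also correctly flag that the case $\mathcal{H}(\xmb) = \Nset$ is where a norm comparison alone cannot produce strictness; the paper's contradiction argument absorbs this case silently, while your proof makes it explicit.
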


\begin{proof}
	In order to verify that the conditions of Theorem~\ref{thm:MainTheorem} are satisfied by the vector field in~\eqref{eq:GeneralDynamicsFull}-\eqref{eq:GeneralDynamicsComponent}, we exploit \eqref{eq:NegativeSemiDefinite} and the fact that for each $\xmb = (\xmbi{1},\cdots,\xmbi{N}) \in \mathcal{B}(1)^{\sss{N}}$ there exists a (unique) $\numb = (\numbi{1},\cdots,\numbi{N})  \in \ConeOpen(\alpha^{\sss{\star}},\bar{\bm{\nu}})^{\sss{N}}$  such that $\xmb = \hmb_{\sss{\bar{\bm{\nu}}}}^{\sss{N}}(\numb)$. We proceed with the verification  of Condition 1) of Theorem~\ref{thm:MainTheorem} and pick $\xmb \in \mathcal{B}(1)^{\sss{N}}$, where  $\xmb = \hmb_{\sss{\bar{\bm{\nu}}}}^{\sss{N}}(\numb)$ for certain $\numb \in \ConeOpen(\alpha^{\sss{\star}},\bar{\bm{\nu}})^{\sss{N}}$.
	Notice, that since $\numb \in \ConeOpen(\alpha^{\sss{\star}},\bar{\bm{\nu}})^{\sss{N}}$, it holds by definition that $\innerproduct{\bar{\numb}}{\numbi{i}} > \cos(\alpha^{\sss{\star}}) \ge 0 \forall i \in \Nset$.
	Therefore, since Condition 1)a) depends exclusively on the sign of~\eqref{eq:NegativeSemiDefinite}, we can ignore the effect of the positive term $\innerproduct{\bar{\numb}}{\numbi{i}}$.
	
	In order to show Condition 1)a), pick any  $p \in \mathcal{P}$ and notice, that due to~\eqref{eq:gTildeCase} and continuity of  $\tilde{w}_{\sss{ij}}$ it holds that $\tilde{w}_{\sss{ij}}(\numbi{i},\numbi{j}) \ge 0 $ for any $\numbi{i},\numbi{j}\in \ConeOpen(\alpha^{\sss{\star}},\bar{\bm{\nu}})$ and $i,j \in \Nset$. 
	In addition, by recalling that $\mathcal{H}(\xmb = (\xmb_{\sss{1}},\cdots,\xmb_{\sss{N}})) = \arg \max_{\sss{i \in \Nset}} \|\xmbi{i}\|$ and thus, that  $\|\xmbi{i}\| \ge \|\xmbi{j}\|$ for all  $i \in \mathcal{H}(\xmb)$ and $j \in \Nset$, it follows from Proposition~\ref{prop:ReconstructionNi} that $\innerproduct{\bar{\numb}}{\numbi{i}} \le \innerproduct{\bar{\numb}}{\numbi{j}}$  for all  $j \in \Nset$. 
	From the latter and the result of  Proposition~\ref{prop:UnitVectorsInequality}, we get that $\innerproduct{\bar{\numb}} {\OP{\numbi{i}}\numbi{j}}\ge 0$ for all $j \in \Nset(p)$. 
	Thus, we conclude from \eqref{eq:NegativeSemiDefinite}  that for any $p \in \mathcal{P}$ and $\xmb \in \mathcal{B}(1)^{\sss{N}}$ it holds $\innerproduct{\xmbi{i}}{\fmb_{\sss{i,p}}(\xmb)}\le 0$, which by virtue of \eqref{eq:Deltaxmax} implies that Condition 1)a) is satisfied.
	
	For the verification of Condition 1)b), we additionally assume that $\xmb \not\in \mathcal{C} := \{ (\xmb_{\sss{1}},\cdots,\xmb_{\sss{N}}) \in (\Rn[n])^{\sss{N}}: \xmbi{1} = \cdots = \xmbi{N}\}$. We will show that for each $p \in \mathcal{P}$ there exists $k \in \mathcal{H}(\xmb)$ such that $\innerproduct{\xmbi{k}}{\fmb_{\sss{k,p}}(\xmb)} < 0$. Indeed, suppose on the contrary that there exists $p \in \mathcal{P}$ such that
	\begin{align}
		\innerproduct{\xmbi{i}}{\fmb_{\sss{i,p}}(\xmb)} =  0 \, \forall i \in \mathcal{H}(\xmb),
		\label{eq:ContradictionAssumption}
	\end{align}
	and recall that the agents' network is not synchronized, since $\xmb \not\in \mathcal{C}$.
	Consider then an $l \in \mathcal{H}(\xmb)$, for which $\innerproduct{\xmbi{l}}{\fmb_{\sss{l,p}}(\xmb)} = 0$ according to assumption~\eqref{eq:ContradictionAssumption}.
	Notice also, that due to~\eqref{eq:gTildeCase}, Propositions~\ref{prop:ReconstructionNi} and~\ref{prop:UnitVectorsInequality} and \eqref{eq:NegativeSemiDefinite}, it can be shown (as in the proof of Condition 1)a) above) that $\innerproduct{\xmbi{l}}{\fmb_{\sss{l,p}}(\xmb)} = 0$ is satisfied only if all neighbors of agent $l$ are synchronized with agent $l$, i.e., only if $\numbi{j} = \numbi{l} \Leftrightarrow \xmbi{j} = \xmbi{l}$ for all $j \in \Nset_{\sss{l}}(p)$.
	This implies that all $j \in \Nset_{\sss{l}}(p)$ are contained in $\mathcal{H}(\xmb)$, i.e., $ \Nset_{\sss{l}}(p) \cup \{l\} \subseteq \mathcal{H}(\xmb)$.
	As such, by assumption~\eqref{eq:ContradictionAssumption}, $\innerproduct{\xmbi{j}}{\fmb_{\sss{j,p}}(\xmb)} =  0$ for all $j \in \Nset_{\sss{l}}(p)$, which means that the previous rationale is applicable for all $j \in \Nset_{\sss{l}}(p)$, thus leading to the conclusion that all neighbors of all neighbors of agent $l$ are necessarily synchronized with each other. 
	Since the graph encoded by $p \in \mathcal{P}$ is connected, the previous rationale, applied $N-1$ times, leads to the conclusion that all agents are synchronized.
	Since $\xmb \not\in \mathcal{C}$, a contradiction has been reached, and therefore, for each $p \in \mathcal{P}$, there exists a $k \in \mathcal{H}(\xmb) $ for which $\innerproduct{\xmbi{k}}{\fmb_{\sss{k,p}}(\xmb)} < 0$, and therefore condition~1)b) of Theorem~\ref{thm:MainTheorem} is satisfied.
	
	Finally, let $\xmb \in \mathcal{C}$. 
	Since $\xmbi{1} = \cdots = \xmbi{N} \Leftrightarrow \numbi{1} = \cdots = \numbi{N}$, it follows from Proposition~\ref{prop:UnitVectorsInequality} that $\innerproduct{\xmbi{i}}{\fmb_{\sss{i,p}}(\xmb)} = \zvec$ for all $p \in \mathcal{P}$ and $i \in \Nset$.
	Thus, the second condition of Theorem~\ref{thm:MainTheorem} is also satisfied.
\end{proof}

\begin{thm}
	\label{thm:Synchronization}
	%
	%
	Consider the solution $\bm{\nu}: \Rn[]_{\sss{\ge 0}} \mapsto (\Sn{n})^{\sss{N}}$ of~\eqref{eq:GeneralForm} with $\bm{\nu}(0) \in \ConeOpen(\alpha^{\sss{\star}},\bar{\bm{\nu}})^{\sss{N}}$ for some $\bar{\bm{\nu}} \in \Sn{n}$.
	Then, for a network graph connected at all times, \emph{i)}  $\bm{\nu}(t) \in \ConeClosed(\alpha,\bar{\bm{\nu}})^{\sss{N}}$ for all $ t \ge 0$, where $\alpha = \arccos(\max_{\sss{i \in \Nset}} \innerproduct{\bar{\numb}}{\numbi{i}}(0)) \in [0,\alpha^{\sss{\star}})$; 
	\emph{ii)} $\bm{\nu}$ synchronizes asymptotically and $\lim_{\sss{t \rightarrow \infty}} \bar{\numb} \numbi{i}(t)$ exists for all $i \in \Nset$; and \emph{iii)} all unit vectors converge to a constant unit vector,  i.e. $ \exists \numb^{\sss{\star}} \in \Sn{n}:  \lim_{\sss{t \rightarrow \infty}} \numb(t) \in \mathcal{C}(0,\numb^{\sss{\star}})^{\sss{N}}$.
\end{thm}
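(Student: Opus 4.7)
My plan is to reduce the theorem to Theorem~\ref{thm:MainTheorem} via the coordinate change of Definition~\ref{defn:NormalError}. Setting $\xmb := \hmb_{\sss{\bar{\bm{\nu}}}}^{\sss{N}} \circ \bm{\nu}$, the hypothesis $\bm{\nu}(0) \in \ConeOpen(\alpha^{\sss{\star}},\bar{\bm{\nu}})^{\sss{N}}$ with $\alpha^{\sss{\star}}\le\pi/2$ places $\xmb(0) \in \mathcal{B}(1)^{\sss{N}}$, and Proposition~\ref{eq:ErrorDynamics} shows that the transformed dynamics~\eqref{eq:GeneralDynamicsFull}--\eqref{eq:GeneralDynamicsComponent} satisfy the hypotheses of Theorem~\ref{thm:MainTheorem} with $r = 1$. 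All three conclusions of the present theorem then follow by pulling the outcomes of Theorem~\ref{thm:MainTheorem} back through $\hmb_{\sss{\bar{\bm{\nu}}}}$.

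For part~\emph{(i)}, Theorem~\ref{thm:MainTheorem} provides positive invariance of $\bar{\mathcal{B}}(r_{\sss{0}})^{\sss{N}}$ with $r_{\sss{0}} = \max_{\sss{i \in \Nset}} \|\xmbi{i}(0)\|$. Proposition~\ref{prop:ReconstructionNi} gives $\|\xmbi{i}(t)\|^{\sss{2}} = 1 - \innerproduct{\bar{\numb}}{\numbi{i}(t)}^{\sss{2}}$, so the bound $\|\xmbi{i}(t)\|\le r_{\sss{0}}$ is equivalent to $\innerproduct{\bar{\numb}}{\numbi{i}(t)} \ge \sqrt{1 - r_{\sss{0}}^{\sss{2}}} = \min_{\sss{j \in \Nset}} \innerproduct{\bar{\numb}}{\numbi{j}(0)}$, i.e., to $\numbi{i}(t) \in \ConeClosed(\alpha,\bar{\bm{\nu}})$ with $\alpha = \arccos\min_{\sss{j \in \Nset}} \innerproduct{\bar{\numb}}{\numbi{j}(0)}$ (the reading of the statement for which the cone actually contains $\bm{\nu}(0)$).

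For part~\emph{(ii)}, the asymptotic convergence $\xmb(t) \to \mathcal{C}$ provided by Theorem~\ref{thm:MainTheorem} transfers through the diffeomorphism to $\numbi{i}(t) - \numbi{j}(t) \to 0$, establishing asymptotic synchronization. The existence of $\lim_{\sss{t\to\infty}} \innerproduct{\bar{\numb}}{\numbi{i}(t)}$ then follows from the convergence $V(\xmb(t)) \to V^{\sss{\infty}}$: synchronization forces all norms $\|\xmbi{i}(t)\|$ to share the common limit $\sqrt{2V^{\sss{\infty}}}$, and the identity $\innerproduct{\bar{\numb}}{\numbi{i}}^{\sss{2}} = 1 - \|\xmbi{i}\|^{\sss{2}}$ combined with the sign constraint $\innerproduct{\bar{\numb}}{\numbi{i}(t)}>0$ from~\emph{(i)} selects the positive root, so $\innerproduct{\bar{\numb}}{\numbi{i}(t)} \to \sqrt{1 - 2V^{\sss{\infty}}}$ for every $i \in \Nset$.

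The main obstacle is part~\emph{(iii)}: convergence to $V^{\sss{-1}}(V^{\sss{\infty}})\cap\mathcal{C}$ is convergence to a \emph{sphere} of synchronized configurations in $\xmb$-coordinates, not a priori to a single point on that sphere. My plan is to exploit the freedom in the choice of the reference direction. By the Remark at the beginning of Section~\ref{sec:Analysis}, applied to the tighter invariant cone from~\emph{(i)}, there exist $n+1$ linearly independent directions $\bar{\bm{\nu}}_{\sss{1}},\ldots,\bar{\bm{\nu}}_{\sss{n+1}} \in \Sn{n}$ for each of which the orbit lies in $\ConeOpen(\alpha^{\sss{\star}},\bar{\bm{\nu}}_{\sss{k}})^{\sss{N}}$. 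Re-running parts~\emph{(i)} and~\emph{(ii)} with each $\bar{\bm{\nu}}_{\sss{k}}$ yields $\innerproduct{\bar{\bm{\nu}}_{\sss{k}}}{\numbi{i}(t)} \to c_{\sss{k}}$, with the limit independent of $i$ by asymptotic synchronization. Since $\{\bar{\bm{\nu}}_{\sss{k}}\}_{\sss{k=1}}^{\sss{n+1}}$ is a basis of $\Rn[n+1]$, these $n+1$ scalar limits uniquely determine a single vector $\numb^{\sss{\star}} \in \Rn[n+1]$, which belongs to $\Sn{n}$ by continuity of the norm along each $\numbi{i}$. Hence $\numbi{i}(t) \to \numb^{\sss{\star}}$ for every $i \in \Nset$, which is~\emph{(iii)}.
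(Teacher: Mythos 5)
Your proposal is correct and follows essentially the same route as the paper's proof: reduce to Theorem~\ref{thm:MainTheorem} via $\hmb_{\sss{\bar{\bm{\nu}}}}$ and Proposition~\ref{eq:ErrorDynamics} for parts \emph{(i)}--\emph{(ii)}, then use the $n+1$ linearly independent reference directions from Proposition~\ref{prop:ConeGeneration} to pin down a single limit point for part \emph{(iii)}. Your replacement of $\max_{\sss{i}}\innerproduct{\bar{\numb}}{\numbi{i}}(0)$ by $\min_{\sss{i}}\innerproduct{\bar{\numb}}{\numbi{i}}(0)$ in the definition of $\alpha$ (and $r_{\sss{0}}$) correctly fixes what is a typo in the paper, given the order reversal in Proposition~\ref{prop:ReconstructionNi}.
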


\begin{proof}
	Consider a solution $\bm{\nu}$ of~\eqref{eq:GeneralForm}, and $\xmb = \hmb_{\sss{\bar{\bm{\nu}}}}^{\sss{N}} \circ \numb$.
	%
	%
	Since $\alpha^{\sss{\star}}$ is either $\frac{\pi}{2}$ or $\frac{\pi}{4}$, then $\bm{\nu}(0)$ is within the domain of $\hmb_{\sss{\bar{\bm{\nu}}}}^{\sss{N}}$ and moreover $\bar{\mathcal{B}}(r_{\sss{0}})^{\sss{N}} \subset \mathcal{B}(1)^{\sss{N}}$ where $r_{\sss{0}} \overset{\sss{\eqref{eq:Diffeomorphism}}}{=} \max_{\sss{ i \in \Nset}} \|Q_{\sss{\bar{\numb}}}\tp \numbi{i}(0)\| \overset{\sss{\text{Prop}~\ref{prop:ReconstructionNi}}}{=}  \sqrt{1 -\max_{\sss{ i \in \Nset}} (\innerproduct{\bar{\numb}}{\numbi{i}}(0))^{\sss{2}}} < 1$ (where the latter inequality follows from the fact that $\bm{\nu}(0) \in \ConeOpen(\alpha^{\sss{\star}},\bar{\bm{\nu}})^{\sss{N}}$).
	From Proposition~\ref{eq:ErrorDynamics}, the dynamics~\eqref{eq:GeneralDynamicsFull} satisfy Theorem's~\ref{thm:MainTheorem} conditions and therefore the set $\bar{\mathcal{B}}(r_{\sss{0}})^{\sss{N}}$ is positively invariant for trajectories of $\dot{\xmb}(t) = \fmb_{\sss{\sigma(t)}}(\xmb(t))$.
	This, in turn, implies that the set $\Hmb_{\sss{\bar{\bm{\nu}}}}^{\sss{-1}}(\bar{\mathcal{B}}(r_{\sss{0}})^{\sss{N}}) = \ConeClosed(\alpha,\bar{\bm{\nu}})^{\sss{N}}$, where $\alpha = \arccos(\max_{\sss{i \in \Nset}} \innerproduct{\bar{\numb}}{\numbi{i}}(0)) \in [0,\alpha^{\sss{\star}})$, is positively invariant for trajectories of $\dot{\numb}(t) = \tilde{\fmb}_{\sss{\sigma(t)}}(\numb(t))$;
	i.e, all unit vectors are forever contained in the closed $\alpha$-cone they start on.
	This suffices to conclude part \emph{i)} in the Theorem.
	%
	%
	
	Let us now focus on part \emph{ii)} of the Theorem.
	From Proposition~\ref{eq:ErrorDynamics}, the dynamics~\eqref{eq:GeneralDynamicsFull} satisfy Theorem's~\ref{thm:MainTheorem} conditions.
	It follows from Theorem~\ref{thm:MainTheorem} that $\lim_{\sss{t \rightarrow \infty}} \xmbi{i}(t) - \xmbi{j}(t) = 0$ for all $i,j \in \Nset$, which implies that $\lim_{\sss{t \rightarrow \infty}} \numbi{i}(t) - \numbi{j}(t) = 0$, for all $i,j \in \Nset$ (see Proposition~\ref{prop:ReconstructionNi}).
	Moreover, it follows that the Lyapunov function in Theorem~\ref{thm:MainTheorem} converges to a constant, i.e., $\lim_{\sss{t \rightarrow \infty}} V(\xmb(t)) = \lim_{\sss{t \rightarrow \infty}} \max_{\sss{i \in \Nset}} \frac{1}{2}\|\xmbi{i}(t)\|^{\sss{2}} = \lim_{\sss{t \rightarrow \infty}} \frac{1}{2}\|\xmbi{1}(t)\|^{\sss{2}} = V^{\sss{\infty}}$, for some constant $0 \le V^{\sss{\infty}} \le V(0) <\frac{1}{2}$.
	From Proposition~\ref{prop:ReconstructionNi}, it follows that $\lim_{\sss{t \rightarrow \infty}} \bar{\numb} \numbi{i}(t) = \lim_{\sss{t \rightarrow \infty}} \sqrt{1 - \|\xmbi{i}(t)\|^{\sss{2}}} = \sqrt{1 - 2V^{\sss{\infty}}}$.	
	
	We now prove part \emph{iii)} of the Theorem.
	Since $\bm{\nu}(0) \in \ConeOpen(\alpha^{\sss{\star}},\bar{\bm{\nu}})^{\sss{N}}$ for some $\bar{\bm{\nu}} \in \Sn{n}$, Proposition~\ref{prop:ConeGeneration} guarantees that there exist $n+1$ linearly independent unit vectors $\{\bar{\numb}_{\sss{1}},\cdots,\bar{\numb}_{\sss{n+1}}\}$ such that $\bm{\nu}(\cdot) \in \ConeOpen(\alpha^{\sss{\star}},\bar{\bm{\nu}}_{\sss{k}})^{\sss{N}}$ for all $k\in \{1,\cdots,n+1\} $.
	From part \emph{ii)} of this Theorem, it follows that, for each $k \in \{1,\cdots,n+1\}$, there exists a constant $V_{\sss{k}}^{\sss{\infty}} < \frac{1}{2}$ such that  $\lim_{\sss{t \rightarrow \infty}}  \bar{\numb}_{\sss{k}}\tp\numbi{1}(t) = \sqrt{1 - 2 V_{\sss{k}}^{\sss{\infty}}}$.
	Thus, it follows that $\lim_{\sss{t \rightarrow \infty }} A \numbi{1}(t) = b \Leftrightarrow \lim_{\sss{t \rightarrow \infty }} \numbi{1}(t) = A^{\sss{-1}} b $, where $A\tp = [\bar{\numb}_{\sss{1}} \, \cdots \bar{\numb}_{\sss{n+1}}]$ is non-singular, since $\bar{\numb}_{\sss{1}},\cdots,\bar{\numb}_{\sss{n+1}}$ are linearly independent, and $b\tp = [\sqrt{1 - 2 V_{\sss{1}}^{\sss{\infty}}} \, \cdots \, \sqrt{1 - 2 V_{\sss{n+1}}^{\sss{\infty}}}]$. 
	Since synchronization is asymptotically reached, $\lim_{\sss{t \rightarrow \infty }} \numbi{i}(t) = A^{\sss{-1}} b $ for all $i \in \Nset$.
	%
\end{proof}
	
	\section{Simulations}
	\label{sec:Simulations}
	In this section, we present simulations that illustrate some of the results in the previous sections.

All simulations are provided for a network of six agents, i.e., $\Nset = \{1,\cdots,6\}$, whose network topology is presented in Fig.~\ref{fig:Agents6Network}. 
The neighbor sets for all agents change in time: for agent 1, $\Nset_{\sss{1}}$ alternates between $\{2\}$ and $\{2,4\}$;  for agent 2, $\Nset_{\sss{2}}$ alternates between  $\{3\}$ and $\{3,6\}$;  for agent 3, $\Nset_{\sss{3}}$ alternates between $\{4\}$ and $\{4,5\}$;  for agent 4, $\Nset_{\sss{4}}$ alternates between $\{5\}$ and $ \{5,1\}$;  for agent 5, $\Nset_{\sss{5}}$ alternates between $\{6\}$ and $\{6,3\}$;  for agent 6, $\Nset_{\sss{6}}$ alternates between $\{1\}$ and $\{1,2\}$. 
For these time-varying neighbor sets, the network graph is connected at all times.

The switching time instants for the neighbor set of each agent $i\in \Nset$ are those from the sequences $\mathcal{T}^{\sss{i}} = \{\frac{1}{i} + k i\}_{\sss{k \in \mathbb{N}}}$, which are shown in the time axes in Fig~\ref{fig:Simulations}%
\if\paperextended1 (and how to obtain the switching time instants for the complete network graph is found in Algorithm~\ref{alg:ComputeT} in Appendix~\ref{app:AppendixAverageDwellTime})\fi.
For the same initial conditions, we perform two simulations, each one with different weight functions: 
for agents whose $i\in \Nset$ is even, $g_{\sss{ij}}(\theta) = j$ for both simulations;
for agents whose $i\in \Nset$ is odd, we consider $g_{\sss{ij}}(\theta) = j (2 - \cos(\theta))$ for simulation~$1$ and $g_{\sss{ij}}(\theta) = j (1 - \cos(\theta))$ for simulation~$2$.
Notice that $g_{\sss{ij}}(0) = j >0$ for the former case, and $g_{\sss{ij}}(0) = 0$ for the latter case, which means odd agents penalize small errors differently between the two simulations.

In Figs.~\ref{subfig:Trajectories_UnitVector_fast}-\ref{subfig:Distances_UnitVector_slow}, six unit vectors are randomly initialized in an open $\frac{\pi}{2}$-cone around $(1,0,0) \in \Sn{2}$. 
In Figs.~\ref{subfig:Trajectories_UnitVector_fast} and~\ref{subfig:Trajectories_UnitVector_slow}, the trajectories of the unit vectors on the unit sphere are shown, and a visual inspection indicates convergence to a synchronized network.
In Figs.~\ref{subfig:Norms_UnitVector_fast} and~\ref{subfig:Norms_UnitVector_slow}, the function $V(\xmb) = \max_{\sss{i \in \Nset}} \frac{1}{2} \|\xmbi{i}\|^2$ -- used in Theorem~\ref{thm:Synchronization} --  is provided, and we can verify that despite being non-smooth, it is almost always decreasing; notice that $V(\xmb(\cdot))$ converges to a constant which quantifies the asymptotic angular distance between all unit vectors and $(1,0,0)$.
In Figs.~\ref{subfig:Distances_UnitVector_fast} and~\ref{subfig:Distances_UnitVector_slow}, the angular distance, i.e., $\theta(\cdot,\cdot)$ as in Definition~\ref{defn:ThetaUnitVector}, between some agents is presented, and it indicates convergence to a synchronized network.
Notice that in simulation~$1$ convergence is quicker when compared to simulation~$2$. 
This is a consequence of choosing, for simulation~$2$,  weight functions that are zero when two unit vectors are synchronized, i.e., $g_{\sss{ij}}(0) = 0$ for $i$ odd. 
This means that odd agents do not penalize the error between themselves and their neighbors \emph{as much as} when they are \emph{close}, and thus leading to a slow convergence to a synchronized network.
In turn, even agents $i \in \Nset$ tend to converge to some odd agent, when $|\Nset_{\sss{i}}(\cdot)| = 1$ (if $|\Nset_{\sss{i}}(\cdot)| = 1$ then $\Nset_{\sss{i}}(\cdot)$ is a set composed of one odd number); and tend to converge to somewhere in between their two neighbors when $|\Nset_{\sss{i}}(\cdot)| = 2$. 
This explains the oscillatory behavior for simulation~$2$ in Figs.\ref{subfig:Norms_UnitVector_slow} and~\ref{subfig:Distances_UnitVector_slow}.

In Figs.~\ref{subfig:Trajectories_RotMatrix_fast}-\ref{subfig:Distances_RotMatrix_slow}, six rotation matrices were randomly initialized such that $\theta(\Idmat, \Rmati{i}) \le \frac{\pi}{2}$ for all $i \in \Nset$. 
In Figs.~\ref{subfig:Trajectories_RotMatrix_fast} and~\ref{subfig:Trajectories_RotMatrix_slow}, the trajectories of the rotation matrices are shown on a sphere of $\pi$ radius\footnote{For each rotation matrix $\Rmati{i}$, we plot $\theta_{\sss{i}} \nmbi{i}$ where $\theta_{\sss{i}} = \theta(\Idmat,\Rmati{i}) \in [0,\pi]$ and $\nmbi{i} = \frac{1}{2\sin(\theta_{\sss{i}})} \invsk{\Rmati{i} - \Rmati{i}\tp} \in \mathcal{S}^{\sss{2}}$.}, and a visual inspection indicates convergence to a synchronized network.
In Figs.~\ref{subfig:Norms_RotMatrix_fast} and~\ref{subfig:Norms_RotMatrix_slow}, the function $V(\xmb) = \max_{\sss{i \in \Nset}} \frac{1}{2} \|\xmbi{i}\|^2$ -- used in Theorem~\ref{thm:Synchronization} --  is provided, and we can verify that despite being non-smooth, it is almost always decreasing; notice that $V(\xmb(\cdot))$ converges to a constant which quantifies the asymptotic angular distance between all rotation matrices and $\Idmat$ (the rotation matrix that  all rotation matrices start \emph{close} to).
In Figs.~\ref{subfig:Distances_RotMatrix_fast} and~\ref{subfig:Distances_RotMatrix_slow}, the angular distance, i.e., $\theta(\cdot,\cdot)$ as in Definition~\ref{defn:ThetaRotMatrix}, between some agents is presented, and it indicates convergence to a synchronized network.
Notice that in simulation~$1$ convergence is quicker when compared to simulation~$2$. 
The explanation for this behavior is the same as that provided before, and it is a consequence of choosing, for simulation~$2$,  weight functions that are zero when two rotation matrices are synchronized, i.e., $g_{\sss{ij}}(0) = 0$ for $i$ odd. 
The oscillatory behavior for simulation~$2$ in Figs.\ref{subfig:Norms_RotMatrix_slow} and~\ref{subfig:Distances_RotMatrix_slow} is also explained by the same reasoning described before.
 
\begin{figure*}
	\centering
	\begin{subfigure}[b]{0.3\textwidth}
		\includegraphics[clip=true,trim=1.7cm 0cm 0cm 0cm,width=\textwidth]
		{./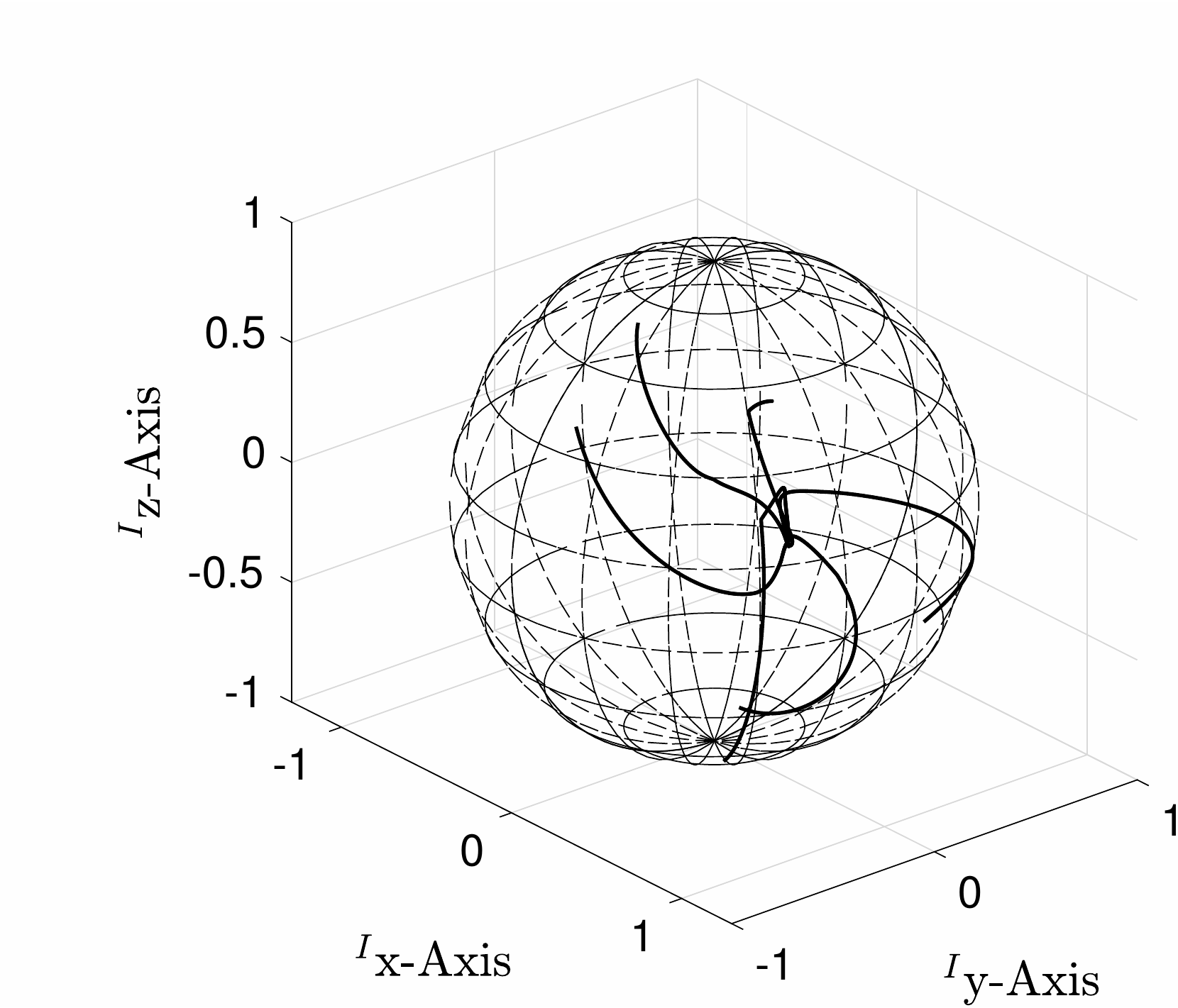}
		\caption{Simulation~$1$: Trajectories of unit vectors in unit sphere}
		\label{subfig:Trajectories_UnitVector_fast}	
	\end{subfigure}
	\begin{subfigure}[b]{0.3\textwidth}
		\includegraphics[clip=true,trim=0cm 0cm 0cm 0cm,width=\textwidth]
		{./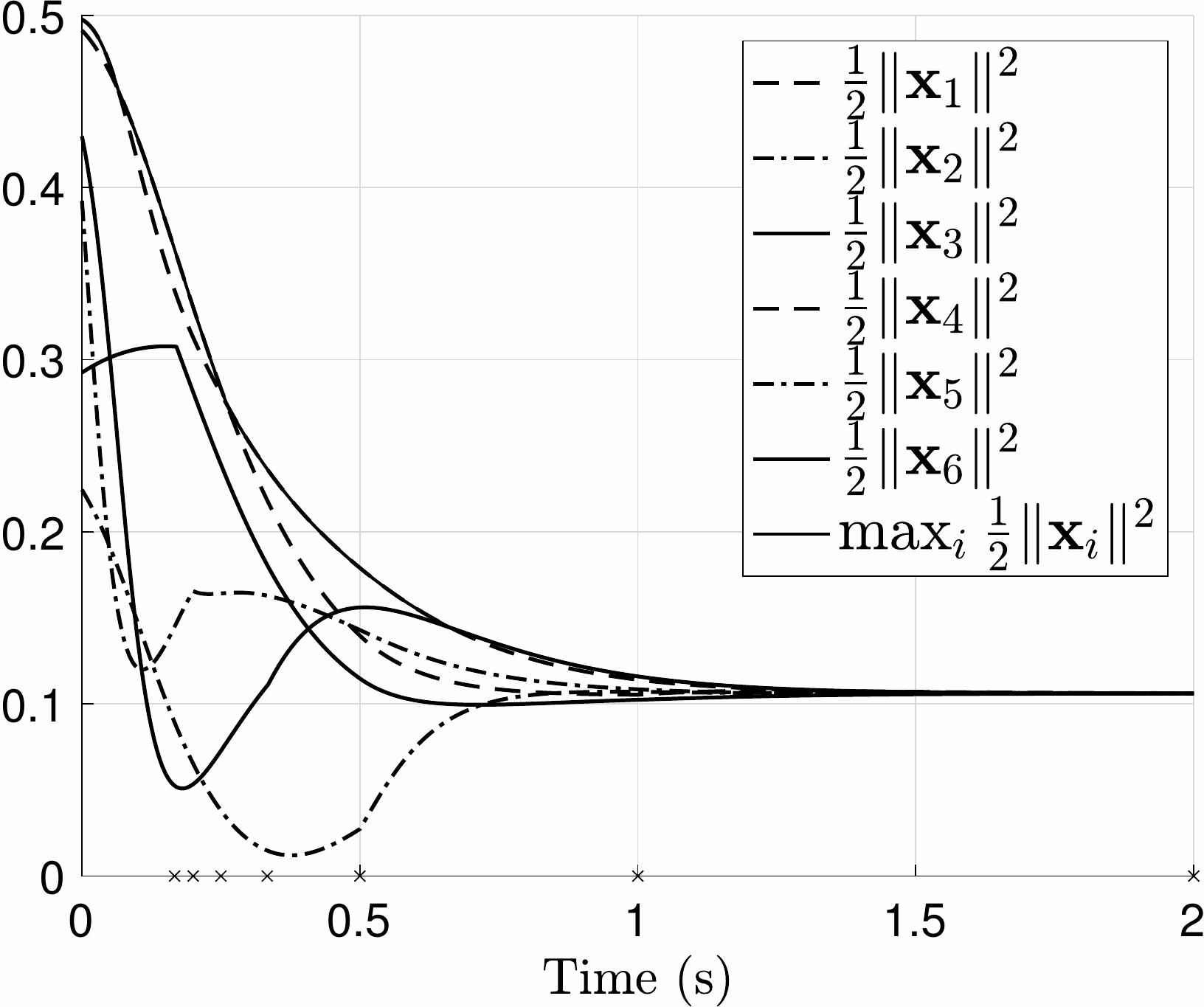}	
		\caption{Simulation~$1$: time evolution of $\frac{1}{2}\|\xmbi{i}\|^{\sss{2}}$, with $\xmbi{i}$ as in Definition~\ref{defn:NormalError}}
		\label{subfig:Norms_UnitVector_fast}
	\end{subfigure}
	\begin{subfigure}[b]{0.3\textwidth}
		\includegraphics[clip=true,trim=0cm 0cm 0cm 0cm,width=\textwidth]
		{./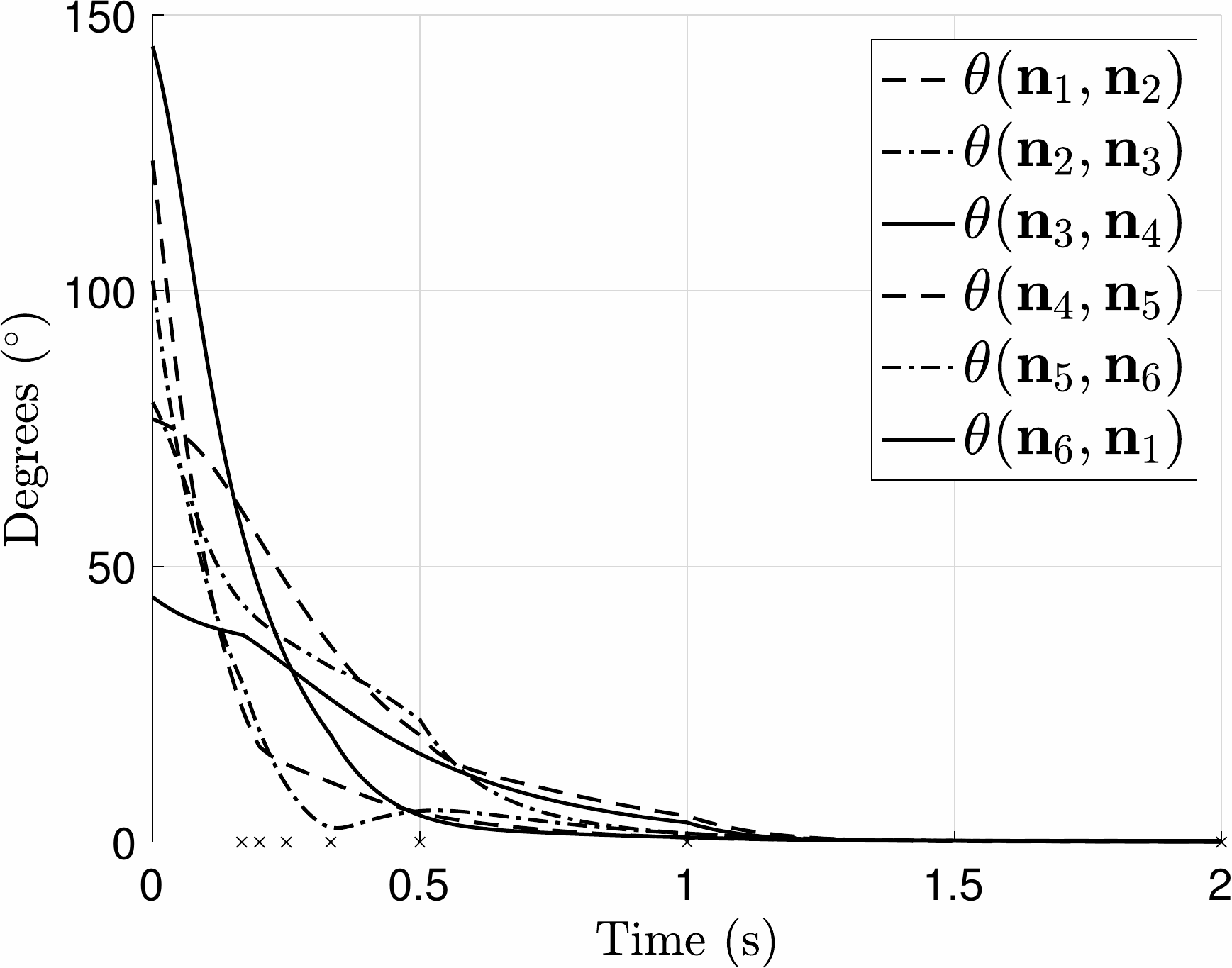}
		\caption{Simulation~$1$: angular distance between some pairs of unit vectors}
		\label{subfig:Distances_UnitVector_fast}
	\end{subfigure}
	\begin{subfigure}[b]{0.3\textwidth}
		\includegraphics[clip=true,trim=1.7cm 0cm 0cm 0cm,width=\textwidth]
		{./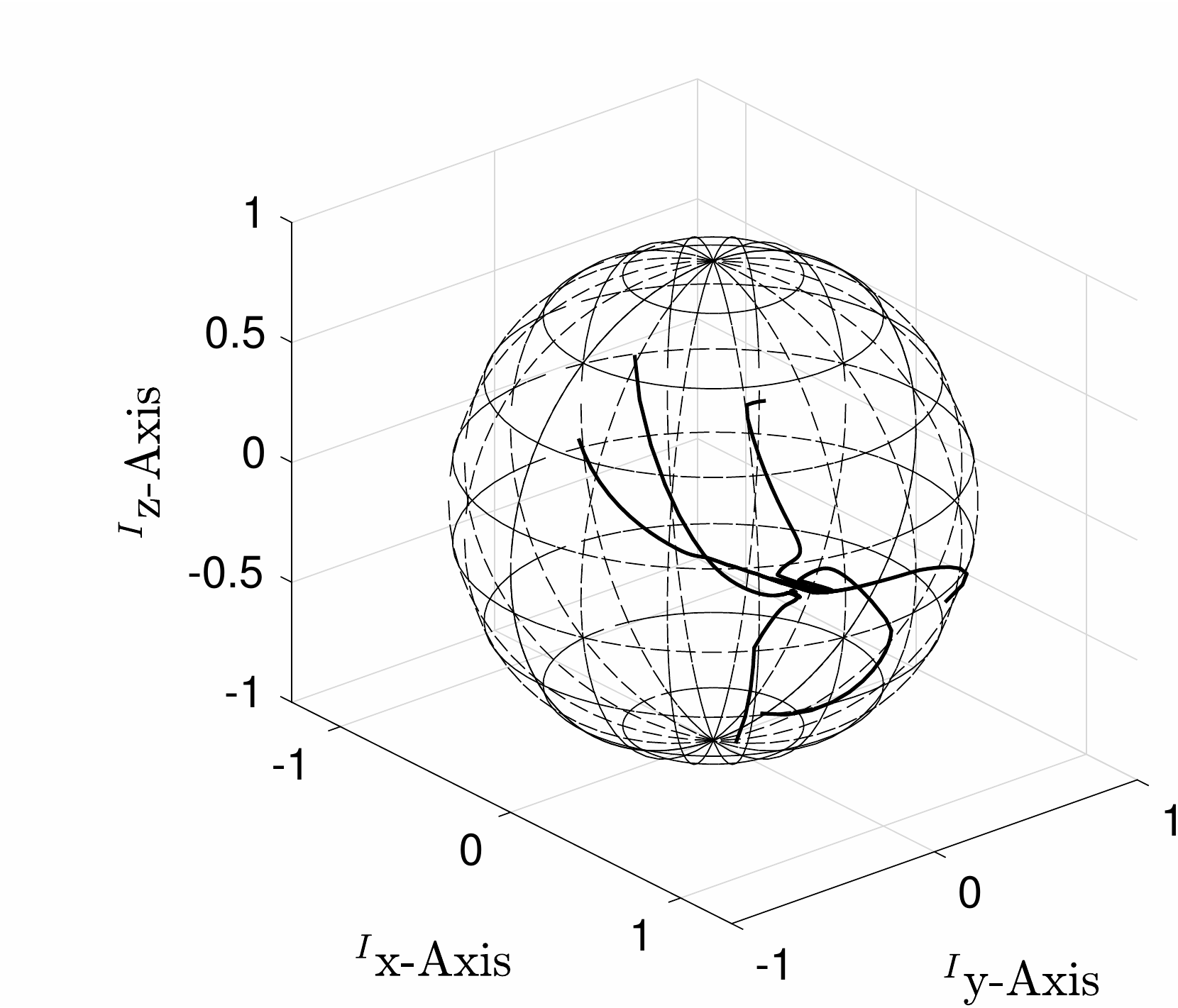}
		\caption{Simulation~$2$: Trajectories of unit vectors in unit sphere}
		\label{subfig:Trajectories_UnitVector_slow}		
	\end{subfigure}
	\begin{subfigure}[b]{0.3\textwidth}
		\includegraphics[clip=true,trim=0cm 0cm 0cm 0cm,width=\textwidth]
		{./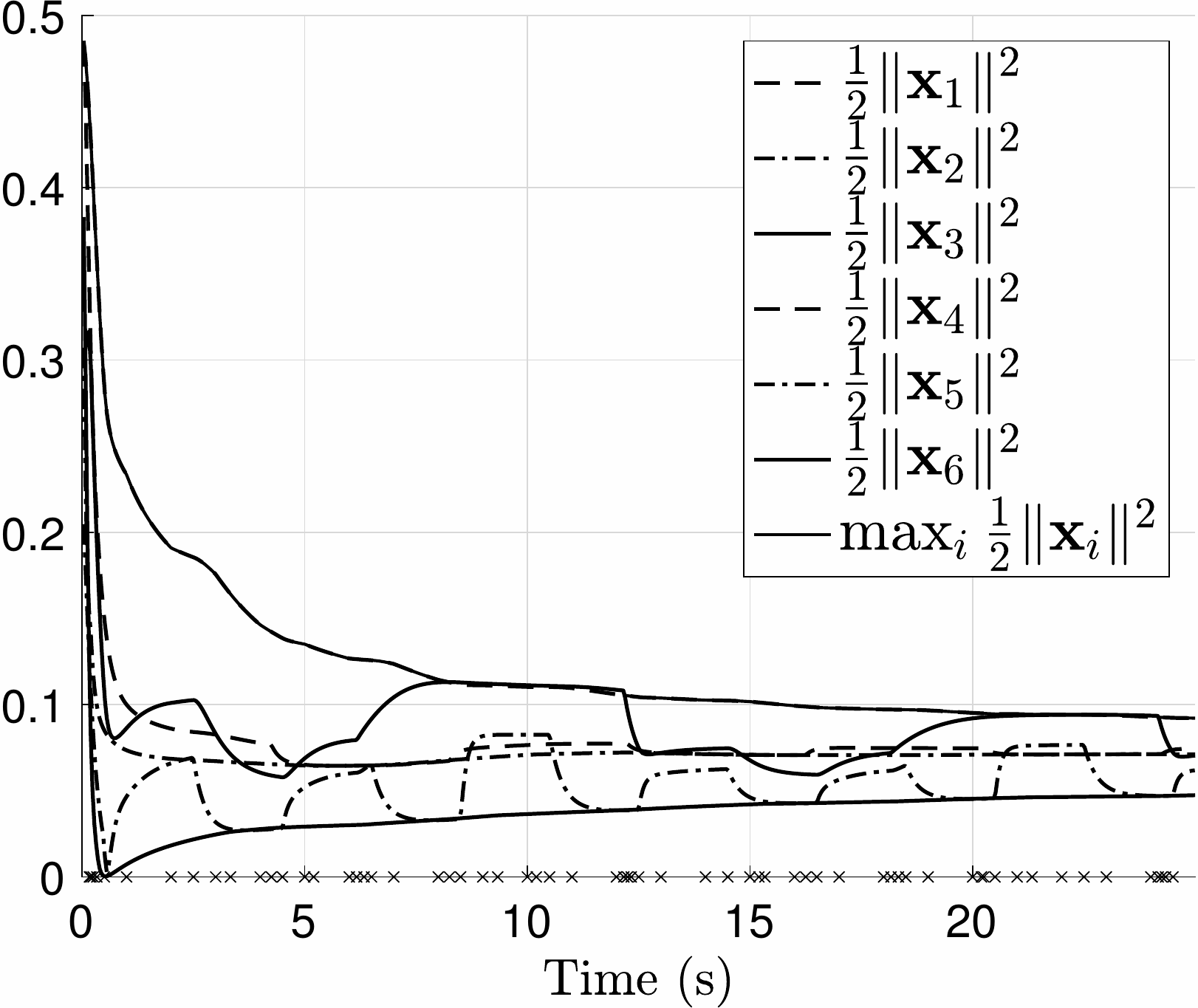}
		\caption{Simulation~$2$: time evolution of $\frac{1}{2}\|\xmbi{i}\|^{\sss{2}}$, with $\xmbi{i}$ as in Definition~\ref{defn:NormalError}}
		\label{subfig:Norms_UnitVector_slow}
	\end{subfigure}
	\begin{subfigure}[b]{0.3\textwidth}
		\includegraphics[clip=true,trim=0cm 0cm 0cm 0cm,width=\textwidth]
		{./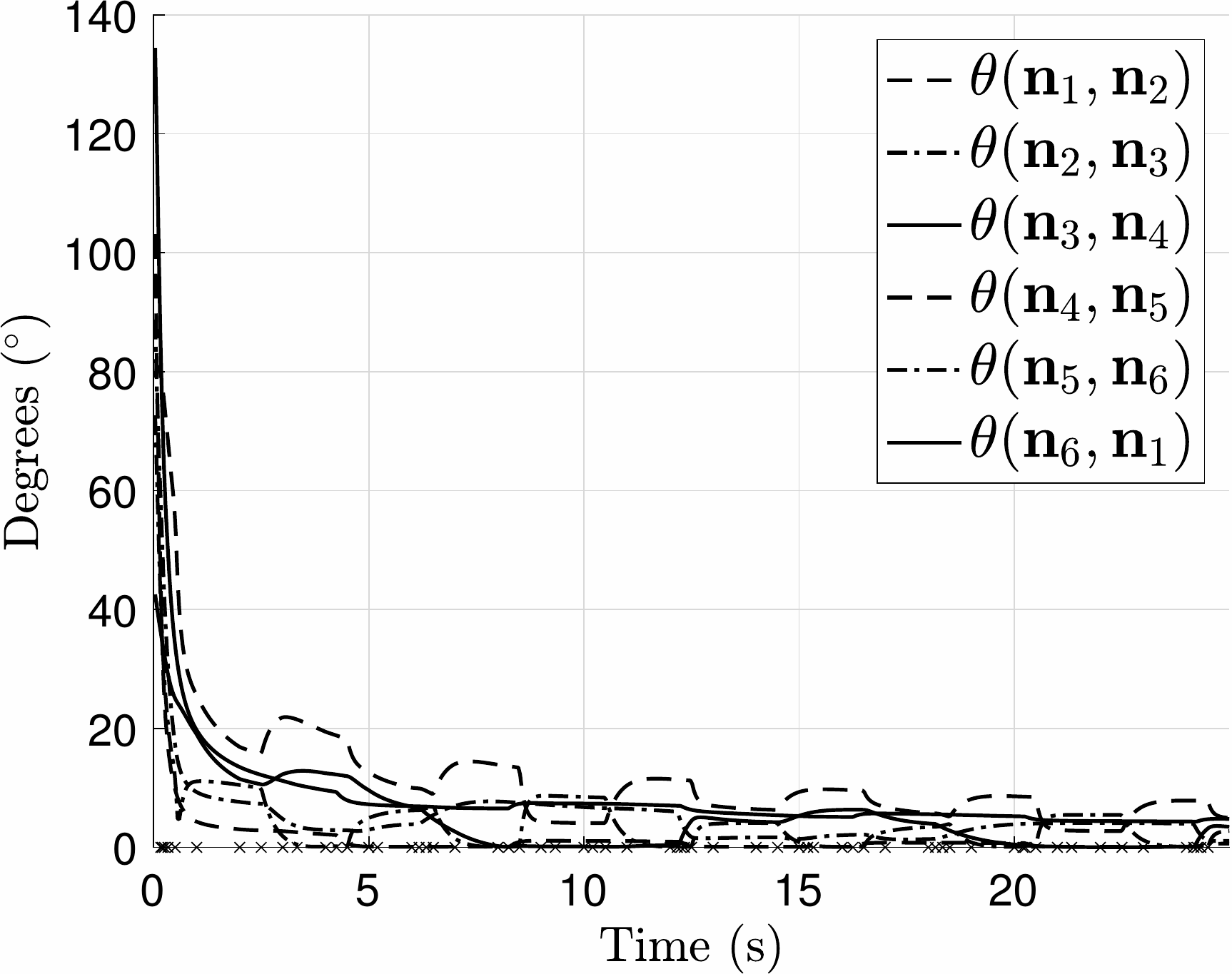}
		\caption{Simulation~$2$: angular distance between some pairs of unit vectors}
		\label{subfig:Distances_UnitVector_slow}
	\end{subfigure}		
	\begin{subfigure}[b]{0.3\textwidth}
		\includegraphics[clip=true,trim=2.5cm 0cm 0cm 2.5cm,width=\textwidth]
		{./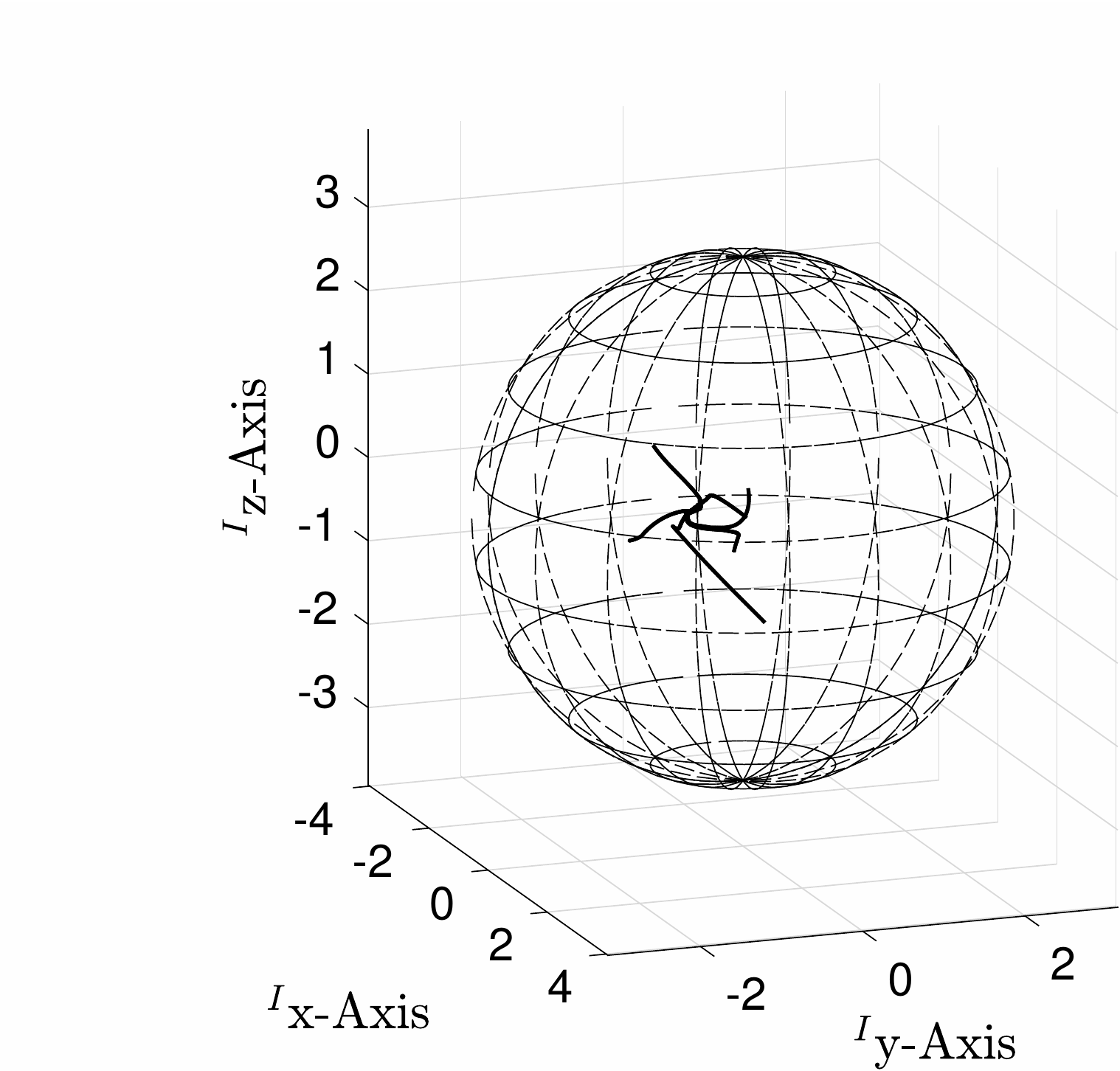}
		\caption{Simulation~$1$: Trajectories of rotation matrices in sphere of $\pi$ radius}
		\label{subfig:Trajectories_RotMatrix_fast}
	\end{subfigure}		
	\begin{subfigure}[b]{0.3\textwidth}
		\includegraphics[clip=true,trim=0cm 0cm 0cm 0cm,width=\textwidth]
		{./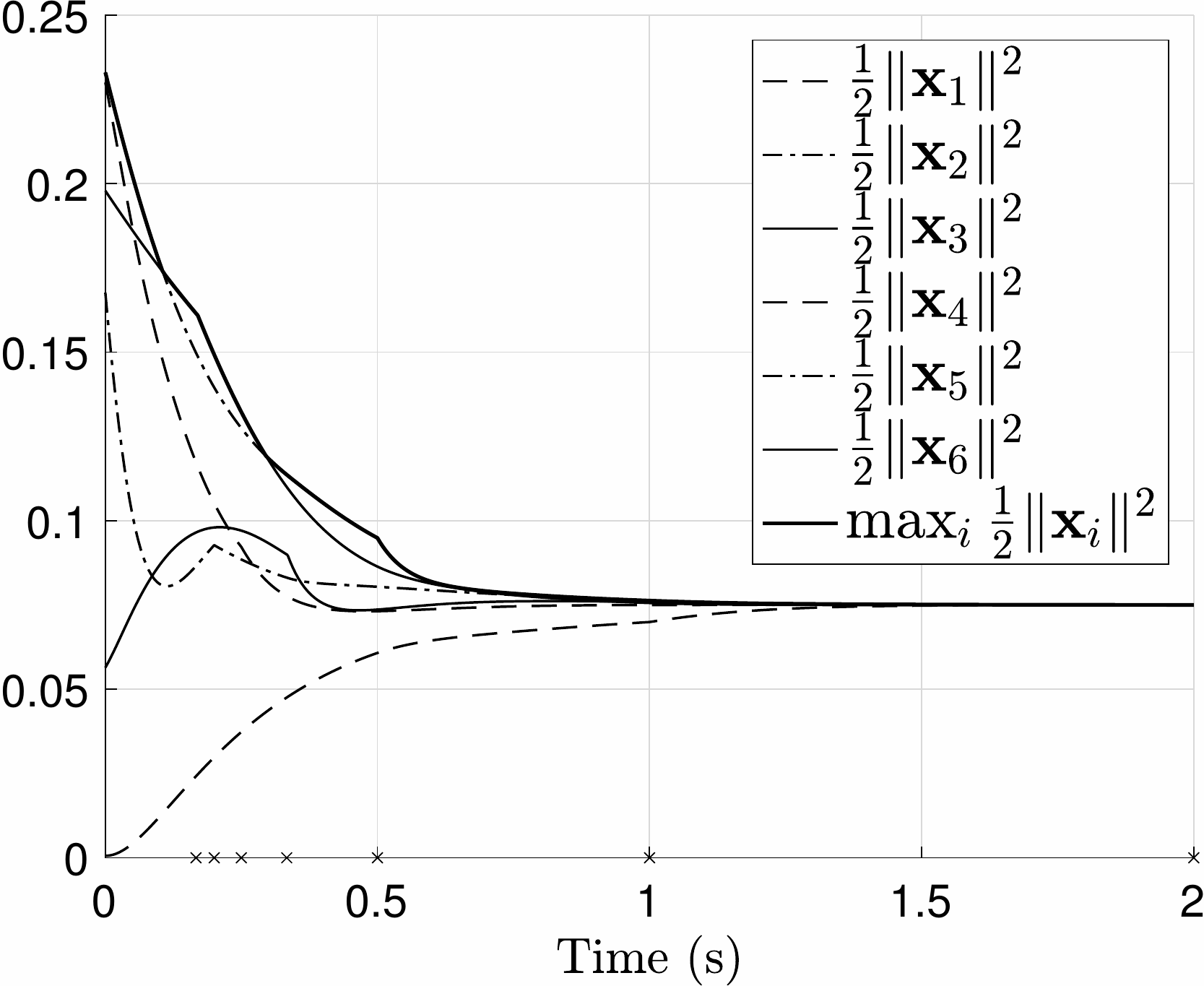}
		\caption{Simulation~$1$: time evolution of $\frac{1}{2}\|\xmbi{i}\|^{\sss{2}}$, with $\xmbi{i}$ as in Definition~\ref{defn:NormalError}}
		\label{subfig:Norms_RotMatrix_fast}
	\end{subfigure}	
	\begin{subfigure}[b]{0.3\textwidth}
		\includegraphics[clip=true,trim=0cm 0cm 0cm 0cm,width=\textwidth]
		{./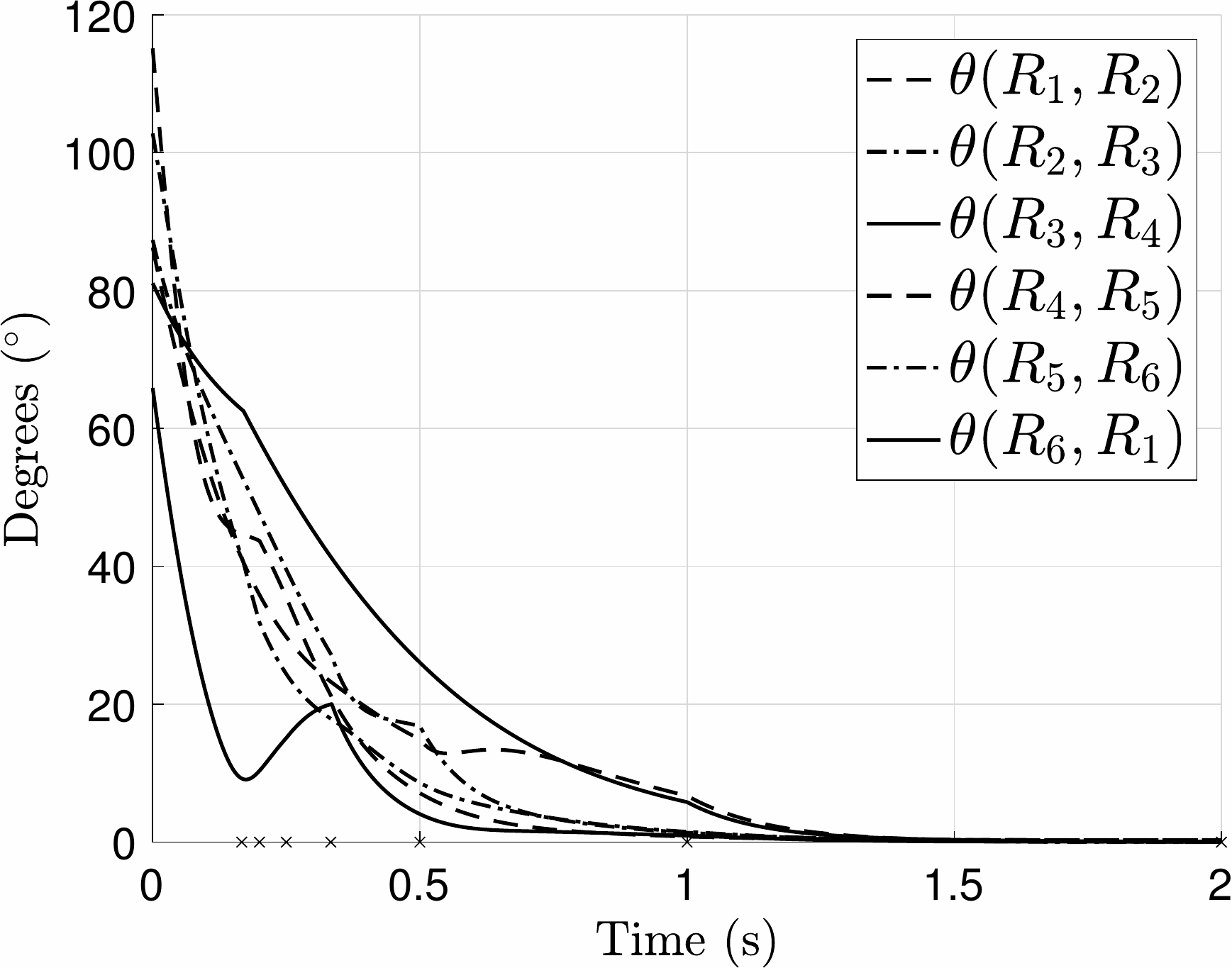}
		\caption{Simulation~$1$: angular distance between some pairs of rotation matrices}
		\label{subfig:Distances_RotMatrix_fast}
	\end{subfigure}	
	\begin{subfigure}[b]{0.3\textwidth}
		\includegraphics[clip=true,trim=2.5cm 0cm 0cm 2.5cm,width=\textwidth]
		{./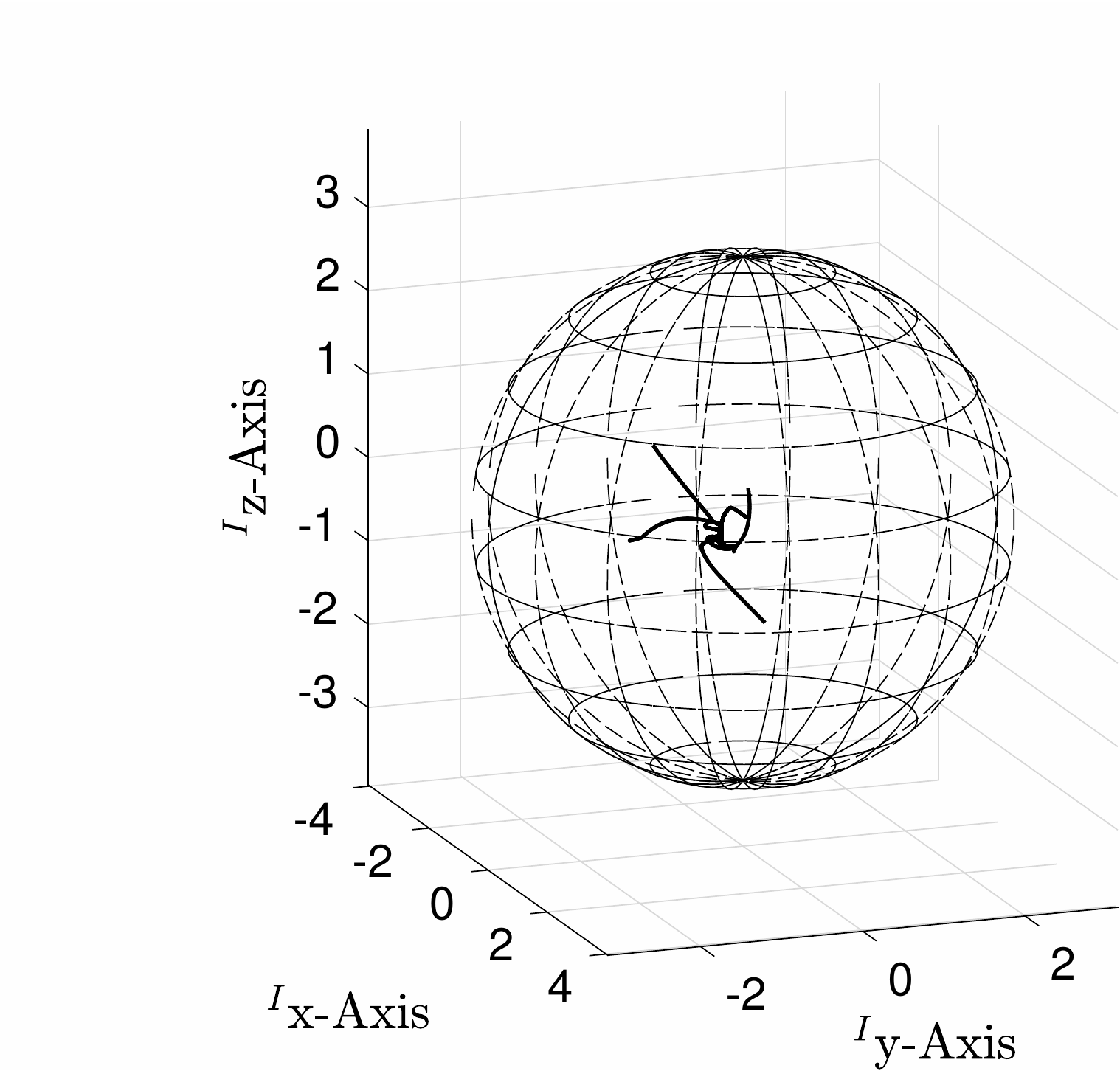}
		\caption{Simulation~$2$: Trajectories of rotation matrices in sphere of $\pi$ radius}
		\label{subfig:Trajectories_RotMatrix_slow}
	\end{subfigure}		
	\begin{subfigure}[b]{0.3\textwidth}
		\includegraphics[clip=true,trim=0cm 0cm 0cm 0cm,width=\textwidth]
		{./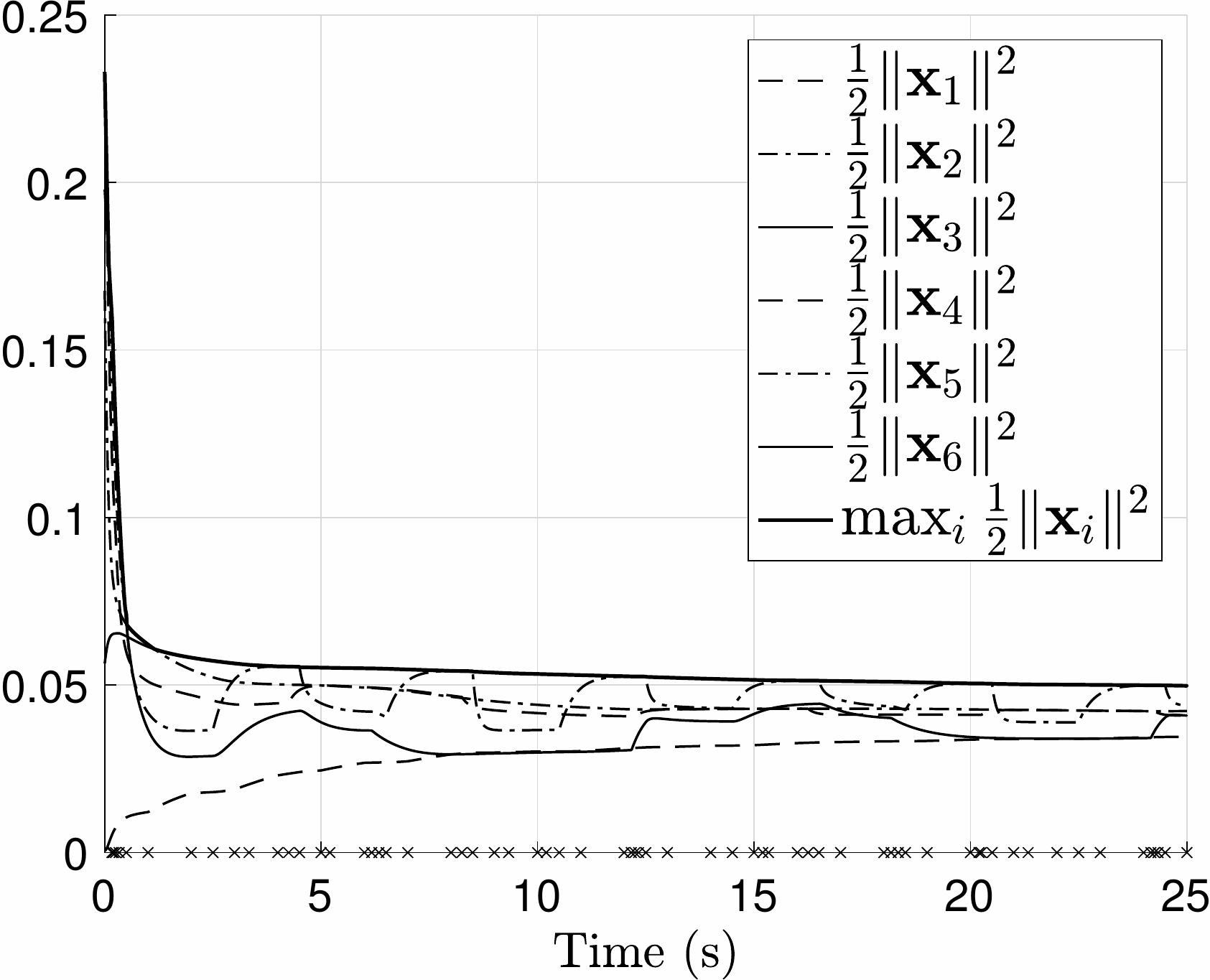}
		\caption{Simulation~$2$: time evolution of $\frac{1}{2}\|\xmbi{i}\|^{\sss{2}}$, with $\xmbi{i}$ as in Definition~\ref{defn:NormalError}}
		\label{subfig:Norms_RotMatrix_slow}
	\end{subfigure}	
	\begin{subfigure}[b]{0.3\textwidth}
		\includegraphics[clip=true,trim=0cm 0cm 0cm 0cm,width=\textwidth]
		{./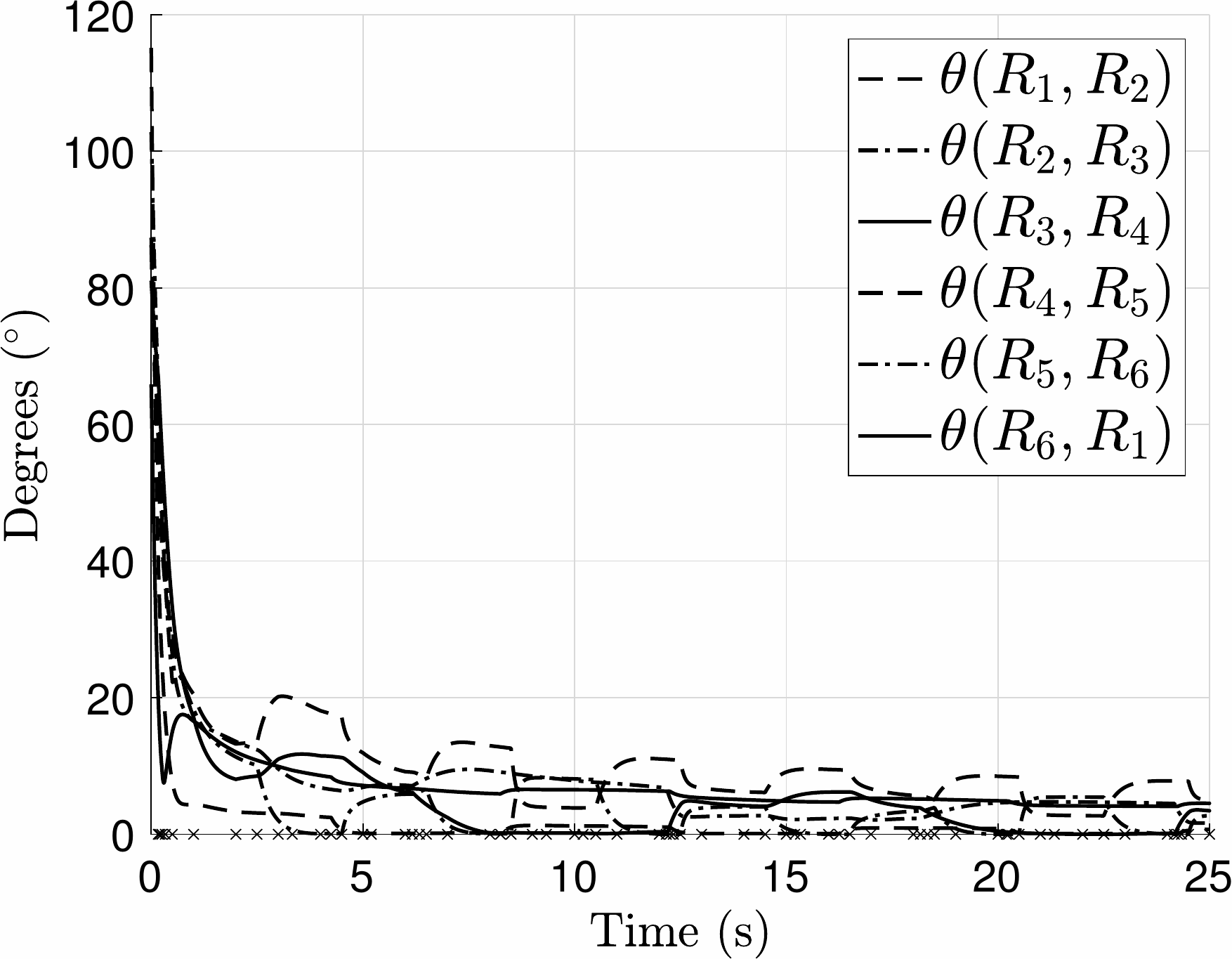}
		\caption{Simulation~$2$: angular distance between some pairs of rotation matrices}
		\label{subfig:Distances_RotMatrix_slow}
	\end{subfigure}	
	\caption{Simulations $1$ and $2$.}
	\label{fig:Simulations}
\end{figure*}

\begin{figure}
			\centering
			\begin{tikzpicture}[-,>=stealth',shorten >=1pt,auto,node distance=3cm,
			                    thick,main node/.style={circle,draw,font=\sffamily\Large\bfseries},scale=0.35,every node/.style={scale=0.45}]
			
			  \node[main node] (1) {$1$};
			  \node[main node] (2) [above right of=1] {$2$};
			  \node[main node] (3) [right of=2] {$3$};
			  \node[main node] (4) [below right of=3] {$4$}; 
			  \node[main node] (6) [below right  of=1] {$6$};			  
			  \node[main node] (5) [right of=6] {$5$};

			  \path[every node/.style={font=\sffamily\small}]
			    (1) edge [->] node {} (2)
			    (1) edge [->,dashed, transform canvas={yshift=1pt}] node {} (4)
			    (2) edge [->] node {} (3)
				(2) edge [->,dashed, transform canvas={xshift=1pt}] node {} (6)			    
			    (3) edge [->] node {} (4)
				(3) edge [->,dashed, transform canvas={xshift=1pt}] node {} (5)			    
			    (4) edge [->] node {} (5)
			    (4) edge [->,dashed, transform canvas={yshift=-1pt}] node {} (1)			    
			    (5) edge [->] node {} (6)
				(5) edge [->,dashed, transform canvas={xshift=-1pt}] node {} (3)			    			    
			    (6) edge [->] node {} (1)   	    
				(6) edge [->,dashed, transform canvas={xshift=- 1pt}] node {} (2)			    			    
			    ;
			\end{tikzpicture}
					
	\caption{Time-varying digraph with 6 agents.}
	\label{fig:Agents6Network}
\end{figure}	
	
	\section{Conclusions}
	In this paper, we studied attitude synchronization in $\Sn{2}$ and in $\SO[3]$, for a group of agents under connected network switching graphs.
We proposed switching output feedback control laws for each agent's angular velocity, which are decentralized and do not require a common orientation frame among agents.
Our main contribution lied in transforming those two problems into a common framework, where all agents dynamics are transformed into unit vectors' dynamics on a sphere of appropriate dimension.
Convergence to a synchronized network was guaranteed for a wide range of initial conditions.
Directions for future work include extending all results to agents controlled at the torque level, rather than the angular velocity level.

	\appendices
	\section{Proof of Theorem~\ref{thm:MainTheorem}}
		\label{app:AppendixProofConvergenceToC}
		\begin{proof}[of Theorem~\ref{thm:MainTheorem}]
	In what that follows, we invoke~\cite[Corollary~4.7]{goebel2008invariance}. 
	We emphasize that the latter corollary requires persistent dwell time signals and Lyapunov functions to be continuously differentiable.
	However, an average dwell time signal is necessarily a persistent dwell time signal~\cite{hespanha2004uniform}.
	On the other hand, the latter corollary can be extended to continuous Lyapunov functions by replacing~\cite[Theorem~1]{goebel2008invariance} with ~\cite[Corollary~4.4 b)]{sanfelice2007invariance} (and making used of the Clarke generalized derivative).
	Finally, \cite{goebel2008invariance} considers multiple Lyapunov functions, one for each switching signal mode, whilst in this proof we restrict ourselves to a common Lyapunov function.
	For brevity, in what follows, we denote 
	\begin{align}
		\xmb \mapsto  f(\xmb)
		:\Leftrightarrow
		(\Rn[n])^{\sss{N}} \ni \xmb := (\xmbi{1},\cdots,\xmbi{N}) 
		\mapsto 
		f(\xmb).
	\end{align}
	
	Consider then any $p \in \mathcal{P}$ and the vector field~\eqref{eq:VectorField}, and denote 
	\begin{align}
		\xmb
		\mapsto 
		v_{\sss{i,p}}(\xmb) := 
		\innerproduct{\xmbi{i}}{\fmb_{\sss{i,p}}(\xmb)}
		\in \mathbb{R},
		\forall i \in \Nset
		\label{eq:velocity}
	\end{align}	
	which are all continuous functions.
	Consider then the continuous function
	\begin{align}
		\xmb
		\mapsto 
		V(\xmb) := \max_{\sss{i\in\Nset}} \frac{1}{2}  \|\xmb_{\sss{i}}\|^{\sss{2}} 
		\in \mathbb{R}_{\sss{0}}^{\sss{+}},
		\label{eq:lyapunov function}
	\end{align}
	whose generalized gradient (in the sense of Clarke) is given by (see~\eqref{eq:SetH} and denote $\text{co}(S)$ as the convex hull of a finite point set $S \subset \Rn[m]$, for any $m \in \mathbb{N}$)
	\begin{align}
		\xmb
		\mapsto 
		\partial V(\xmb) = \text{co}(\{\embi{i} \otimes \xmbi{i}\}_{\sss{i \in \mathcal{H}(\xmb)}})
		\subseteq \Rn[nN],
		\label{eq:LyapunovGradient}
	\end{align}
	and where we emphasize that $\cup_{v \le \frac{1}{2}r^{\sss{2}}} V^{\sss{-1}}(v) = (\bar{\mathcal{B}}(r))^{\sss{N}}$ for any $r \in \Rn[]_{\sss{+}}$ (see Notation for definition of $\mathcal{B}$).
	The generalized directional derivative of $V$ along~\eqref{eq:VectorField}, for a mode $p \in \mathcal{P}$, is then given by
	\begin{align}
		\xmb
		\mapsto
		V_{\sss{p}}^{\sss{\circ}}(\xmb) 
		:=
		& 
		\max_{dV \in \partial V(\xmb) } \innerproduct{dV}{\fmb_{\sss{p}}(\xmb)}
		\\
		{\sss{\eqref{eq:LyapunovGradient}}}
		=
		&
		\max 
		\text{co}(
			\{
				\innerproduct{\xmbi{i}}{\fmb_{\sss{i,p}}(\xmb)} 
			\}_{\sss{i \in \mathcal{H}(\xmb)}}
		).
		\label{eq:generalized derivative}
	\end{align}
	Recall~\eqref{eq:velocity}, and notice that the Theorem's condition \emph{1a)} reads as $v_{\sss{i,p}}(\xmb)|_{\forall i \in H(\xmb)} \le 0$ and condition \emph{1b)} reads as  $ \exists k \in H(\xmb):  v_{\sss{i,p}}(\xmb) |_{i \in H(\xmb)}< 0$, for any $p \in \mathcal{P}$.
	As such, the generalized derivative in~\eqref{eq:generalized derivative} can be expressed equivalently as
	\begin{align}
		\xmb
		\mapsto
		V_{\sss{p}}^{\sss{\circ}}(\xmb) 
		=
		\begin{cases}
			v_{\sss{i,p}}(\xmb)|_{\sss{\text{any } i \in H(\xmb)}} < 0 &  \xmb \in A_{\sss{p}} \\
			\max 
			\text{co}(
				\{
					v_{\sss{i,p}}(\xmb)
				\}_{\sss{i \in \mathcal{H}(\xmb)}}
			)
			\le 0
			&
			\xmb \in B_{\sss{p}}
			\\
			0 & \xmb \in \mathcal{C}
		\end{cases},
		\label{eq:LyapunovGeneralizedDerivative}
	\end{align}	
	where
	\begin{align}
		\hspace{-0.4cm}
		&
		A_{\sss{p}} := \{\xmb \in \Rn[nN] : \forall k,l \in H(\xmb), v_{\sss{k,p}}(\xmb) = v_{\sss{l,p}}(\xmb) <0 \},
		\label{eq:sets C1}
		\\
		\hspace{-0.4cm}
		&
		B_{\sss{p}} := \{\xmb \in \Rn[nN] : \exists k,l \in H(\xmb), v_{\sss{k,p}}(\xmb) \ne v_{\sss{l,p}}(\xmb)\}.
		\label{eq:sets C2}
	\end{align}
	
	The function $V$ in~\eqref{eq:lyapunov function} is lower bounded and its generalized derivative along~\eqref{eq:Solution} is non-positive.
	This implies that $\lim_{t \rightarrow \infty} V(\xmb(t)) =: V^{\sss{\infty}} \in [0,V(\xmb(0))]$;
	and that $\bar{\mathcal{B}}(r_{\sss{0}})^{\sss{N}}$, with  $r_{\sss{0}} := \max_{\sss{i \in \Nset}} \| \xmbi{i}(0)\|$, is positively invariant (since $V(\xmb) \le V(\xmb(0)) \Leftrightarrow \xmb \in \cup_{\sss{r \le V(\xmb(0))}} V^{\sss{-1}}(r) = \bar{\mathcal{B}}(r_{\sss{0}})^{\sss{N}} $).
	
	Moreover, it follows from~\eqref{eq:LyapunovGeneralizedDerivative}, that
	\begin{align}
		(V_{\sss{p}}^{\sss{\circ}})^{\sss{-1}}(0)
		\subseteq
		\mathcal{C}
		\cup
		B_{\sss{p}},
	\end{align}
	and, moreover, (see Proposition~\ref{prop:closure of vdot zero} in this appendix)
	\begin{align}
		\overline{
			(V_{\sss{p}}^{\sss{\circ}})^{\sss{-1}}(0)
		}
		\subseteq
		\mathcal{C}
		\cup
		B_{\sss{p}}.
		\label{eq:closure vdot zero}
	\end{align}		
	
	We now wish to compute the largest invariant subset (in the sense of~\cite[Corollary~4.4]{goebel2008invariance}) of $V^{\sss{-1}}(V^{\sss{\infty}}) \cap \overline{(V_{\sss{p}}^{\sss{\circ}})^{\sss{-1}}(0)}$.
	For that purpose, consider a solution 
	\begin{align}
		\Rn[] \supset I \ni t \mapsto \xmb(t) \in V^{\sss{-1}}(V^{\sss{\infty}}) \cap \overline{(V_{\sss{p}}^{\sss{\circ}})^{\sss{-1}}(0)}.
		\label{eq:invariant solution proof}
	\end{align}
	Composing~\eqref{eq:lyapunov function} with ~\eqref{eq:invariant solution proof} yields a constant function $I \ni t \mapsto  V(\xmb(t)) = V^{\sss{\infty}}$, whose derivative is well defined, namely 
	\begin{align}
		\hspace{-0.2cm}
		I \ni t \mapsto  \dot{V}(\xmb(t)) = v_{\sss{i,p}}(\xmb(t))|_{\sss{\forall i \in H(\xmb(t))}} = 0.
		\label{eq:lyapunov derivative on invariant}
	\end{align}
	In fact, \eqref{eq:lyapunov derivative on invariant} implies that $ v_{\sss{k,p}}(\xmb) =  v_{\sss{l,p}}(\xmb)  $ for all $k,l \in H(\xmb)$, which is not satisfied for any $\xmb \in B_{\sss{p}}$ as defined in~\eqref{eq:sets C2}.
	This implies that $\xmb$ in~\eqref{eq:invariant solution proof} does not belong to $B_{\sss{p}}$.
	It then follows that the largest invariant subset of $V^{\sss{-1}}(V^{\sss{\infty}}) \cap \overline{(V_{\sss{p}}^{\sss{\circ}})^{\sss{-1}}(0)} \subseteq V^{\sss{-1}}(V^{\sss{\infty}}) \cap (\mathcal{C} \cup B_{\sss{p}}) $ is, in fact, a subset of $V^{\sss{-1}}(V^{\sss{\infty}}) \cap \mathcal{C}$, which is independent of the mode $p \in \mathcal{P}$.
	
	In brief, we used a Lyapunov function common to all modes, and verified that the largest invariant set for each mode is independent of the mode. 
	Based on the previous observations, we can invoke~\cite[Corollary 4.7]{goebel2008invariance}, from which it follows that~\eqref{eq:Solution} converges to $V^{\sss{-1}}(V^{\sss{\infty}}) \cap \mathcal{C} \subset \bar{\mathcal{B}}(r_{\sss{0}})^{\sss{N}} \cap \mathcal{C}$.
\end{proof}

\begin{prop}[Closure of the set where $V_{\sss{p}}^{\sss{\circ}}$ vanishes]\label{prop:closure of vdot zero}
	In order to verify that~\eqref{eq:closure vdot zero} holds, consider a convergent sequence in $(V_{\sss{p}}^{\sss{\circ}})^{\sss{-1}}(0)$, namely
	\begin{align}
		&
		\{ \xmb^{\sss{m}} \in (V_{\sss{p}}^{\sss{\circ}})^{\sss{-1}}(0) \}_{\sss{m \in \mathbb{N}}},
		\label{eq:sequence in vdot 0}
		\\
		&
		\xmb^{\sss{\infty}} := \lim_{\sss{m \rightarrow \infty}} \xmb^{\sss{m}} \in A_{\sss{p}} \cup B_{\sss{p}} \cup \mathcal{C}.
	\end{align}
	Let us prove~\eqref{eq:closure vdot zero} by assuming that $\xmb^{\sss{\infty}} \in A_{\sss{p}}$, which will lead to a contraction.
	For that purpose, denote $B_{\sss{\epsilon}} (\xmb^{\sss{\infty}}) := \{ \xmb\in \Rn[nN]: \| \xmb - \xmb^{\sss{\infty}}\| < \epsilon \}$ as an open ball of size $\epsilon >0$ around $\xmb^{\sss{\infty}}$.
	
	\emph{a)} Since the sequence~\eqref{eq:sequence in vdot 0} belongs to $(V_{\sss{p}}^{\sss{\circ}})^{\sss{-1}}(0)$, it follows that for any $m \in \mathbb{N}$, there exist $i \in H(\xmb^{\sss{m}}) $ s.t. $v_{\sss{i,p}}(\xmb^{\sss{m}}) = 0$.
	\emph{b)} Since the sequence~\eqref{eq:sequence in vdot 0} is convergent, it follows that for any $ \epsilon >0$ there exits $M \in \mathbb{N}$ s.t., for all $m \ge M$, $\xmb^{\sss{m}} \in B_{\sss{\epsilon}} (\xmb^{\sss{\infty}})$.
	\emph{c)} By definition of $H$ in~\eqref{eq:SetH}, it follows that for any $\xmb \in \Rn[nN]$ there exists $\epsilon_{\sss{1}} >0 $ s.t., for all $ \ymb \in B_{\sss{\epsilon_{\sss{1}}}}(\xmb)$, $H(\ymb) \subseteq H(\xmb)$.
	\emph{d)} Since $v_{\sss{1,p}}$, \ldots, $v_{\sss{N,p}}$ in~\eqref{eq:velocity} are continuous, it follows that for any $\xmb \in A_{\sss{p}}$ (see~\eqref{eq:sets C1}) there exists $\epsilon_{\sss{2}} \in (0,\epsilon_{\sss{1}})$ s.t., for all $\ymb \in B_{\sss{\epsilon_{\sss{1}}}}(\xmb) $, $v_{\sss{i,p}}(\ymb) < 0 \forall i \in H(\ymb) \overset{\sss{\text{\emph{c)}}}}{\subseteq} H(\xmb)$.
	\emph{e)} Combining \emph{a)} with \emph{d)}, it follows that for $\epsilon = \epsilon_{\sss{2}}$ there exists $ M \in \mathbb{N}$ s.t., for all $m \ge M$, $\xmb^{\sss{m}} \in B_{\sss{\epsilon_{\sss{2}}}} (\xmb^{\sss{\infty}}) \overset{\sss{\text{\emph{d)}}}}{\Rightarrow} v_{\sss{i,p}}(\xmb^{\sss{m}}) < 0 \forall i \in H(\xmb^{\sss{m}}) \subseteq H(\xmb^{\sss{\infty}})$.
	However, \emph{e)} contradicts \emph{a)}, which implies that $\xmb^{\sss{\infty}} \not\in A_{\sss{p}}$.
\end{prop}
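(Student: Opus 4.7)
The plan is to prove the inclusion in~\eqref{eq:closure vdot zero} by contradiction. By virtue of~\eqref{eq:LyapunovGeneralizedDerivative}, the containment $(V_{\sss{p}}^{\sss{\circ}})^{\sss{-1}}(0) \subseteq B_{\sss{p}} \cup \mathcal{C}$ already holds outright, since $V_{\sss{p}}^{\sss{\circ}}$ is strictly negative on $A_{\sss{p}}$. The task therefore reduces to showing that no sequence $\{\xmb^{\sss{m}}\}_{\sss{m \in \mathbb{N}}} \subset (V_{\sss{p}}^{\sss{\circ}})^{\sss{-1}}(0)$ can converge to a point $\xmb^{\sss{\infty}} \in A_{\sss{p}}$; I would assume the existence of such a sequence and derive a contradiction at sufficiently large $m$.

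The argument rests on two local regularity facts near $\xmb^{\sss{\infty}}$. First, the argmax set $\mathcal{H}(\cdot)$ from~\eqref{eq:SetH} is upper semicontinuous at $\xmb^{\sss{\infty}}$: any index $j \notin \mathcal{H}(\xmb^{\sss{\infty}})$ has $\|\xmb_{\sss{j}}^{\sss{\infty}}\|$ strictly below the maximum, and by continuity of the norm this strict inequality persists on a small open ball of radius $\epsilon_{\sss{1}}>0$ around $\xmb^{\sss{\infty}}$, yielding $\mathcal{H}(\ymb) \subseteq \mathcal{H}(\xmb^{\sss{\infty}})$ throughout the ball. Second, each $v_{\sss{i,p}}$ defined in~\eqref{eq:velocity} is continuous and, since $\xmb^{\sss{\infty}} \in A_{\sss{p}}$, satisfies $v_{\sss{i,p}}(\xmb^{\sss{\infty}}) < 0$ for every $i \in \mathcal{H}(\xmb^{\sss{\infty}})$; continuity then furnishes some $\epsilon_{\sss{2}} \in (0,\epsilon_{\sss{1}}]$ such that $v_{\sss{i,p}}(\ymb) < 0$ for all $\ymb$ within distance $\epsilon_{\sss{2}}$ of $\xmb^{\sss{\infty}}$ and all $i \in \mathcal{H}(\xmb^{\sss{\infty}})$.

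To close the contradiction, I would take $m$ large enough so that $\xmb^{\sss{m}}$ lies in the $\epsilon_{\sss{2}}$-ball around $\xmb^{\sss{\infty}}$, at which point $\mathcal{H}(\xmb^{\sss{m}}) \subseteq \mathcal{H}(\xmb^{\sss{\infty}})$ and therefore $v_{\sss{i,p}}(\xmb^{\sss{m}}) < 0$ for every $i \in \mathcal{H}(\xmb^{\sss{m}})$. Revisiting the case split in~\eqref{eq:LyapunovGeneralizedDerivative}, this forces $V_{\sss{p}}^{\sss{\circ}}(\xmb^{\sss{m}}) < 0$, contradicting the membership $\xmb^{\sss{m}} \in (V_{\sss{p}}^{\sss{\circ}})^{\sss{-1}}(0)$, and the inclusion~\eqref{eq:closure vdot zero} is established.

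I expect the main subtle point to be the upper semicontinuity of $\mathcal{H}(\cdot)$: as a set-valued map it is only one-sided continuous in general, since ties in the maximum of $\|\xmb_{\sss{i}}\|$ can be broken under arbitrarily small perturbations, causing $\mathcal{H}$ to \emph{shrink} but never spuriously \emph{grow}. Fortunately, it is precisely the one-sided inclusion $\mathcal{H}(\ymb) \subseteq \mathcal{H}(\xmb^{\sss{\infty}})$ that the continuity-of-velocities argument consumes, so the set-valued behaviour is exactly strong enough for the proof to go through without any extra dwell-time or regularity hypothesis on the switching signal.
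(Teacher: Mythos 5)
Your proof is correct and follows essentially the same route as the paper's: assume a sequence in $(V_{\sss{p}}^{\sss{\circ}})^{\sss{-1}}(0)$ converges to a point of $A_{\sss{p}}$, use the one-sided (upper semicontinuous) behaviour of the argmax set $\mathcal{H}$ together with continuity of the $v_{\sss{i,p}}$ to force $v_{\sss{i,p}}(\xmb^{\sss{m}})<0$ for all $i\in\mathcal{H}(\xmb^{\sss{m}})$ at large $m$, and contradict $V_{\sss{p}}^{\sss{\circ}}(\xmb^{\sss{m}})=0$. Your explicit remark that only the inclusion $\mathcal{H}(\ymb)\subseteq\mathcal{H}(\xmb^{\sss{\infty}})$ is needed is exactly the paper's step \emph{c)}, so there is nothing to add.
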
	

	\section{Proof of Proposition~\ref{thm:MainTheorem}}
		\label{app:AppendixAverageDwellTime}
		A switching signal $\sigma$ has an average dwell-time $\tau_{\sss{D}} > 0$ and a chatter bound $N_{\sss{0}} \in \mathbb{N}$ if the number of switching times of $\sigma$ in any open finite interval $(t_{\sss{1}},t_{\sss{2}}) \subset \Rn[]$ is upper bounded by $\#_{\sss{s}}(t_{\sss{1}},t_{\sss{2}}) = N_{\sss{0}} + \frac{t_{\sss{2}} - t_{\sss{1}}}{\tau_{\sss{D}}}$~\cite{mancilla2006extension}.

\begin{proof}[of Proposition~\ref{thm:MainTheorem}]
	For brevity, denote $\tau_{\sss{D}}^{\sss{\min}} = \min_{\sss{i \in \Nset}} \tau_{\sss{D}}^{\sss{i}}$, $N_{\sss{0}}^{\sss{\max}} = \max_{\sss{i \in \Nset}} N_{\sss{0}}^{\sss{i}}$, $\mathcal{T}$ as the set of switching time instants of $\sigma$ and $\mathcal{T}^{\sss{i}}$ as the set of switching time instants of $\Nset_{\sss{i}} \circ \sigma $ for each $i \in \Nset$.
	%
	%
	If $N_{\sss{0}} = N N_{\sss{0}}^{\sss{\max}}$ and $\tau_{\sss{D}} = \frac{1}{N}\tau_{\sss{D}}^{\sss{\min}}$, then $\#_{\sss{s}}(t_{\sss{1}},t_{\sss{2}}) = N \left(N_{\sss{0}}^{\sss{\max}} + \frac{t_{\sss{2}} - t_{\sss{1}}}{\tau_{\sss{D}}^{\sss{\min}}}\right)$.
	Next, we show that this function indeed upper bounds the number of switches of $\sigma$.
	Consider then any open interval $(t_{\sss{1}},t_{\sss{2}}) \subset \Rpositive$ where
	\begin{align}
		0 < t_{\sss{2}} - t_{\sss{1}} 
		\le 
		\frac{1}{N} \tau_{\sss{D}}^{\sss{\min}}
		\le
		\tau_{\sss{D}}^{\sss{i}}
		,
		\forall i \in \Nset .
		\label{eq:ConditionTimeInterval}
	\end{align} 	
	Consider a switch instant $t$ in that interval, i.e $t \in \mathcal{T}$ and $ t \in (t_{\sss{1}},t_{\sss{2}})$. 
	It follows that  $t \in \mathcal{T}^{\sss{j}}$ for some $j \in \Nset$.
	Under Assumption~\ref{assump:SwitchesAgents}, the next $N_{\sss{0}}^{\sss{j}}$ switches from $\mathcal{T}^{\sss{j}}$ may come only after a period $\tau_{\sss{D}}^{\sss{j}} \ge \tau_{\sss{D}}^{\sss{\min}} \ge \frac{1}{N} \tau_{\sss{D}}^{\sss{\min}}$;
	and, since~\eqref{eq:ConditionTimeInterval} holds, it then follows that agent $j$ can only switch $N_{\sss{0}}^{\sss{j}}$ times in $(t_{\sss{1}},t_{\sss{2}})$.
	Since this is valid for any $j \in \Nset$, it follows that a maximum of $N N_{\sss{0}}^{\sss{\max}}$ switches are possible in $(t_{\sss{1}},t_{\sss{2}})$.
	It also follows from~\eqref{eq:ConditionTimeInterval} that $N N_{\sss{0}}^{\sss{\max}} \le \#_{\sss{s}}(t_{\sss{1}},t_{\sss{2}})$, and therefore $\#_{\sss{s}}(t_{\sss{1}},t_{\sss{2}})$ upper bounds the number of switches in the interval $(t_{\sss{1}},t_{\sss{2}})$.
	
	Consider now any open interval  $(t_{\sss{1}},t_{\sss{2}}) \subset \Rpositive$ where
	\begin{align}
		\frac{k-1}{N} \tau_{\sss{D}}^{\sss{\min}} < t_{\sss{2}} - t_{\sss{1}} \le \frac{k}{N} \tau_{\sss{D}}^{\sss{\min}},
		\label{eq:ConditionTimeInterval2}
	\end{align} 
	for some $k \in \{2,3,\cdots\}$.
	The interval $(t_{\sss{1}},t_{\sss{2}})$ may be broken in $k$ intervals of equal length, i.e., given $\Delta t = \frac{t_{\sss{2}} - t_{\sss{1}}}{k}$, $(t_{\sss{1}},t_{\sss{2}}) \backslash \cup_{\sss{l \in \{2,\cdots,k\}}} \{t_{\sss{1}} + (l'-1) \Delta t\} = \cup_{\sss{l \in \{1,\cdots, k \}}} (t_{\sss{1}} + (l-1)\Delta t, t_{\sss{1}} + l\Delta t )$.
	For any $l \in \{1,\cdots,k\}$, the interval $(t_{\sss{1}} + (l-1)\Delta t,t_{\sss{1}} + l\Delta t)$ has length $\Delta t \le \frac{1}{N} \tau_{\sss{D}}^{\sss{\min}}$ where the inequality follows since~\eqref{eq:ConditionTimeInterval2} holds.
	Thus, we invoke the same reasoning as before to conclude that in each of those intervals only a maximum of $N N_{\sss{0}}^{\sss{\max}}$ switches can occur, and, as such, only a total of $k N N_{\sss{0}}^{\sss{\max}}$ switches can occur in $(t_{\sss{1}},t_{\sss{2}})$.
	It also follows from~\eqref{eq:ConditionTimeInterval2} that $k N N_{\sss{0}}^{\sss{\max}} \le \#_{\sss{s}}(t_{\sss{1}},t_{\sss{2}})$, and therefore $\#_{\sss{s}}(t_{\sss{1}},t_{\sss{2}})$ upper bounds the number of switches in the interval $(t_{\sss{1}},t_{\sss{2}})$.
\end{proof}
\if\paperextended1
\begin{figure}
	\centering
	\includegraphics[clip=true,trim=0cm 0cm 0cm 0cm,width=0.4\textwidth]
	{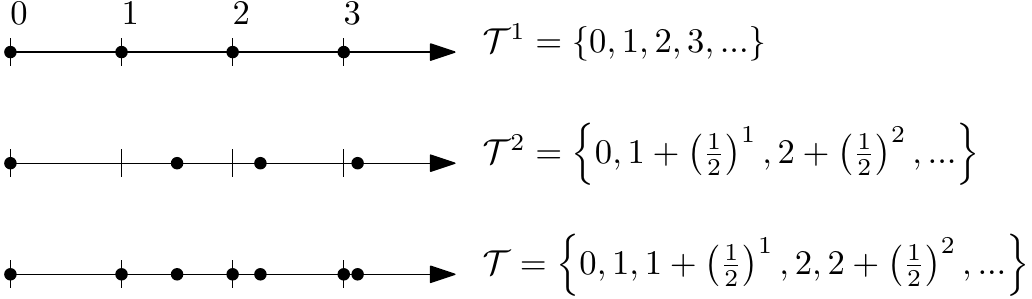}	
	\caption{Network with 2 agents, where neighbor set of agent 1 switches at time instants $\mathcal{T}^{\sss{1}}$, with dwell time $\tau_{\sss{D}}^{\sss{1}} = 1$ (and chattering bound $N_{\sss{0}}^{\sss{1}} = 1$); where neighbor set of agent 2 switches at time instants $\mathcal{T}^{\sss{2}}$, with dwell time $\tau_{\sss{D}}^{\sss{1}} = 0.75$ (and chattering bound $N_{\sss{0}}^{\sss{2}} = 1$); and the network dynamics switches at time instants $\mathcal{T}$ with no dwell time, but with average dwell time $\tau_{\sss{D}} \ge  \frac{1}{N} \min\{1,0.75\} = \frac{0.75}{2}$ and chattering bound $N_{\sss{0}} = N \max\{1,1\} =2$.}
	\label{fig:NoDwellTime}
\end{figure}
In Fig.~\ref{fig:NoDwellTime}, we illustrate a scenario where the neighbors' set of two agents switch with different dwell times, while the switching signal of the complete network switches with average dwell time.
Given the sets $\mathcal{T}^{\sss{i}}$ of switching time instants of $\Nset_{\sss{i}}(\sigma(\cdot))$ for each $i \in \Nset$, the set of switching time instants for the complete network (as done in Fig.~\ref{fig:NoDwellTime}) may be computed recursively following Algorithm~\ref{alg:ComputeT}, where we assume $t_{\sss{0}}^{\sss{i}} = 0$ for all $i \in \Nset$.
\begin{algorithm}\label{alg:ComputeT}
 \textbf{Given:} $\mathcal{T}^{\sss{i}} = \{t_{\sss{k}}^{\sss{i}}\}_{\sss{k \in \mathbb{N}}}$ for each $i \in \Nset$ \\
 \textbf{Initialization:} $\mathcal{T} = \emptyset$ \\
 $t_{\sss{k}} \leftarrow 0$\\
 $\mathcal{T} \leftarrow \mathcal{T} \cup \{t_{\sss{k}}\}$ \\
 $\mathcal{T}^{\sss{i}} \leftarrow \mathcal{T}^{\sss{i}}\backslash\{t_{\sss{k}}\}$\\
 \While{$\mathcal{T}^{\sss{i}} \ne \emptyset$ for some $i \in \Nset$ }{
  $V = \arg_{\sss{(j,t_{\sss{k}}^{\sss{j}})}} \min_{\sss{j \in \Nset}} \min_{\sss{t_{\sss{k}}^{\sss{j}} \in \mathcal{T}^{\sss{j}}}} (t_{\sss{k}}^{\sss{j}} - t_{\sss{k}})$\\
  $t_{\sss{k}} \leftarrow t_{\sss{k}}^{\s{j}}$ for any $(j,t_{\sss{k}}^{\sss{j}}) \in V$ \\
   $\mathcal{T} \leftarrow \mathcal{T} \cup \{t_{\sss{k}}\}$\\ 
  $\mathcal{T}^{\sss{j}} \leftarrow \mathcal{T}^{\sss{j}}\backslash\{t_{\sss{k}}\}$ for all $(j,t_{\sss{k}}^{\sss{j}}) \in V$
 }
 \caption{Constructing the set of switching time instants $\mathcal{T}$ for the complete network, given the sets of switching time instants of $\Nset_{\sss{i}}(\sigma(\cdot))$ of each agent $i \in \Nset$. (Remark: at each step, the operation $\min_{\sss{t_{\sss{k}}^{\sss{j}} \in \mathcal{T}^{\sss{j}}}} (t_{\sss{k}}^{\sss{j}} - t_{\sss{k}})$ is well defined since $\mathcal{T}^{\sss{j}}$ is an increasing sequence and its first element is always greater than $t_{\sss{k}}$.)}
\end{algorithm}
\fi	
	
	\section{Auxiliary results}
		\label{app:AuxiliaryResults}
		\begin{prop}
	\label{prop:ConeGeneration}
	Let $\numb\in \mathcal{S}^{\sss{n}}$ and $\alpha \in [0,\frac{\pi}{2})$.
	There exist $n+1$ linearly independent unit vectors $\numbi{1}, \cdots, \numbi{n+1} \in \mathcal{S}^{\sss{n}}$ such that $\ConeOpen(\alpha,\numb) \subset \ConeOpen(\alpha + \delta,\numbi{i}) \subset \ConeOpen(\frac{\pi}{2},\numbi{i})$ for all $i \in \{1,\cdots,n+1\}$ and for some $\delta \in (0,\frac{\pi}{2} - \alpha)$.
	%
\end{prop}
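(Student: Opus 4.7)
The plan is to construct the $n+1$ vectors $\numbi{i}$ as small angular perturbations of $\numb$ itself, all lying within some geodesic distance $\epsilon$ of $\numb$, and then to use the triangle inequality for the spherical metric $d(\umb,\vmb) := \arccos(\langle \umb,\vmb\rangle)$ on $\Sn{n}$ to derive the required cone inclusions uniformly in $i$.

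More concretely, I would first fix $\epsilon \in (0, \tfrac{\pi}{2}-\alpha)$ and extend $\numb$ to an orthonormal basis $\{\numb, \umb_2, \ldots, \umb_{n+1}\}$ of $\reals^{n+1}$; then set $\numbi{1} := \numb$ and $\numbi{k} := \cos(\epsilon)\numb + \sin(\epsilon)\umb_k$ for $k = 2, \ldots, n+1$. Each $\numbi{k}$ is manifestly a unit vector, and $\langle \numb, \numbi{k}\rangle = \cos(\epsilon)$ for $k \ge 2$, so $d(\numb, \numbi{k}) = \epsilon$ for $k \ge 2$ and $d(\numb,\numbi{1}) = 0$. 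Linear independence follows by expanding any vanishing combination $a_1 \numbi{1} + \sum_{k \ge 2} a_k \numbi{k} = \mathbf{0}$ in the orthonormal basis $\{\numb,\umb_2,\ldots,\umb_{n+1}\}$: the coefficient of each $\umb_k$ yields $\sin(\epsilon) a_k = 0$, hence $a_k = 0$ for $k \ge 2$ since $\sin(\epsilon) \neq 0$, and the coefficient of $\numb$ then gives $a_1 = 0$.

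To verify the cone inclusions with the choice $\delta := \epsilon \in (0, \tfrac{\pi}{2}-\alpha)$, I pick any $\vmb \in \ConeOpen(\alpha, \numb)$, which by Definition~\ref{def:HyperConeUnitVector2} means $d(\numb,\vmb) < \alpha$. Applying the spherical triangle inequality on $\Sn{n}$ gives $d(\numbi{k}, \vmb) \le d(\numbi{k},\numb) + d(\numb,\vmb) < \epsilon + \alpha = \alpha + \delta$, so $\vmb \in \ConeOpen(\alpha+\delta,\numbi{k})$ for every $k$. Since by construction $\alpha + \delta < \tfrac{\pi}{2}$, the second inclusion $\ConeOpen(\alpha+\delta,\numbi{k}) \subset \ConeOpen(\tfrac{\pi}{2},\numbi{k})$ is immediate from the definition of an open cone.

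No step appears particularly delicate: the only conceptual ingredient beyond elementary linear algebra is the classical triangle inequality $\arccos\langle\umb,\wmb\rangle \le \arccos\langle\umb,\vmb\rangle + \arccos\langle\vmb,\wmb\rangle$ for unit vectors, which I would simply invoke. The key design choice is pinning $\numbi{1} = \numb$ so that only $n$ genuine perturbations are needed to complete a basis, keeping all $n+1$ vectors within angular distance $\epsilon$ of $\numb$; this ensures that a single slack $\delta = \epsilon$ works uniformly for every $i \in \{1,\ldots,n+1\}$ and lies in the required range $(0,\tfrac{\pi}{2}-\alpha)$.
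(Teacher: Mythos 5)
Your proposal is correct and uses essentially the same construction as the paper: the paper likewise takes $\numb$ itself together with the $n$ perturbations $\cos(\delta)\numb+\sin(\delta)\numb^{\sss{\perp}}_{\sss{i}}$ along an orthonormal complement and the same linear-independence argument. The only cosmetic difference is that you verify the inclusion via the spherical triangle inequality for $\arccos\innerproduct{\cdot}{\cdot}$, whereas the paper expands the inner product directly to obtain $\innerproduct{\nmb}{\bm{\eta}_{\sss{i}}}>\cos(\alpha+\delta)$ --- the two computations are equivalent.
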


\begin{proof}
	It is trivial to verify that $\ConeOpen(\alpha,\numb) \subset \ConeOpen(\alpha + \delta,\numbi{i})$ for any $\delta \in (0,\frac{\pi}{2} - \alpha)$.
	
	Recall that, by definition,  $\innerproduct{\numb}{\nmb} > \cos(\alpha) > 0$ for all $\nmb \in \ConeOpen(\alpha,\numb)$.
	Consider now the unit vector $\bm{\eta}_{\sss{i}} = \cos(\delta)\numb + \sin(\delta) \numb^{\sss{\perp}}_{\sss{i}} \in \Sn{n}$, for some $\delta \in (0,\frac{\pi}{2} - \alpha)$ and  where $\numb^{\sss{\perp}}_{\sss{i}} \in \mathcal{S}^{\sss{n}}$ is an unit vector orthogonal to $\numb$. 
	Since $\delta \in (0,\frac{\pi}{2} - \alpha)$, it follows that $\cos(\delta)>0$.
	Then $\innerproduct{\nmb}{\bm{\eta}_{\sss{i}}} = \cos(\delta)\innerproduct{\nmb}{\numb} + \sin(\delta)  \innerproduct{\nmb}{\numb^{\sss{\perp}}_{\sss{i}}} > \cos(\delta)\cos(\alpha) - \sin(\delta) \sin(\alpha) = \cos(\alpha + \delta) > \cos(\alpha)$, for all $\nmb\in \ConeOpen(\alpha,\numb)$. 
	Since there are $n$ unit vectors $\numb^{\sss{\perp}}_{\sss{1}}, \cdots, \numb^{\sss{\perp}}_{\sss{n}}$ orthogonal to $\numb$, it follows that we can find $n$ linearly independent unit vectors $\bm{\eta}_{\sss{1}}, \cdots,  \bm{\eta}_{\sss{n}}$ such that $\ConeOpen(\alpha,\numb) \subset \ConeOpen(\alpha + \delta,\bm{\eta}_{\sss{i}})$ for all $i \in \{1,\cdots,n\}$.
	Moreover, $\{\numb,\bm{\eta}_{\sss{1}}, \cdots,  \bm{\eta}_{\sss{n}}\}$ are $n+1$ linearly independent unit vectors.
\end{proof}
Figure~\ref{fig:MultipleCones} illustrates the result in Proposition~\ref{prop:ConeGeneration} for $n = 2$.
From Proposition~\ref{prop:ConeGeneration} it follows that if a group of unit vectors (in $\Rn[n+1]$) is contained in a closed $\alpha$-cone for  some $\alpha \in [0,\frac{\pi}{2})$, then we can find $n+1$ larger (by $\delta$) closed cones that contain the same group of unit vectors;
i.e.,  given $\numb = (\numbi{1},\cdots,\numbi{N}) \in \ConeClosed(\alpha,\bar{\numb})^{\sss{N}}$ for some $\bar{\numb} \in \Sn{n}$, there exist  $n+1$ linearly independent unit vectors, $\{\bar{\numb}_{\sss{1}}, \cdots,  \bar{\numb}_{\sss{n+1}}\}$,  such that $\numb \in \ConeClosed(\alpha +\delta , \bar{\numb}_{\sss{i}})^{\sss{N}}$ for all $i \in \{1,\cdots,n+1\}$ and for some $\delta \in (0,\frac{\pi}{2} - \alpha)$.

\begin{figure}
	\centering
	\includegraphics[clip=true,trim=0cm 0cm 0cm 0cm,width=0.45\textwidth]
	{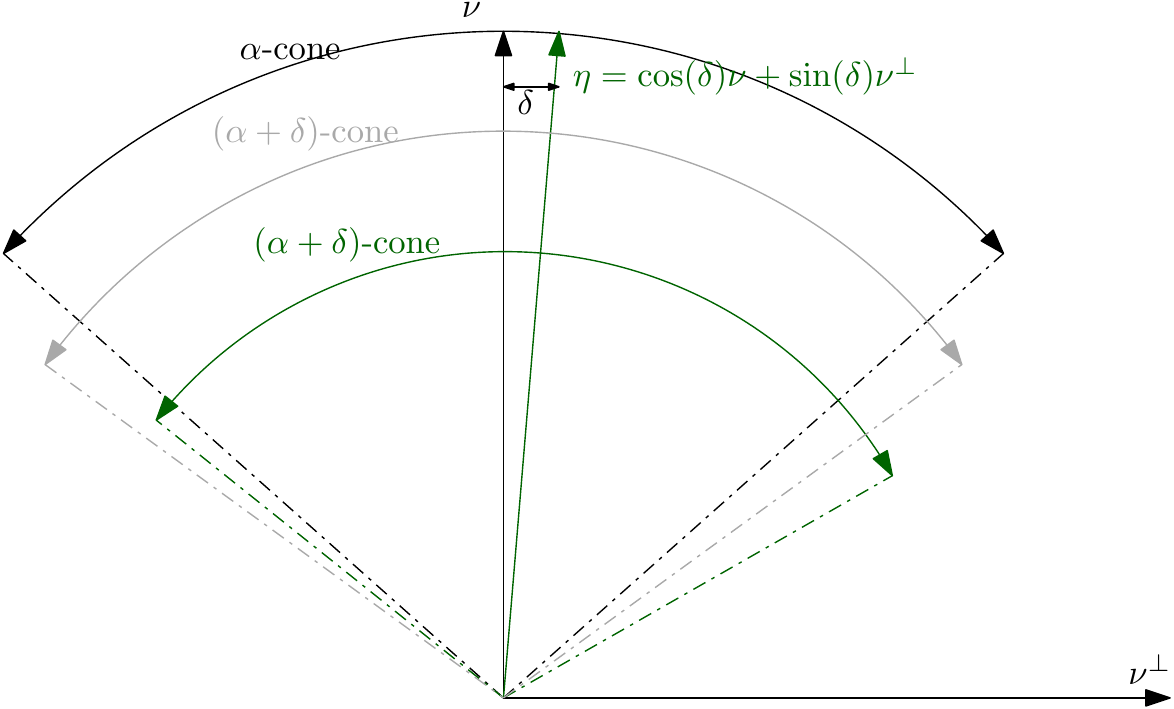}
	\caption{Illustration of result in Proposition~\ref{prop:ConeGeneration} for $n = 2$: closed $(\alpha+\delta)$-cones formed by $\numb$ and $\bm{\eta}$ contain closed $\alpha$-cone formed by $\numb$; also, $\numb$ and $\bm{\eta}$ are linearly independent.}
	\label{fig:MultipleCones}
\end{figure}

\begin{proof}[of Proposition~\ref{prop:UnitVectorsInequality}]
	\emph{Sufficiency}:
	Regarding $\emph{(a)}$, if $\numbi{2} = \numbi{1}$, then $\innerproduct{\numb}{\OP{\numbi{1}}\numbi{2}}  = \innerproduct{\bar{\numb}}{\OP{\numbi{1}}\numbi{1}}= 0 $.
	Regarding $\emph{(b)}$, if $\numbi{1} \ne \numbi{2}$ then $\innerproduct{\numbi{1}}{\numbi{2}} <1$.
	Consider then the following two cases: \emph{(1)} $\innerproduct{\numb}{\numbi{1}} = \innerproduct{\numb}{\numbi{2}}$ and \emph{(2)} $\innerproduct{\numb}{\numbi{1}} \ne \innerproduct{\numb}{\numbi{2}}$.
	For case \emph{(1)}, it follows that $\innerproduct{\numb}{\OP{\numbi{1}}\numbi{2}} = \innerproduct{\numb}{\numbi{2}} - \innerproduct{\numb}{\numbi{1}} \numbi{1}\tp\numbi{2} = \innerproduct{\numb}{\numbi{1}} (1 -\numbi{1}\tp\numbi{2}) > 0$, where the inequality applies since $\numbi{1}\tp\numbi{2}  < 1$ and since, by assumption $0<\innerproduct{\numb}{\numbi{1}}$.
	For case \emph{(2)},  the Proposition's assumption becomes 
	$0 < \innerproduct{\numb}{\numbi{1}} < \innerproduct{\numb}{\numbi{2}}$.
	Then, since $\innerproduct{\numbi{1}}{\numbi{2}} <1$, it follows that  $ 0 < \innerproduct{\numb}{\numbi{1}} < \innerproduct{\numb}{\numbi{1}} \innerproduct{\numbi{1}}{\numbi{2}}  \Rightarrow \innerproduct{\numb}{\OP{\numbi{1}}\numbi{2}} > 0 $. 
	%
	%
	\emph{Necessity}: Regarding $\emph{(a)}$, if we assume on the contrary that $\numbi{1} \ne \numbi{2}$, then it follows from before that $\innerproduct{\numb}{\OP{\numbi{1}}\numbi{2}}  =  \innerproduct{\numb}{\numbi{2}} - \innerproduct{\numb}{\numbi{1}} \numbi{1}\tp\numbi{2} > 0$, which implies that $\numbi{1} = \numbi{2}$.
	Similarly, regarding $\emph{(b)}$, if we assume on the contrary that $\numbi{1} = \numbi{2}$, then it follows from before that $\innerproduct{\numb}{\OP{\numbi{1}}\numbi{2}} = \innerproduct{\bar{\numb}}{\OP{\numbi{1}}\numbi{1}}  = 0$, which implies that $\numbi{1} \ne \numbi{2}$.
\end{proof}
	
	\section{Quaternions}
		\label{app:QuaternionsAppendix}
		This section provides some auxiliary results that are useful in Section~\ref{sec:SynchronizationInSO3}.
Recall the map~\eqref{eq:map rotation to quaternion}.
Given $\Rmat \in \tilde{\mathbb{SO}}(3)$, consider then $\qmb = \mrtoq(\Rmat)$, and for $i \in \{1,2,3,4\}$, denote the $i^{\text{th}}$ component of the quaternion as $\qmb_{\sss{i}} = \mrtoqi(\Rmat)$ (denote $d\mrtoqi$ as the derivative of the function $\mrtoqi$).
Its kinematics are given by
\begin{align}
	\dot{\qmb}_{\sss{i}}(\qmb,\bm{\omega})
	\equiv
	&
	\fmb_{\sss{qi}}(\qmb,\bm{\omega})
	\\
	:=
	& 
	\text{tr}(d\mrtoqi(\Rmat) \fmb_{\sss{\Rmat}}(\Rmat,\bm{\omega})\tp)
	|_{\Rmat = \mqtor(\qmb)}
	\\
	{\sss{\eqref{eq:RotationMatrixKinematics}}}
	=
	&
	- \text{tr}(d\mrtoqi(\Rmat) \sk{\bm{\omega}} \Rmat\tp)
	|_{\Rmat = \mqtor(\qmb)}
	\\
	=
	&
	- \text{tr}(\Rmat\tp d\mrtoqi(\Rmat) \sk{\bm{\omega}} )
	|_{\Rmat = \mqtor(\qmb)}
	\\
	=
	&
	\innerproduct{
		\invsk{X - X\tp}
	}
	{\bm{\omega}}
	|_{X = \Rmat\tp d\mrtoqi(\Rmat)}
	|_{\Rmat = \mqtor(\qmb)}
\end{align}
where we have made use of the facts that $\text{tr}(AB)=\text{tr}(BA)$ and $\text{tr}(A\sk{\amb})= \innerproduct{\amb}{\invsk{A - A\tp}}$ for any $A,B \in \Rn[3\times 3]$ and $\amb\in \Rn[3]$.
Denote $\tilde{\mathbb{S}}^{\sss{3}}$ as the image of the map $\mrtoqi$.
Collectively, the kinematics of the quaternion
\begin{align}
	\dot{\qmb}
	\equiv
	\fmb_{\sss{q}}:
	\tilde{\mathbb{S}}^{\sss{3}} \times \Rn[3]
	\ni
	(\qmb,\bm{\omega})
	\mapsto
	\fmb_{\sss{q}}(\qmb,\bm{\omega})
	\in 
	T_{\sss{\qmb}} \tilde{\mathbb{S}}^{\sss{3}} 
\end{align}
are given by
\begin{align}
	\dot{\qmb}(\qmb,\bm{\omega})
	\equiv
	&
	\fmb_{\sss{q}}(\qmb,\bm{\omega})
	:=
	(\fmb_{\sss{q1}}(\qmb,\bm{\omega}),\cdots,\fmb_{\sss{q4}}(\qmb,\bm{\omega}))
	\\
	=
	&
	\frac{1}{2} Q(\qmb) 
	[\zvec_{\sss{3\times 1}} \, \Idmat_{\sss{3}}]\tp \bm{\omega}
\end{align}
with $Q$ as defined in~\eqref{eq:q matrix quaternion}, and which can be extended from $\tilde{\mathbb{S}}^{\sss{3}} $ to $\Sn{3}$.

It can be verified that, given any $\qmb_{\sss{1}},\qmb_{\sss{2}} \in \Sn{3}$, and $\Rmat_{\sss{1}} = \mqtor(\qmb_{\sss{1}})$ and $\Rmat_{\sss{2}} = \mqtor(\qmb_{\sss{2}})$, it follows that
\begin{align}
	\Rmat_{\sss{1}}\tp\Rmat_{\sss{2}}
	=
	\mqtor(Q(\qmb_{\sss{1}})\tp\qmb_{\sss{2}}).
	\label{eq:product quaternion}
\end{align}
It can also be easily verified that, given $\qmb\in \Sn{3}$, and $\Rmat= \mqtor(\qmb)$, it follows that
\begin{align}
	&
	\theta(\Idmat_{\sss{3}},\Rmat)|_{\Rmat = \mqtor(\qmb) } = \arccos(2 \innerproduct{\qmb}{\emb_{\sss{1}}}^{\sss{2}} - 1),
	\label{eq:theta quaternion}
	\\
	&
	\invsk{\Rmat - \Rmat\tp}|_{\Rmat = \mqtor(\qmb) } 
	=
	2
	\innerproduct{\qmb}{\emb_{\sss{1}}}
	[\zvec_{\sss{3\times 1}} \, \Idmat] 
	Q(\qmb) \emb_{\sss{1}}.
	\label{eq:invsk quaternion}
\end{align}
Combining the latter with~\eqref{eq:product quaternion}, it follows that
\begin{align}
	&
	\theta(
		\Idmat_{\sss{3}},
		\Rmat_{\sss{1}}\tp\Rmat_{\sss{2}}
	)
	|_{
		\Rmat_{\sss{1}}\tp\Rmat_{\sss{2}} 
		= 
		\mqtor(Q(\qmb_{\sss{1}})\tp\qmb_{\sss{2}}) 
	}
	=
	\\
	& 
	= 
	\arccos(
	2 \innerproduct{\emb_{\sss{1}}}{Q(\qmb_{\sss{1}})\tp\qmb_{\sss{2}}}^{\sss{2}} - 1
	)
	\\
	& 
	= 
	\arccos(
		2 \innerproduct{\qmb_{\sss{1}}}{\qmb_{\sss{2}}}^{\sss{2}} - 1
	),
	\label{eq:theta quaternion product}
\end{align}
and that (note, from~\eqref{eq:q matrix quaternion}, that $Q(\qmb)\tp \emb_{\sss{1}} = \qmb$)
\begin{align}
	&
	\invsk{
		\Rmat_{\sss{1}}\tp\Rmat_{\sss{2}} - (\Rmat_{\sss{1}}\tp\Rmat_{\sss{2}})\tp
	}
	|_{
		\Rmat_{\sss{1}}\tp\Rmat_{\sss{2}} 
		= 
		\mqtor(Q(\qmb_{\sss{1}})\tp\qmb_{\sss{2}}) 
	} 
	=
	\\
	&
	=
	2
	\innerproduct{\emb_{\sss{1}}}{Q(\qmb_{\sss{1}})\tp\qmb_{\sss{2}}} 
	[\zvec_{\sss{3\times 1}} \, \Idmat]
	Q(Q(\qmb_{\sss{1}})\tp\qmb_{\sss{2}})\tp \emb_{\sss{1}}.
	\\
	&
	=
	2\innerproduct{\qmb_{\sss{1}}}{\qmb_{\sss{2}}} 
	[\zvec_{\sss{3\times 1}} \, \Idmat]
	Q(\qmb_{\sss{1}})\tp \qmb_{\sss{2}}.
	\label{eq:invsk quaternion product}
\end{align}
	
	\section{Consensus in $\Rn[n]$ casted as synchronization in $\Sn{n}$}
		\label{app:ConsensusRn}
		In this section, we consider a group of $N$ agents operating in $\Rn[n]$, for some $n \in \mathbb{N}$.
For each $i \in \Nset$, $\umb_{\sss{i}}: \Rn[]_{\sss{\ge 0}} \mapsto \Rn[n]$ is the body-framed linear velocity of agent~$i$, which can be actuated.
Consider then $\xmb = (\xmbi{1},\cdots,\xmbi{N}) : \Rn[]_{\sss{\ge 0}} \mapsto (\Rn[n])^{\sss{N}}$, where each position $\xmbi{i}: \Rn[]_{\sss{\ge 0}} \ni t \mapsto \xmbi{i}(t) \in \Rn[n]$ evolves according to
\begin{align}
	\dot{\xmb}_{\sss{i}}(t) = \Rmat_{\sss{i}} \umb_{\sss{i}}(t) =: \fmb_{\sss{i}}(\umb_{\sss{i}}(t)), 
	\label{eq:PositionKinematics}
\end{align}
where $\fmb_{\sss{i}} : \Rn[n] \ni \umb\mapsto \fmb_{\sss{i}}(\umb) := \Rmati{i} \umb \in \Rn[n]$ and where $\Rmati{i} \in \SO[3]$ represents the body orientation frame of agent $i$ w.r.t. some unknown inertial orientation frame.
In physical terms, $\Rmat_{\sss{i}} \umb_{\sss{i}}$  corresponds to the inertial-framed linear velocity of agent $i$, and we assume that agent $i$ is unaware of its orientation w.r.t. the inertial orientation frame.

If, at a time instant $t \in \Rn[]_{\sss{\ge 0}}$, agent $i \in \Nset$ is aware of the relative position between itself and another agent $j$, then $j \in \Nset_{\sss{i}}(\sigma(t))$, where $\sigma(t)$ encodes the network graph at time $t$.
At each time instant $t \in \Rpositive$ and for $\xmb \in  (\Rn[n])^{\sss{N}}$, each agent $i \in \Nset$ measures $\hmb_{\sss{i}}(t,\xmb)$, where $\hmb_{\sss{i}}(t,\cdot): (\Rn[n])^{\sss{N}} \ni \xmb \mapsto \hmb_{\sss{i}}(t,\xmb) \in (\Rn[n])^{\sss{|\Nset_{\sss{i}}(\sigma(t))|}}$ is defined as
\begin{align}
	\Scale[0.9]{
		\hmb_{\sss{i}}(t,\xmb)
		:= 
		(\hmb_{\sss{ij_{\sss{1}}}}(\xmb), \cdots, \hmb_{\sss{ij_{\sss{| \Nset_{\sss{i}}(\sigma(t))|}}}}(\xmb)) 
		\in 
		(\Rn[n])^{\sss{|\Nset_{\sss{i}}(\sigma(t))|}}	
		,
	}
	\label{eq:RnAgentMeasurement}
\end{align}
where $\{ j_{\sss{1}}, \cdots, j_{\sss{| \Nset_{\sss{i}}(\sigma(t))|}} \} \equiv \Nset_{\sss{i}}(\sigma(t))$ and where $\hmb_{\sss{ij}}: (\Rn[n])^{\sss{N}} \ni \xmb :=(\xmbi{1},\cdots,\xmbi{N}) \mapsto \hmb_{\sss{ij}}(\xmb) \in \Rn[n]$ is defined as
\begin{align}
	\hmb_{\sss{ij}}(\xmb) 
	:= 
	\Rmati{i}\tp(\xmbi{j} - \xmbi{i}) \in \Rn[n],
	\label{eq:RnAgentMeasurement2}
\end{align}
for each $j \in \Nset_{\sss{i}}(\sigma(t))$.
Thus, at each time instant $t \in \Rpositive$, an agent $i \in \Nset$ makes $|\Nset_{\sss{i}}(\sigma(t))|$ measurements, and each measurement corresponds to a distance vector between agent~$i$ and one of its neighbors, projected onto agent's $i$ orientation frame.
Notice that~\eqref{eq:RnAgentMeasurement2} does not require a common reference frame among agents, i.e., agents do not need to agree on a common origin and orientation frame.

\begin{prob}
	\label{prob:ProblemRnDynamics}
	For each time instant $t \in \Rpositive$ and for each $i \in \Nset$, design time-varying decentralized feedback laws $\umb_{\sss{i}}^{\sss{h}}(t,\cdot): (\Rn[n])^{\sss{| \Nset_{\sss{i}}(\sigma(t))|}} \mapsto \Rn[n] $, such that asymptotic consensus of $\xmb = (\xmbi{1},\cdots,\xmbi{N}) : \Rn[]_{\sss{\ge 0}} \mapsto (\Rn[n])^{\sss{N}}$ is accomplished, where $\dot{\xmb}_{\sss{i}}(t) = \fmb_{\sss{i}}(\umb_{\sss{i}}^{\sss{h}}(\hmb_{\sss{i}}(t,\xmb(t))))$ for every $i \in \Nset$ and with $\hmb_{\sss{i}}(t,\cdot)$ as defined in~\eqref{eq:RnAgentMeasurement}.
\end{prob}
For each $t \in \Rpositive$ and each agent $i \in \Nset$, consider the control law  $\umb_{\sss{i}}^{\sss{h}}(t,\cdot) : \Rn[n| \Nset_{\sss{i}}(\sigma(t))|] \ni \hmb_{\sss{i}} := (\hmb_{\sss{ij_{\sss{1}}}}, \cdots, \hmb_{\sss{ij_{\sss{| \Nset_{\sss{i}}(\sigma(t))|}}}}) \mapsto \umb_{\sss{i}}^{\sss{h}}(t,\hmb_{\sss{i}}) \in \Rn[n]$ defined as
\begin{align}
	\umb_{\sss{i}}^{\sss{h}}(t,\hmb_{\sss{i}})
	:=
	-
	\sum_{j \in \Nset_{\sss{i}}(\sigma(t))} 
	w_{\sss{ij}}(\|\hmb_{\sss{ij}}\|)
	\hmb_{\sss{ij}},
	\label{eq:DistributedControlLawPositionConsensusRn}
\end{align}
where $w_{\sss{ij}}: \Rn[]_{\sss{\ge 0}} \mapsto \Rn[]_{\sss{\ge 0}}$ is a weight function agent $i$ assigns to the position error between itself and its neighbor $j$.
This weight may be used, for example, to bound the actuation: indeed, if $\Rn[]_{\sss{\ge 0}} \ni x \mapsto w_{\sss{ij}}(x) := \frac{\sigma}{\sqrt{\sigma^2 + x^2}}$ for all $j \in \Nset$ and some $\sigma > 0$, then $\|\umb_{\sss{i}}^{\sss{h}}(\cdot,\cdot)\| \le N \sigma$ (since $\sup_{\sss{x \ge 0}} w_{\sss{ij}}(x) x = \sigma$).
Denote $\umb_{\sss{i}}^{\sss{cl}}$ as the composition of the control law~\eqref{eq:DistributedControlLawPositionConsensusRn} with the measurement function~\eqref{eq:RnAgentMeasurement}, i.e.,  $\umb_{\sss{i}}^{\sss{cl}} : \Rpositive \times (\Rn[n])^{\sss{N}} \ni (t,\xmb) \mapsto \umb_{\sss{i}}^{\sss{cl}}(t, \xmb) := \umb_{\sss{i}}^{\sss{h}}(t,\hmb_{\sss{i}}(t,\xmb)) \in \Rn[n]$.
It follows that
\begin{align}
	\umb_{\sss{i}}^{\sss{cl}}(t, \xmb)
	= &
	-
	\Rmati{i}\tp
	\sum\limits_{j \in \Nset_{\sss{i}}(\sigma(t))} 
	w_{\sss{ij}}(\|\xmbi{i} - \xmbi{j}\|)
	(\xmbi{i} - \xmbi{j}).
	\label{eq:DistributedControlLawPositionConsensusRnState}
\end{align}
By composing~\eqref{eq:PositionKinematics} with~\eqref{eq:DistributedControlLawPositionConsensusRnState}, it follows that
\begin{align}
	\xmbiDot{i}(t)
	& = 
	\sum_{\sss{j \in \Nset_{\sss{i}}(\sigma(t))}}
	w_{\sss{ij}}(\|\xmbi{j}(t) -\xmbi{i}(t)\|)
	(\xmbi{j}(t) -\xmbi{i}(t))
	\label{eq:xComponentsDynamics}
\end{align}
which is not in the form~\eqref{eq:GeneralForm2}.
In order to write the closed loop dynamics~\eqref{eq:xComponentsDynamics} as in~\eqref{eq:GeneralForm2}, we perform a transformation which is discussed next.

In order to analyze consensus in $\Rn[n]$ under the same framework as synchronization in $\Sn{2}$ and $\SO[3]$, we now perform a change of variables that serves only the purpose of analysis.
Consider then the unit vector $\embi{n +1} =(\zvec_{\sss{n}}, 1) \in \Sn{n} \in \Rn[n+1]$ and the matrix $P = [\Idmat_{\sss{n}} \, \zvec_{\sss{n}}]\tp \in \Rn[(n+1) \times n]$.
Consider also the mapping $\hmb : \Rn[n] \ni \xmb \mapsto \hmb(\xmb) \in \ConeOpen(\frac{\pi}{2},\embi{n +1}) \subset \Sn{n} \subset \Rn[n+1]$, defined as
\begin{align}
	\hmb(\xmb) := \frac{P \xmb + \embi{n +1}}{\|P \xmb + \embi{n +1}\|}, 
	\label{eq:UnitVectorPositionConsensus}
\end{align}
where $\|P \xmb + \embi{n+1}\| = \sqrt{1 +  \|\xmb\|^{\sss{2}}} \ge 1 > 0$ for all $\xmb \in \Rn[n]$.
This transformation is illustrated in Fig.\ref{fig:ConsenusRn}.
\begin{figure}
	\centering
	\includegraphics[clip=true,trim=0cm 0cm 0cm 0cm,width=0.4\textwidth]
	{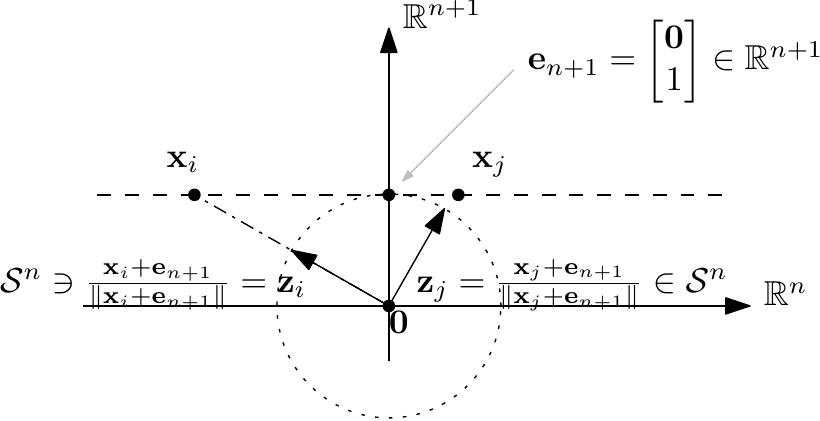}
	\caption{Casting consensus in $\Rn[n]$ as synchronization in $\Sn{n}$.}
	\label{fig:ConsenusRn}
\end{figure}
Notice that $\hmb$ is, in fact, a diffeomorphism between $\Rn[n]$ and $\ConeOpen(\frac{\pi}{2},\embi{n +1})$, with $\hmb^{\sss{-1}} : \ConeOpen(\frac{\pi}{2},\embi{n +1}) \ni \zmb \mapsto \hmb^{\sss{-1}}(\zmb)  \in \Rn[n]$ given by %
\begin{align}
	\hmb^{\sss{-1}}(\zmb) 
	= 
	P\tp
	\left(
		\frac{1}{\innerproduct{\zmb}{\embi{n+1}}} \zmbi{i} -\embi{n+1}
	\right).
	\label{eq:UnitVectorPositionConsensusInverse}
\end{align}
If follows from~\eqref{eq:UnitVectorPositionConsensus} and~\eqref{eq:UnitVectorPositionConsensusInverse} that, for any $\zmbi{i}, \zmbi{j} \in \ConeOpen(\frac{\pi}{2},\embi{n +1})$,
\begin{align}
	&
	\frac{P \xmbi{j} - P \xmbi{i}}{\|P \xmbi{i} + \embi{n+1}\|}
	|_{
		\sss{
				\xmbi{i} = \hmb^{\sss{-1}}(\zmbi{i}), \xmbi{j} = \hmb^{\sss{-1}}(\zmbi{j})
			}
	}
	\\
	& 
	\Scale[0.95]{
		=
		\left(
		\frac{\|P \xmbi{j} + \embi{n+1}\|}{\|P \xmbi{i} + \embi{n+1}\|}
		\frac{P \xmbi{j}}{\|P \xmbi{j} + \embi{n+1}\|}
		-
		\frac{P \xmbi{i}}{\|P \xmbi{i} + \embi{n+1}\|}
		\right)
		|_{
			\substack{
				\sss{
					\xmbi{i} = \hmb^{\sss{-1}}(\zmbi{i})
				}
				\\
				\sss{
					\xmbi{j} = \hmb^{\sss{-1}}(\zmbi{j})
				}
			}
		}
	}
	\\
	& =
	\zmbi{j}
	\frac{
		\innerproduct{\zmbi{i}}{\embi{n+1}}
	}
	{
		\innerproduct{\zmbi{j}}{\embi{n+1}}
	} 
	- \zmbi{i}.
	\label{eq:UnitVectorPositionConsensusAux1}
\end{align}

Let $\xmb = (\xmbi{1},\cdots,\xmbi{N}) \in (\Rn[n])^{\sss{N}}$ and $\zmb = (\zmbi{1},\cdots,\zmbi{N}) \in \ConeOpen(\frac{\pi}{2},\embi{n +1})^{\sss{N}}$ where $\zmb = \hmb^{\sss{N}}(\xmb) \Leftrightarrow \xmb = (\hmb^{\sss{-1}})^{\sss{N}}(\zmb)$.
It holds that, for any $i \in \Nset$, (note that $d\hmb(\xmbi{i}) = \frac{1}
{\|P \xmbi{i} + \embi{n+1}\|} \OP{\hmb(\xmbi{i})} P $)
\begin{align}
	&
	d \hmb(\xmbi{i})
	\fmb_{\sss{i}}(\umb_{\sss{i}}^{\sss{cl}}(t,\xmb))|_{\xmb = (\hmb^{\sss{-1}})^{\sss{N}}(\zmb)}
	=
	\\ 
	{\sss{\eqref{eq:PositionKinematics}}}
	&
	=
	\OP{\hmb(\xmbi{i})}
	\frac{P \Rmat_{\sss{i}} \umb_{\sss{i}}^{\sss{cl}}(t,\xmb) }
	{\|P \xmbi{i} + \embi{n+1}\|}
	|_{\xmb = (\hmb^{\sss{-1}})^{\sss{N}}(\zmb)}
	\\
	{\sss{\eqref{eq:xComponentsDynamics}}}
	&
	=
	\Scale[0.8]{
		\sum\limits_{\sss{j \in \Nset_{\sss{i}}(\sigma(t))}}
		\OP{\hmb(\xmbi{i})}
		w_{\sss{ij}}
		\left(
		\left\|
		P\xmbi{j} - P\xmbi{i}
		\right\|
		\right)
		\frac{P\xmbi{j} - P\xmbi{i}}
		{\|P\xmbi{i} + \embi{n+1}\|}
		|_{\xmb = (\hmb^{\sss{-1}})^{\sss{N}}(\zmb)}
	}
	\\
	{\sss{\eqref{eq:UnitVectorPositionConsensusAux1}}}
	&
	=
	\Scale[0.8]{
		\sum\limits_{\sss{j \in \Nset_{\sss{i}}(\sigma(t))}}
		\OP{\zmbi{i}}
		w_{\sss{ij}}
		\left(
		\left\|
			\frac{\zmbi{j}}{ \innerproduct{\embi{n+1}}{\zmbi{j}} } - \frac{\zmbi{i}}{ \innerproduct{\embi{n+1}}{\zmbi{i}} }
		\right\|
		\right)	
		\left(
			\frac{\innerproduct{\embi{n+1}}{\zmbi{i}}}{\innerproduct{\embi{n+1}}{\zmbi{j}}} \zmbi{j} - \zmbi{i}
		\right)
	}
	\\
	&
	=
	\Scale[0.8]{
		\sum\limits_{\sss{j \in \Nset_{\sss{i}}(\sigma(t))}}
		\frac{\innerproduct{\embi{n+1}}{\zmbi{i}}}{\innerproduct{\embi{n+1}}{\zmbi{j}}}	
		w_{\sss{ij}}
		\left(
			\left\|
				\frac{\zmbi{j}}{ \innerproduct{\embi{n+1}}{\zmbi{j}} } - \frac{\zmbi{i}}{ \innerproduct{\embi{n+1}}{\zmbi{i}} }
			\right\|
		\right)
		\OP{\zmbi{i}}\zmbi{j}
	}
	\\
	& 
	=:
	\sum\limits_{\sss{j \in \Nset_{\sss{i}}(\sigma(t))}}
	\tilde{w}_{\sss{ij}}(\zmbi{i},\zmbi{j})	
	\OP{\zmbi{i}}	
	\zmbi{j}
	\\
	{\sss{\eqref{eq:GeneralForm2}}}
	&
	=:
	\tilde{\fmb}_{\sss{i,\sigma(t)}}(\zmb)
	\label{eq:DerivativeTransformationConsensusRn} 
\end{align}
where, in the one to last step, we defined 
$
	\tilde{w}_{\sss{ij}}(\zmbi{i},\zmbi{j}) := 
	\frac{\innerproduct{\embi{n+1}}{\zmbi{i}}}{\innerproduct{\embi{n+1}}{\zmbi{j}}}	
	w_{\sss{ij}}
	\left(
	\left\|
	\frac{\zmbi{j}}{ \innerproduct{\embi{n+1}}{\zmbi{j}} } - \frac{\zmbi{i}}{ \innerproduct{\embi{n+1}}{\zmbi{i}} }
	\right\|
	\right)
$%
, which satisfies~\eqref{eq:gTildeCase} for $\alpha = \frac{\pi}{2}$ and for $\bar{\numb} = \embi{n+1} \in \Sn{n}$.
We have thus casted this problem in the form~\eqref{eq:GeneralForm2} with $\bm{\nu} \equiv \zmb \in (\Sn{n})^{\sss{N}}$.

\begin{rem}
	Unlike synchronization in $\Sn{2}$ and $\SO[3]$, the unit vectors in this section are by construction contained in a $\frac{\pi}{2}$-cone formed by the unit vector $\embi{n+1}$ (see the co-domain of $\hmb$ in~\eqref{eq:UnitVectorPositionConsensus}).
	Also, note that $\hmb$ in~\eqref{eq:UnitVectorPositionConsensus} may be defined with other vectors other than $\emb_{\sss{n+1}}$, i.e., $\Rn[n] \ni \xmb \mapsto \hmb(\xmb) = \frac{P\xmb + \tilde{\emb}}{\|P\xmb + \tilde{\emb}\|}$ with some $\tilde{\emb} \in \Sn{n}$ satisfying $\innerproduct{\embi{n+1}}{\tilde{\emb}} \ne 0$ also works as an alternative transformation.
\end{rem}
	
	\bibliographystyle{IEEEtran}
	\bibliography{bibliography}
	\clearpage	
\end{document}